\newcounter{one}
\DeclareMathOperator{\id}{id}
\newcommand{\PF}{P_{\mathrm{fail}}}
\newcommand\Mem{M_{\mathrm{g}}}
\newcommand\Mcos{M_{\mathrm{r}}}
\newcommand\Aem{A_{\mathrm{g}}}
\newcommand\Acos{A_{\mathrm{r}}}
\newcommand\Cem{C_{\mathrm{g}}}
\newcommand\Ccos{C_{\mathrm{r}}}
\newcommand\Dem{\calD_{\mathrm{g}}}
\newcommand\Dcos{\calD_{\mathrm{r}}}
\newcommand{\ex}[1]{\langle #1 \rangle}
\newcommand{\dm}[1]{\ketbra{#1}{#1}}
\DeclareMathOperator{\Tr}{Tr}
\newtheorem{theorem}{Theorem}
\newtheorem{lemma}{Lemma}
\newtheorem{corollary}{Corollary}
\newcommand\calC{{\cal C}}
\newcommand\calD{{\cal D}}
\newcommand\calE{{\cal E}}
\newcommand\calF{{\cal F}}
\newcommand\calN{{\cal N}}
\newcommand\calP{{\cal P}}
\newcommand\calR{{\cal R}}
\newcommand\calU{{\cal U}}
\newcommand\calV{{\cal V}}
\newcommand\eps{\epsilon}
\newcommand{\eq}[1]{\begin{align} #1 \end{align}}
\newcommand{\ketbra}[2]{\ket{#1}\bra{#2}}
\def\QED{\mbox{\rule[0pt]{1.5ex}{1.5ex}}}
\def\endproof{\hspace*{\fill}~\QED\par\endtrivlist\unskip}
\newenvironment{proofof}[1]{\vspace*{5mm} \par \noindent
         {\bf Proof of #1:\hspace{2mm}}}{\endproof
}
\newcommand{\bal}{\begin{equation}\begin{aligned}}
\newcommand{\eal}{\end{aligned}\end{equation}}
\begin{document}
\title{Universal tradeoff relations between resource cost and irreversibility of channels: General-resource Wigner-Araki-Yanase theorems and beyond}
\author{Hiroyasu Tajima}
\email{hiroyasu.tajima@inf.kyushu-u.ac.jp}
\affiliation{Department of Informatics, Faculty of Information Science and Electrical Engineering,
Kyushu University, 744 Motooka, Nishi-ku, Fukuoka, 819-0395, Japan}
\affiliation{JST FOREST, 4-1-8 Honcho, Kawaguchi, Saitama, 332-0012, Japan}
\author{Koji Yamaguchi}
\affiliation{Department of Informatics, Faculty of Information Science and Electrical Engineering,
Kyushu University, 744 Motooka, Nishi-ku, Fukuoka, 819-0395, Japan}
\author{Ryuji Takagi}
\affiliation{Department of Basic Science, The University of Tokyo, 3-8-1 Komaba, Meguro-ku, Tokyo 153-8902, Japan}
\author{Yui Kuramochi}
\affiliation{Department of Informatics, Faculty of Information Science and Electrical Engineering,
Kyushu University, 744 Motooka, Nishi-ku, Fukuoka, 819-0395, Japan}
\begin{abstract}
Quantum technologies offer exceptional---sometimes almost magical---speed and performance, yet every quantum process costs physical resources. Designing next-generation quantum devices, therefore, depends on solving the following question: which resources, and in what amount, are required to implement a desired quantum process? Casting the problem in the language of quantum resource theories, we prove a universal cost-irreversibility tradeoff: the lower the irreversibility of a quantum process, the greater the required resource cost for its realization. The trade-off law holds for a broad range of resources---energy, magic, asymmetry, coherence, athermality,  and others---yielding lower bounds on resource cost of any quantum channel.
Its broad scope positions this result as a foundation for deriving the following key results: (1) we show a universal relation between the energetic cost and the irreversibility for arbitrary channels, encompassing the energy-error tradeoff for any measurement or unitary gate; (2) we extend the energy-error tradeoff to free energy and work costs; (3) we extend the Wigner-Araki-Yanase theorem, which is the universal limitation on measurements under conservation laws, to a wide class of resource theories: the probability of failure in distinguishing resourceful states via a measurement is inversely proportional to its resource cost; (4) we prove that infinitely many resource-non-increasing operations in fact require an infinite implementation cost. These findings reveal a universal relationship between quantumness and irreversibility, providing a first step toward a general theory that explains when---and how---quantumness can suppress irreversibility.
\end{abstract}
 
\maketitle

\textbf{\textit{Introduction}---}
Harnessing the laws of quantum mechanics offers quantum advantages that exceed classical limits in computation~\cite{Shor,Grover}, communication~\cite{comm1,comm2}, metrology~\cite{sensing1,sensing2}, and heat engine performance~\cite{TF,FT}. These advantages, however, are not cost‑free: every quantum channel costs physical resources. As practical quantum computers, sensors, and engines inch toward reality, the central question has sharpened: which resources, and how much, are required to realise a given quantum process?

Over the past decades this question has been framed within quantum resource theories~\cite{chitambar_quantum_2019}. Individual resources---entanglement~\cite{RToE1,RToE2,RToE3}, athermality~\cite{QT0,QT1,QT2}, coherence~\cite{aberg_quantifying_2006,baumgratz_quantifying_2014,winter_operational_2016}, asymmetry~\cite{gour_resource_2008,hansen_metric_2008,Marvian_thesis,yadin_general_2016,augusiak_asymptotic_2016,Marvian_distillation,marvian_operational_2022,YT,YT2,SMT,YMST}, magic~\cite{Bravyi2005universal,Veitch_2014stab,wang_efficiently_2020}, and others---have each been formalised, and bounds on the cost of implementing various channels have been studied. However, most results remain isolated by resource type \cite{Wigner1952,Araki-Yanase1960,OzawaWAY,Korzekwa_thesis,TN,Kuramochi-Tajima,ET2023,ozawaWAY_CNOT,Karasawa_2009,TSS,TSS2,TS,TTK,Tajima_Takagi2025,Yamasaki_energy} or constrained to special cases of channels, such as unitary channels~\cite{takagi_universal_2020,Yang_2021}. Only the asymmetry framework offers bounds for implementation costs of arbitrary channels~\cite{TTK}, but the framework is restricted to the single resource:
a unifying principle that compares a broad family of resources on an equal footing and applies to every quantum channel has remained elusive.

We close this gap by proving a universal trade-off between the irreversibility of a channel and the resources required to implement it. In essence, the less irreversibility a channel exhibits for some ensemble of states, the more resources one must supply. The tradeoff relation holds simultaneously for energy, magic, asymmetry, coherence, athermality, and any other resource theory that fits several general criteria, providing strict lower bounds on cost for any quantum channel.

Because of its universality, our inequality opens the door to many applications. We give four examples. First, specializing in energy, we derive an inverse-proportional law between the energy cost of any quantum channel and its irreversibility. Consequently, every measurement, unitary gate, and any other channel that changes the mean energy and is reversible for some state ensemble requires an energy cost inversely proportional to the implementation error.

Second, we apply our results to the athermality resource and derive a lower bound for the athermality cost of any quantum channel. Our bound provides lower bounds for the non-equilibrium free energy and work costs, indicating that the requirements of these thermodynamic resources by a channel can often diverge. Notably, this divergence occurs not only for measurements and unitary gates but also for Gibbs-preserving operations. Our results complement the results in Ref. \cite{Tajima_Takagi2025} that some Gibbs-preserving operations require infinite asymmetry, by demonstrating that a Gibbs-preserving operation requires infinite costs in terms of asymmetry, nonequilibrium free energy, and energy at the same time.

Third, we extend the celebrated Wigner-Araki-Yanase (WAY) theorem \cite{Wigner1952,Araki-Yanase1960,OzawaWAY,Korzekwa_thesis,TN,Kuramochi-Tajima,ET2023}, which is the universal constraint on quantum measurements imposed by conservation laws, to general resource theory. We prove that under a wide variety of resource theories the probability of failing to distinguish resourceful states via a measurement is bounded below by the reciprocal of the resource cost of the measurement. We can also give a similar cost-error tradeoff relation for arbitrary channels which are reversible for some state ensemble. Thus, the above energy error tradeoff for measurements and other channels is merely one facet of a much broader constraint landscape.

Fourth, we reveal that many resource-nonincreasing operations require infinite implementation costs, rendering them infeasible. Such cost-diverging operations include measurement-and-prepare channels, which are often employed in resource distillation and dilution tasks~\cite{brandao_reversible_2010,Brandao2011oneshot,Liu2019oneshot,Regula2020benchmarking}. This finding shows the necessity to classify operations with explicit attention to their physical expense.

Taken together, our results provide a common yardstick for quantifying when and how quantum resources can suppress irreversibility. They introduce a universal principle linking the resource cost and performance of quantum devices such as computers, sensors, and engines.

\textbf{\textit{Framework based on resource theory}---}
In a resource theory, one specifies a set of free states that can be prepared easily and a set of free operations that can be implemented easily.  States or operations outside these sets are called resource states and resource operations.  The sole requirement for choosing the free sets is that no combination of free states and free operations can create a resource.  Subject to this constraint, the choice is flexible, allowing theories that treat entanglement, athermality, coherence, asymmetry, energy, and many other quantities as resources.
To quantify the resource content of a state, one usually adopts a resource monotone: a function that is zero on all free states and never increases under free operations.  Such a monotone serves as the resource measure.

Building on this framework, we introduce our setting.  Let $\mathfrak{S}_{F}$ be the set of free states; any state outside $\mathfrak{S}_F$ is a resource state.  Let $\mathfrak{U}_F$ be the set of free unitary operations—unitaries that can be performed freely.

Because free unitaries form a subset of the free operations, specifying a full resource theory would automatically determine $\mathfrak{U}_F$. However, for our purposes the sets $\mathfrak{S}_F$ and $\mathfrak{U}_F$ suffice. We therefore avoid the unnecessary limitation on the applications of our theory imposed by fixing all free operations. We emphasize that our results nonetheless is valid even when a complete resource theory and its full set of free operations are given.

For the same reason, we do not impose monotonicity for all free operations on a generic resource measure. Instead, we consider a resource measure $M$ which satisfies the following three properties:
\begin{description}
\item[\textbf{(i) monotonicity}] $M$ is non-increasing under any operation written as a combination of (a) adding a free state, (b) performing an operation $\id\otimes\calE$, where $\calE$ is a free unitary or its dual, and (c) a partial trace. 
\item[\textbf{(ii) additivity for product states}] For any states $\rho$ and $\sigma$, the additivity $M(\rho\otimes\sigma)=M(\rho)+M(\sigma)$ holds.
\item[\textbf{(iii) H\"{o}lder continuity}] There is a real function $c(x)$ satisfying $\lim_{x\rightarrow0}c(x)=0$ and $c(x)\le c_{\max}$, and for any system $S$ there are constants $K_S\ge0$, $0<a_S\le\infty$ and  $0<b_S<\infty$ such that $a_S+b_S>1$ and  $\epsilon_m:=\|\rho^{\otimes m}-\sigma_m\|_1$ satisfies
\eq{
|M(\rho^{\otimes m})-M(\sigma_m)|&\le m^{a_S}K_S\epsilon^{b_S}_m+c(\epsilon_m)\label{regularity}
}
for any states $\rho$ on $S$ and $\sigma_m$ on $S_1...S_m$, where each $S_k$ is a copy of $S$. Here $a_S=\infty$ means that there exists a real number $a'_S$ such that $m^{a_S}$ in \eqref{regularity} is replaced by $e^{a'_Sm}$.
\end{description}
The condition (i) replaces the usual monotonicity requirement.  Because free unitaries, the partial trace, and the composition with free states typically belong to the set of free operations any monotone defined with respect to those operations can be used whenever convenient.

As shown later, a measure $M$ satisfying (i)--(iii) exists in many resource theories, including energy, magic, asymmetry, coherence, and athermality.

Once a resource measure for quantum states has been fixed, corresponding measures for operations can be introduced. In this paper, we employ the resource-increasing power. For a given state $\rho$ we define 
\eq{
M_{\rho}(\Lambda):=M(\Lambda(\rho))-M(\rho)
}
and set 
\eq{
M(\Gamma)\coloneqq \sup_{\sigma} M_\sigma(\id_R\otimes\Gamma) ,
}
where the optimization ranges over the reference system $R$.

\textbf{\textit{Implementation cost of a quantum channel---}}
We quantify the cost of implementation of a non-free operation as the amount of resource required to implement the desired operation $\Lambda$ with a free unitary.
We measure the accuracy of implementation by the channel-purified distance~\cite{Gilchrist2005distance}
\eq{
D_F(\Lambda_1,\Lambda_2) := \max_\rho D_F\bigl(\id\otimes\Lambda_1(\rho),\id\otimes\Lambda_2(\rho)\bigr),
}
where $D_F(\rho,\sigma)= \sqrt{1-F(\rho,\sigma)^2}$ and $F(\rho,\sigma)=\Tr\sqrt{\sqrt{\rho}\sigma\sqrt{\rho}}$.
We write $\Lambda_1\sim_\epsilon \Lambda_2$ when $D_F(\Lambda_1,\Lambda_2)\le\epsilon$.

The implementation cost is defined as the minimum resource cost for implementing a channel $\Lambda$ from $A$ to $A'$ with error $\epsilon$:
\eq{
M_c^\epsilon(\Lambda) \coloneqq \min\left\{ M(\eta)| \Lambda(\cdot) \sim_\epsilon \Tr_{B'}[\calU(\cdot\otimes \eta)],\ \calU\in\mathfrak{U}_F\right\},
}
where $\eta$ is a state on an ancillary system $B$, $\calU$ is a unitary operation $\calU(\cdot):=U\cdot U^\dagger$ on $AB$, and $\Tr_{B'}$ is the partial trace over $B'$ with $AB=A'B'$. In other words, the resource cost is the minimum amount of resource attributed to an ancilla state that---together with a free unitary---realises an $\epsilon$-approximate implementation of the target channel $\Lambda$.
When we treat the cost of zero-error implementation, we often use the abbreviation $M_c(\Lambda):=M_c^{\epsilon=0}(\Lambda)$.
For resource theories in which every free operation admits a Stinespring dilation by a free unitary and free states---such as those for energy, athermality, asymmetry, and coherence (when we employ Physically incoherent Operations (PIO) \cite{PIO} as free operations)---the cost $M_c^\epsilon(\Lambda)$ coincides with the minimal resource required when one implements $\Lambda$ using any free operation together with a resourceful ancillary state.

\textbf{\textit{Key quantities---}}
To evaluate the implementation cost, we introduce three quantities.
The first quantity is the irreversibility of a channel for an ensemble constructed by a pair of test states $\Omega:=(\rho_1,\rho_2)$ which are orthogonal to each other: $F(\rho_1,\rho_2)=0$. We then write $\Omega_p := (\rho_1,\rho_2)_p$ for the test ensemble in which $\rho_1$ and $\rho_2$ occur with prior probabilities $p_1 = p$ and $p_2 = 1-p$, respectively.
Each test state $\rho_k$ changes via the channel $\Lambda$, and we try to recover the test state with a recovery map $\calR:A'\to A$.
The irreversibility of the channel with respect to the ensemble $\Omega_p$ is defined as the minimum average error \cite{TTK}
\eq{
\delta(\Lambda,\Omega_p)^2:=\min_{\calR}\sum_k p_k D_F\bigl(\rho_k,\calR\circ\Lambda(\rho_k)\bigr)^2,
}
where $\calR$ runs over CPTP maps from $A'$ to $A$.

One key property of the irreversibility measure $\delta(\calE,\Omega_p)$ is that it equals the minimum failure probability for distinguishing the ensemble $\Omega'_{p,\Lambda}:=(\Lambda(\rho_1),\Lambda(\rho_2))_p$.
Concretely, with a binary Positive-Operator Valued Measure (POVM) $\mathbb{Q}:=\{Q_k\}_{k=1,2}$, the failure probability of distinguishing the states $\rho_1$ and $\rho_2$ in $\Omega_p$ is defined as
\eq{
\PF(\Omega_{p,\Lambda},\mathbb{Q}):=\sum_{k=1,2}p_k\Tr[(1-Q_k)\Lambda(\rho_k)],
}
and we have (see the Supplementary Materials)
\eq{
\delta(\Lambda,\Omega_p)^2=\min_{\mathbb{Q}}\PF(\Omega'_{p,\Lambda},\mathbb{Q}).
}
Furthermore, the irreversibility measure $\delta(\Lambda,\Omega_p)$ unifies many existing irreversibility measures \cite{TTK}: it encompasses recovery errors in quantum error correction and Petz-map recovery, and reproduces virtually all established error-and-disturbance metrics for quantum measurements \cite{ET2023}.

Next, we quantify both the resource gain obtained by state discrimination and the channel power required to perform such discrimination. 
To achieve this, we use a two-level quantum register system $K$ to store the classical outcome of the measurement, along with a particular orthonormal basis $\{\ket{k}\}_{k=1,2}$ on that register.
Below, we refer to the tuple $(K,\{\ket{k}\}_{k=1,2})$ as $\mathscr{K}$.

Using the register $\mathscr{K}$, the resource gain via the discrimination of the states in the ensemble $\Omega_p=(\rho_1,\rho_2)_p$ is defined as
\eq{
&\Mem(\Omega_p|\mathscr{K})\nonumber\\
&:=\max_{\rho,\mathbb{P}}\left\{M_\rho(\Lambda_{\mathbb{P}})\middle|\Lambda_{\mathbb{P}}(\rho)=\sum_{k=1,2}p_k\rho_k\otimes\ket{k}\bra{k}_K\right\},
}
where the maximization ranges over  $\rho$ on $A$ and the two-valued projective measurement channel 
$\Lambda_{\mathbb{P}}(\cdot):=\sum_{k=1,2}P_k\cdot P_k\otimes\ket{k}\bra{k}_K$ from $A$ to $AK$ whose two-valued Projection Valued Measure (PVM) $\mathbb{P}:=\{P_k\}$ describes the test states $\rho_1$ and $\rho_2$ with the probability 1.
This quantity describes the maximum resource gain from distinguishing the states $\rho_1$ and $\rho_2$.

Similarly, we define the channel power required to optimally distinguish the states in $\Omega'_{p,\Lambda}=(\Lambda(\rho_1),\Lambda(\rho_2))_p$ as
\eq{
&\Mcos(\Omega'_{p,\Lambda}|\mathscr{K})\nonumber\\
&:=\min\left\{M(\Lambda_{\mathbb{Q}})\middle|\delta(\Lambda,\Omega_p)^2=\PF(\Omega'_{p,\Lambda},\mathbb{Q})\right\},
}
where the minimization ranges over the two-valued POVM measurement channel $\Lambda_{\mathbb{Q}}(\cdot):=\sum_{k=1,2}\sqrt{Q_k}\cdot \sqrt{Q_k}\otimes\ket{k}\bra{k}_K$.
This quantity represents the minimal channel power required to optimally discriminate between $\Lambda(\rho_1)$ and $\Lambda(\rho_2)$.

\textbf{\textit{Main Results}---}
Now, let us introduce our main results. For simplicity, we focus on the case of $a_S=b_S=1$ in the main text.
For other general cases, see the Supplementary Materials.
\begin{theorem}\label{Thm:RItradeoff}
Let $A$ and $A'$ be quantum systems.
For any channel $\Lambda$ from $A$ to $A'$, any test ensemble $\Omega_p$, and any two-level register $\mathscr{K}$ the following inequality holds:
\eq{
M_c(\Lambda)\ge\frac{|\Mem(\Omega_p|\mathscr{K})-\Mcos(\Omega'_{p,\Lambda}|\mathscr{K})|^2_+}{16K_A\delta(\Lambda,\Omega_p)}-c',\label{eq:RItradeoff}
}
where $|x|_+$ returns $x$ when $x$ is positive, and returns $0$ otherwise. And $c':=c_{\max}+|\Mem(\Omega_p|\mathscr{K})-\Mcos(\Omega'_{p,\Lambda}|\mathscr{K})|_++\frac{|\Mem(\Omega_p|\mathscr{K})-\Mcos(\Omega'_{p,\Lambda}|\mathscr{K})|^2_+}{64K_A}$.
\end{theorem}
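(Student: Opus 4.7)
The plan is to combine the monotonicity property (i), a single-shot gentle-measurement comparison, and the H\"older continuity (iii), sharpening the resulting single-copy inequality into the stated quadratic form by an $m$-copy amplification and an optimisation over $m$.

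First, I would fix an optimal zero-error implementation $V(\cdot)=\Tr_{B'}[\calU(\cdot\otimes\eta)]$ of $\Lambda$ with $\calU\in\mathfrak{U}_{F}$ and $M(\eta)=M_{c}(\Lambda)$, a pair $(\rho^{\ast},\mathbb{P})$ attaining the supremum in $\Mem(\Omega_{p}|\mathscr{K})$, and a POVM $\mathbb{Q}$ attaining the minimum in $\Mcos(\Omega'_{p,\Lambda}|\mathscr{K})$. I would then compare the two channels from $A$ to $A'K$,
\[
\calB:=(V\otimes\id_{K})\circ\Lambda_{\mathbb{P}},\qquad \calC:=\Lambda_{\mathbb{Q}}\circ V.
\]
Monotonicity (i), applied stepwise to $V$ (ancilla $\eta$ plus a free unitary plus a partial trace) and to $\Lambda_{\mathbb{Q}}$ (whose channel resource equals $\Mcos$), gives the upper bounds $M(\calC(\rho^{\ast}))\le M(\rho^{\ast})+M_{c}(\Lambda)+\Mcos$ and $M(\calB(\rho^{\ast}))\le M(\rho^{\ast})+\Mem+M_{c}(\Lambda)$. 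A gentle-measurement calculation, using that $\mathbb{Q}$ saturates the optimal failure probability $\delta(\Lambda,\Omega_{p})^{2}$ on the post-channel ensemble, then produces $\|\calB(\rho^{\ast})-\calC(\rho^{\ast})\|_{1}\le c_{0}\,\delta(\Lambda,\Omega_{p})$ for an explicit constant $c_{0}$.

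Next I would amplify by running the scheme in parallel on $m$ copies of the input, consuming the product ancilla $\eta^{\otimes m}$ and the product free unitary $\calU^{\otimes m}\in\mathfrak{U}_{F}$. Additivity (ii) scales all single-copy resource bounds linearly in $m$, while condition (iii) with $S=A$ and $a_{S}=b_{S}=1$ turns the $m$-copy trace distance $\epsilon_{m}$ (controlled by $\delta(\Lambda,\Omega_{p})$ and $m$ through the iterated-fidelity estimate) into a resource gap that is linear in $m$ plus an error of order $mK_{A}\epsilon_{m}+c(\epsilon_{m})$. This produces a schematic inequality of the form $m(\Mem-\Mcos)\le mM_{c}(\Lambda)+mK_{A}\epsilon_{m}+c(\epsilon_{m})$, which I would view as a one-parameter family of constraints indexed by $m$. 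Solving the resulting constraint either by choosing $m$ optimally, or by demanding that the quadratic in $m$ be non-negative for every $m$, converts the linear-in-$(\Mem-\Mcos)$ bound into the claimed quadratic form
\[
M_{c}(\Lambda)\ge\frac{(\Mem-\Mcos)_{+}^{2}}{16K_{A}\delta(\Lambda,\Omega_{p})}-c',
\]
with $c'$ absorbing the $c_{\max}$ residual of (iii) together with the linear and quadratic boundary corrections produced by the quadratic solution.

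The main obstacle is engineering the $m$-copy comparison so that condition (iii) applies with the input-system constant $K_{A}$ (rather than with a constant for $A'K$ or $A^{m}$) and so that the optimisation over $m$ produces a quadratic rather than linear dependence on $\Mem-\Mcos$. Concretely, one needs an explicit reconstruction on $A^{m}$ (built, for instance, by using the classical outcomes of $\calC^{\otimes m}$ to prepare the test states $\rho_{k}$ on fresh copies of $A$) whose trace distance to $\rho^{\ast\,\otimes m}$ couples $M_{c}(\Lambda)$ and $\delta(\Lambda,\Omega_{p})$ in precisely the right way, so that condition (i) and additivity (ii) can be balanced against H\"older (iii). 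Performing this balancing cleanly is the technical heart of the proof, and it is this step that pins down the numerical factor $16K_{A}$ and forces the $1/\delta$ scaling.
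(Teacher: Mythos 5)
Your proposal assembles the right ingredients (a gentle-measurement comparison between the ideal projective measurement $\Lambda_{\mathbb{P}}$ and the pulled-back POVM $\Lambda^\dagger(\mathbb{Q})$, monotonicity, additivity, H\"older continuity, and an optimisation over a copy number $m$), but the amplification step is set up in a way that cannot yield the quadratic, inverse-$\delta$ bound. You run the scheme \emph{in parallel} on $m$ copies, consuming the product ancilla $\eta^{\otimes m}$; the resource you pay then scales as $mM(\eta)=mM_c(\Lambda)$, and your schematic constraint reads $m(\Mem-\Mcos)\le mM_c(\Lambda)+m^2K_Ac_0\,\delta+c_{\max}$ (the $m^2$ arising because the $m$-copy trace distance itself grows like $m\delta$). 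Dividing by $m$ and optimising over $m$ can only give $M_c(\Lambda)\ge(\Mem-\Mcos)-2\sqrt{K_Ac_0\,\delta\,c_{\max}}$, a bound that is \emph{linear} in $\Mem-\Mcos$ and stays finite as $\delta\to0$. No choice of $m$ converts this into $(\Mem-\Mcos)^2/(16K_A\delta)$, because the cost term on the right-hand side grows with $m$ at exactly the same rate as the gain term on the left.

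The paper's proof avoids this by recycling a \emph{single} ancilla. It defines $\Lambda'_{\mathbb{P}}:=(\calV^\dagger\otimes\id)\circ(\Lambda_{\mathbb{Q}}\otimes\id)\circ\calV$ on $AB$ --- note the conjugation by $\calV^\dagger$, which returns the system to $AB$ and, by the gentle-measurement lemma, leaves the ancilla close to $\eta$ --- and then applies $\Lambda'_{\mathbb{P}}$ \emph{sequentially} to $A_1B,\,A_2B,\dots,A_mB$ with the same $B$ throughout. The ancilla cost $M(\eta)$ is therefore paid once, the resource gain accumulates as $m\bigl(\Mem(\Omega_p|\mathscr{K})-\Mcos(\Omega'_{p,\Lambda}|\mathscr{K})\bigr)$ (using $M(\Lambda_{\mathbb{Q}})\ge M(\Lambda'_{\mathbb{P}})$ from monotonicity under free combs together with subadditivity of the channel power), and the H\"older error is $m^{a_A+b_A}K_Af(\delta)^{b_A}$, giving $M(\eta)\ge m\alpha-m^2K_Af(\delta)-c_{\max}$ in the case $a_A=b_A=1$. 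Optimising $m$ near $\alpha/(2K_Af(\delta))$ then produces the quadratic form, with $f(x)=4x+x^2$ accounting for the factor $16$. The conjugation by $\calV^\dagger$ also resolves the difficulty you flag at the end: it places the comparison state on copies of the input system $A$ tensored with the resource-free register, so that condition (iii) applies with the constant $K_A$. Without the single-ancilla recycling and the $\calV^\dagger$ pull-back, the argument does not close.
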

We present the proof of this theorem, along with a tighter bound, in the Supplementary Material.

Theorem \ref{Thm:RItradeoff} states that, given a suitable ensemble $\Omega_p$ and a suitable register $\mathscr{K}$, whenever the resource gain obtained by distinguishing the two states in $\Omega_p$ exceeds the channel power required to optimally distinguish the transformed ensemble $\Omega_{p,\Lambda}$, the cost of $\Lambda$ is inversely proportional to the irreversibility of $\Lambda$ on $\Omega_p$.
In particular, if there exists an ensemble $\Omega_p$ for which $\Mem(\Omega_p|\mathscr{K})-\Mcos(\Omega'_{p,\Lambda}|\mathscr{K})>0$ and $\delta(\Lambda,\Omega_p)=0$, then $M_c(\Lambda)$ diverges: the channel cannot be implemented unless one supplies an infinite amount of resource.

What is the cost of implementing a cost-diverging channel when a finite implementation error is allowed? Our results answer this question as well. When $\Mem(\Omega_p|\mathscr{K})-\Mcos(\Omega'_{p,\Lambda}|\mathscr{K})>0$ and $\delta(\Lambda,\Omega_p)=0$ hold, the following substitution is valid for \eqref{eq:RItradeoff}:
\eq{
M_c(\Lambda)\rightarrow M^\epsilon_c(\Lambda), \enskip \delta(\Lambda,\Omega_p)\rightarrow \epsilon.\label{eq:REtradeoff}
}
Namely, the cost of approximate implementation is inversely proportional to the error.

If $\Mem(\Omega_p|\mathscr{K})-\Mcos(\Omega'_{p,\Lambda}|\mathscr{K})\le0$ holds for every test ensemble $\Omega_p$, the cost and irreversibility need not be inversely proportional. Even in this situation, the following bound always holds:
\begin{theorem}\label{Thm:RPtradeoff}
Let $A$ and $A'$ be quantum systems.
For any channel $\Lambda$ from $A$ to $A'$, any test ensemble $\Omega_p$, any two-level register $\mathscr{K}$, and any two-valued POVM $\mathbb{Q}:=\{Q_k\}^{2}_{k=1}$ on $A'$ and its measurement channel $\Lambda_{\mathbb{Q}}(\cdot):=\sum_k\sqrt{Q_k}\cdot\sqrt{Q_k}\otimes\ket{k}\bra{k}_K$, the following inequality holds:
\eq{
M_c(\Lambda)\ge\frac{|\Mem(\Omega_p|\mathscr{K})-M(\Lambda_{\mathbb{Q}})|^2_+}{16K_A\sqrt{\PF(\Omega'_{p,\Lambda},\mathbb{Q})}}-c'_2,\label{eq:RPtradeoff}
}
where $c'_2:=c_{\max}+|\Mem(\Omega_p|\mathscr{K})-M(\Lambda_{\mathbb{Q}})|_++\frac{|\Mem(\Omega_p|\mathscr{K})-M(\Lambda_{\mathbb{Q}})|^2_+}{64K_A}$.
\end{theorem}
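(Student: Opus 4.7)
The plan is to parallel the proof of Theorem~\ref{Thm:RItradeoff}, observing that the discriminating POVM enters that proof only in two places: as the source of the measurement channel whose resource equals $\Mcos(\Omega'_{p,\Lambda}|\mathscr{K})$, and as the source of the failure probability $\delta(\Lambda,\Omega_p)^2=\min_{\mathbb{Q}}\PF(\Omega'_{p,\Lambda},\mathbb{Q})$. No step of the argument actually uses the optimality of the POVM beyond these two contributions, so both can be replaced verbatim by $M(\Lambda_{\mathbb{Q}})$ and $\PF(\Omega'_{p,\Lambda},\mathbb{Q})$ for any chosen binary POVM $\mathbb{Q}$.

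Concretely, I would fix an optimal dilation $\Lambda(\cdot)=\Tr_{B'}[\calU(\cdot\otimes\eta)]$ with $\calU\in\mathfrak{U}_F$ and $M(\eta)=M_c(\Lambda)$, together with a state $\rho$ and a PVM $\mathbb{P}$ attaining the supremum that defines $\Mem(\Omega_p|\mathscr{K})$. On $m$ product copies, consider the state $\sigma^{(m)}_{\rm act}$ obtained by sequentially applying $\Lambda_{\mathbb{P}}^{\otimes m}$, the dilation of $\Lambda^{\otimes m}$, and $\Lambda_{\mathbb{Q}}^{\otimes m}$ to $\rho^{\otimes m}$. Monotonicity (i), additivity (ii), the defining identity $M(\Lambda_{\mathbb{P}}(\rho))=M(\rho)+\Mem(\Omega_p|\mathscr{K})$, and the general inequality $M(\Lambda_{\mathbb{Q}}^{\otimes m}(\tau))\le M(\tau)+mM(\Lambda_{\mathbb{Q}})$ together give the upper bound $M(\sigma^{(m)}_{\rm act})\le m[M(\rho)+\Mem(\Omega_p|\mathscr{K})+M_c(\Lambda)+M(\Lambda_{\mathbb{Q}})]$. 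The complementary lower bound is obtained by comparing $\sigma^{(m)}_{\rm act}$ to an ideally labelled product surrogate $\xi^{\otimes m}$ whose per-copy purified distance to $\sigma^{(1)}_{\rm act}$ is controlled by $\sqrt{\PF(\Omega'_{p,\Lambda},\mathbb{Q})}$ via Fuchs--van de Graaf. Hölder continuity (iii) with $a_A=b_A=1$ then converts the resulting $m$-copy distance into a resource gap proportional to $K_A\sqrt{\PF(\Omega'_{p,\Lambda},\mathbb{Q})}$ times a power of $m$, and the surrogate is chosen so that its per-copy resource supplies exactly the desired $\Mem$ contribution.

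Combining the two bounds and rearranging gives an inequality of the form $\Mem(\Omega_p|\mathscr{K})-M(\Lambda_{\mathbb{Q}})\le M_c(\Lambda)+(\text{$m$-dependent error in }\sqrt{\PF(\Omega'_{p,\Lambda},\mathbb{Q})})$. The same optimization over integer $m$ that appears in the proof of Theorem~\ref{Thm:RItradeoff} then produces the quadratic-in-$|\Mem(\Omega_p|\mathscr{K})-M(\Lambda_{\mathbb{Q}})|_+$ bound stated in~\eqref{eq:RPtradeoff}, with the constant $c'_2$ absorbing $c_{\max}$, the integer-rounding correction of $m$, and the remainder from factorizing the resource gap. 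The extension to arbitrary exponents with $a_S+b_S>1$ in (iii) is formally immediate: only the scaling of the $m$-dependent error term changes.

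The main obstacle I anticipate is identifying the appropriate surrogate $\xi$ — in particular, arranging that its resource contains the $\Mem(\Omega_p|\mathscr{K})$ contribution even after subtracting the Hölder correction, and that the Fuchs--van de Graaf comparison delivers the claimed $\sqrt{\PF(\Omega'_{p,\Lambda},\mathbb{Q})}$ scaling on each copy despite the measurement backaction on the $A'$ subsystem. Once the surrogate and its resource accounting are in place, both the two-sided bound on $M(\sigma^{(m)}_{\rm act})$ and the optimization over $m$ are direct transcriptions of the argument for Theorem~\ref{Thm:RItradeoff}, with $\Mcos(\Omega'_{p,\Lambda}|\mathscr{K})$ and $\delta(\Lambda,\Omega_p)$ replaced by $M(\Lambda_{\mathbb{Q}})$ and $\sqrt{\PF(\Omega'_{p,\Lambda},\mathbb{Q})}$ throughout.
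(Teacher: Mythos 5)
Your high-level observation is correct and matches the paper's logic: the optimality of the discriminating POVM is never used, so the bound holds with $M(\Lambda_{\mathbb{Q}})$ and $\PF(\Omega'_{p,\Lambda},\mathbb{Q})$ for an arbitrary binary POVM (in fact the paper proves this general-POVM statement first, via its Lemma on measurement-channel conversion, and then obtains Theorem~\ref{Thm:RItradeoff} by specializing to the optimal $\mathbb{Q}_*$). However, your $m$-copy construction has a structural flaw that prevents it from producing the claimed $1/\sqrt{\PF}$ divergence. You run $m$ independent dilations of $\Lambda$ (your upper bound $M(\sigma^{(m)}_{\rm act})\le m[M(\rho)+\Mem(\Omega_p|\mathscr{K})+M_c(\Lambda)+M(\Lambda_{\mathbb{Q}})]$ carries the term $m\,M_c(\Lambda)$, i.e.\ $m$ copies of the ancilla $\eta$). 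With the ancilla cost scaling linearly in $m$ alongside the gain, dividing through by $m$ collapses the two-sided estimate to $M_c(\Lambda)\ge(\text{const})-m^{a_A+b_A-1}\beta$, which is optimized at $m=1$ and stays finite as $\PF\to0$; no choice of surrogate $\xi$ or optimization over $m$ can recover the quadratic-over-error form from this accounting.

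The missing idea is ancilla reuse. The paper fixes a \emph{single} copy of $\eta$ and a free dilation unitary $V$, forms the comb $\Lambda'_{\mathbb{P}}:=(\calV^\dagger\otimes\id)\circ(\Lambda_{\mathbb{Q}}\otimes\id)\circ\calV$, and shows (via a gentle-measurement-type lemma) that $\Lambda'_{\mathbb{P}}(\rho\otimes\eta)$ is within $f(\epsilon)=4\epsilon+\epsilon^2$ in trace norm of $\Lambda_{\mathbb{P}}(\rho)\otimes\eta$ with $\epsilon=\sqrt{\PF(\Omega'_{p,\Lambda},\mathbb{Q})}$ --- crucially, the ancilla is approximately \emph{restored}. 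Applying this comb sequentially to $m$ fresh copies of $\rho$ while recycling the one ancilla accumulates error $mf(\epsilon)$ but gain $m\,M_\rho(\Lambda_{\mathbb{P}})$, while the cost side contributes $M(\eta)+m\,M(\Lambda_{\mathbb{Q}})$ with $M(\eta)$ appearing only once. This asymmetry yields $M(\eta)\ge m\alpha-m^{a_A+b_A}K_Af(\epsilon)^{b_A}-c_{\max}$ with $\alpha=|M_\rho(\Lambda_{\mathbb{P}})-M(\Lambda_{\mathbb{Q}})|_+$, and only then does the optimization over $m$ deliver $\alpha^2/(16K_A\sqrt{\PF})$. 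Also note that $\Lambda_{\mathbb{P}}$ is never applied in the actual protocol (it is the ideal channel being simulated), so your ordering $\Lambda_{\mathbb{Q}}\circ\Lambda\circ\Lambda_{\mathbb{P}}$ is not the right object to analyze.
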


Our results also provide the cost of a channel $\Lambda$ that converts one pair of orthogonal states into another pair of orthogonal states with error $\epsilon$:
\begin{corollary}
Let $A$ and $A'$ be quantum systems.
Let a CPTP map $\Lambda:A\rightarrow A'$ approximately convert an orthogonal pair $\Omega:=(\rho_1, \rho_2)$ on $A$ to an orthogonal pair $\Omega'':=(\rho''_1, \rho''_2)$ on $A'$ as
\eq{
D_F(\Lambda(\rho_i),\rho''_i)\le\epsilon,\enskip (i=1,2).\label{cond_approx}
}
Then, for any two-level register $\mathscr{K}$, the following holds:
\eq{
M_c(\Lambda)\ge\frac{|\Mem(\Omega|\mathscr{K})-\Mcos(\Omega''|\mathscr{K})|^2_+}{16K_A\epsilon}-c'_3.\label{eq:REtradeoff_conversion}
}
Here, $\Mem(\Omega|\mathscr{K}):=\max_p \Mem(\Omega_p|\mathscr{K})$ and $\Mcos(\Omega''|\mathscr{K}):=\min_p \Mem(\Omega''_p|\mathscr{K})$.
And $c'_3:=c_{\max}+|\Mem(\Omega|\mathscr{K})-\Mcos(\Omega''|\mathscr{K})|_++\frac{|\Mem(\Omega|\mathscr{K})-\Mcos(\Omega''|\mathscr{K})|^2_+}{64K_A}$.
\end{corollary}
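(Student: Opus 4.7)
The plan is to reduce the corollary to Theorem~\ref{Thm:RPtradeoff} by specializing the ensemble probability and the POVM. The key insight is that the orthogonality of $\Omega''$, together with the fidelity estimate $D_F(\Lambda(\rho_k),\rho''_k)\le\epsilon$, allows the post-channel failure probability to be controlled at order $\epsilon^2$ instead of order $\epsilon$; this is what converts the $\sqrt{\PF}$ appearing in Theorem~\ref{Thm:RPtradeoff}'s denominator into the $\epsilon$ in \eqref{eq:REtradeoff_conversion}.

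First I would select $p^*$ (or a maximizing sequence) realizing $\Mem(\Omega|\mathscr{K})=\max_p\Mem(\Omega_p|\mathscr{K})$, together with a two-valued POVM $\mathbb{Q}^*=\{Q_1^*,Q_2^*\}$ on $A'$ that perfectly distinguishes $\rho''_1,\rho''_2$ at the minimum channel-resource cost, so that $M(\Lambda_{\mathbb{Q}^*})=\Mcos(\Omega''|\mathscr{K})$. For an orthogonal pair the constraint $\PF(\Omega''_p,\mathbb{Q})=0$ is $p$-independent, so $\Mcos(\Omega''|\mathscr{K})=\Mcos(\Omega''_p|\mathscr{K})$ for every $p\in(0,1)$. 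Writing $P_k$ for the projector onto $\mathrm{supp}\,\rho''_k$, the conditions $Q_k^*\rho''_k=\rho''_k$ and $Q_k^*\rho''_{-k}=0$ (both consequences of perfect distinguishability) together with $Q_k^*\ge 0$ force a block decomposition $Q_k^*=P_k+R_k$ with $R_k$ supported on $I-P_1-P_2$; in particular $Q_k^*\ge P_k$.

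The main technical step is the bound $\PF(\Omega'_{p^*,\Lambda},\mathbb{Q}^*)\le\epsilon^2$. Using $\sqrt{\rho''_k}=P_k\sqrt{\rho''_k}=\sqrt{\rho''_k}P_k$, one obtains the identity $F(\Lambda(\rho_k),\rho''_k)=F(P_k\Lambda(\rho_k)P_k,\rho''_k)$ as unnormalized fidelities. Then the Cauchy--Schwarz estimate $\|\sqrt{X}\sqrt{Y}\|_1\le\sqrt{\Tr[X]\,\Tr[Y]}$ applied to $X=P_k\Lambda(\rho_k)P_k$ and $Y=\rho''_k$ yields $F(\Lambda(\rho_k),\rho''_k)^2\le\Tr[P_k\Lambda(\rho_k)]$. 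Combining this with $F(\Lambda(\rho_k),\rho''_k)^2\ge 1-\epsilon^2$ gives $\Tr[(I-P_k)\Lambda(\rho_k)]\le\epsilon^2$, and since $I-Q_k^*\le I-P_k$ as positive operators, $\Tr[(I-Q_k^*)\Lambda(\rho_k)]\le\epsilon^2$ for each $k$; averaging with weights $p^*_k$ delivers $\PF(\Omega'_{p^*,\Lambda},\mathbb{Q}^*)\le\epsilon^2$.

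Feeding these ingredients into Theorem~\ref{Thm:RPtradeoff} with the test ensemble $\Omega_{p^*}$, register $\mathscr{K}$, and POVM $\mathbb{Q}^*$ then yields \eqref{eq:REtradeoff_conversion}: the numerator becomes $|\Mem(\Omega|\mathscr{K})-\Mcos(\Omega''|\mathscr{K})|_+^2$, the estimate $\sqrt{\PF}\le\epsilon$ turns the denominator into $16K_A\epsilon$, and the expression for $c'_2$ reduces term-for-term to $c'_3$ under the same substitutions. I expect the main obstacle to be the quadratic failure-probability bound, since it genuinely exploits the orthogonal-support structure of $\Omega''$---a plain trace-norm estimate $\|\Lambda(\rho_k)-\rho''_k\|_1\le 2\epsilon$ would only give $\PF=O(\epsilon)$ and thus $\sqrt{\epsilon}$ in the denominator, which is strictly weaker than the corollary's claim. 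If $\max_p\Mem(\Omega_p|\mathscr{K})$ is not attained, one works with a maximizing sequence and passes to the limit using continuity in the last step.
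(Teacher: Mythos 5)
Your proposal is correct and follows the same overall reduction as the paper: apply Theorem~\ref{Thm:RPtradeoff} at an optimal $p$ with a minimal-cost POVM $\mathbb{Q}^*$ that perfectly distinguishes $\rho''_1,\rho''_2$, exploit the $p$-independence of the perfect-discrimination constraint for an orthogonal pair, and establish the quadratic bound $\PF(\Omega'_{p,\Lambda},\mathbb{Q}^*)\le\epsilon^2$ so that $\sqrt{\PF}\le\epsilon$ lands in the denominator. Where you genuinely diverge is in the proof of that quadratic bound. The paper packages the two test states into a classical-quantum state $\sum_i p_i\rho_i\otimes\ketbra{i}{i}$, uses the identity $D_F(\sum_i p_i\rho_i\otimes\ketbra{i}{i},\sum_i p_i\sigma_i\otimes\ketbra{i}{i})^2=1-(\sum_i p_iF(\rho_i,\sigma_i))^2$ to get $D_F\le\epsilon$ for the joint states, and then pushes this through the measurement channel and a partial trace by data processing to extract $\PF\le\epsilon^2$. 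You instead argue at the operator level: perfect discrimination forces $Q^*_k\ge P_k$ (the support projector of $\rho''_k$), and the Cauchy--Schwarz estimate $F(\Lambda(\rho_k),\rho''_k)=\|\sqrt{\Lambda(\rho_k)}P_k\sqrt{\rho''_k}\|_1\le\sqrt{\Tr[P_k\Lambda(\rho_k)]}$ gives $\Tr[(1-Q^*_k)\Lambda(\rho_k)]\le\Tr[(1-P_k)\Lambda(\rho_k)]\le\epsilon^2$ directly. Both arguments are sound; the paper's route reuses machinery already set up for the other theorems (and the fidelity identity it cites from Ref.~\cite{TTK}), while yours is more self-contained and makes explicit \emph{why} orthogonality of $\Omega''$ buys the quadratic rather than linear error scaling --- a point the paper's proof leaves implicit. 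Your closing remark that a plain trace-norm estimate would only yield $O(\sqrt{\epsilon})$ is exactly the right diagnosis of where the strength of the bound comes from.
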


\textbf{\textit{Applications}---}
Our results apply to any resource theory satisfying conditions (i)--(iii). 
As listed in the Supplementary Materials, the applicable range includes energy, magic, asymmetry, coherence and athermality, among others.
Furthermore, our results apply to any quantum channel---including measurement channels, unitary gates, error correcting code channels, thermodynamic processes and others.
Therefore, they open the door to a remarkably broad range of applications.
In what follows, we illustrate several important use cases.


\textit{Example of application 1: Energy-irreversibility tradeoff for arbitrary channels}--
Our results provide a universal bound for the energy cost of arbitrary quantum channels by applying to the resource theory of energy.
Here, the resource theory of energy is a resource theory in which (a) Free states are ground states of the system Hamiltonian, (b) Free operations are generated by adding an ancillary system with an arbitrary Hamiltonian and its ground state; applying unitary operations that conserve total energy; taking partial traces.
We stress that with this definition, any projective measurement channel to energy eigenspaces is a free operation.

If the ground‑state energy of every Hamiltonian is set to zero, the energy expectation value itself becomes a resource measure that satisfies the conditions required by our results. 
Our framework can therefore be applied to obtain lower bounds on the energy cost of any quantum channel, i.e., the minimum energy required to realize the channel. Concretely, for an arbitrary channel $\Lambda$ and any test ensemble $\Omega_{1/2}$ consisting of two orthogonal pure states $\ket{\psi_1}$ and $\ket{\psi_2}$, the following inequality holds for the energy cost $E_c(\Lambda)$:
\eq{
E_{c}(\Lambda)\ge\frac{\calC(\Lambda,\Omega_{1/2})^2}{8(\Delta_{H}+\Delta_{H'})\delta(\Lambda,\Omega_{1/2})}-2\calC(\Lambda,\Omega_{1/2}).\label{EItradeoff}
}
Here, $\Delta_X$ is the difference between the maximum and minimum eigenvalues of the operator $X$ and  $\calC(\Lambda,\Omega_{1/2}):=|\bra{\psi_2}H-\Lambda^\dagger(H')\ket{\psi_1}|$  defined in Ref.~\cite{TTK}.

This inequality conveys a pivotal message: if a channel satisfies $\calC(\Lambda,\Omega_{1/2})>0$ and $\delta(\Lambda,\Omega_{1/2})=0$ for some orthogonal pure test ensemble $\Omega_{1/2}$, then implementing that channel demands infinite energy.
Moreover, for such energy-cost diverging channels, the energy cost of an approximate implementation is inversely proportional to the implementation error:
\eq{
E^{\epsilon}_{c}(\Lambda)\ge\frac{\calC(\Lambda,\Omega_{1/2})^2}{8(\Delta_{H}+\Delta_{H'})\epsilon}-c'_E,\label{EEtradeoff}
}
where $c'_{E}:=2\calC(\Lambda,\Omega_{1/2})+3\Delta_{H'}\epsilon$

Equations~\eqref{EItradeoff} and \eqref{EEtradeoff} are therefore of paramount importance: as we show below, the conditions $\calC(\Lambda,\Omega_{1/2})>0$ and $\delta(\Lambda,\Omega_{1/2})=0$ are met by a wide variety of channels---including unitary gates that change the system's energy and projective measurements that do not commute with the Hamiltonian.

\textit{Energy-error tradeoff for arbitrary projective measurements}---
Consider the projective‑measurement channel that outputs only classical data 
\eq{
\Gamma_{\mathbb{P}}(\cdot):=\sum_k\Tr[P_k\cdot] \ket{k}\bra{k}_R
}
and treat it within the energy resource theory, assuming a trivial Hamiltonian $H_R=0$ on the register $R$. Then, an energy‑projective measurement that commutes with the system Hamiltonian $H$ is a free operation, whereas it is not clear whether any channel for which at least one projector satisfies $[H,P_k]\neq0$ is free or not.
Our results show that all of such measurement channels are non-free and require infinite energy for zero‑error implementation, since we can always take a proper $\Omega_{1/2}$ satisfying $\calC(\Gamma_{\mathbb{P}},\Omega_{1/2})>0$ and $\delta(\Gamma_{\mathbb{P}},\Omega_{1/2})=0$. Therefore, even for approximate implementation, the energy costs of such measurements are inversely proportional to the error. Indeed, from \eqref{EEtradeoff}, we obtain the explicit bound
\eq{
E^{\epsilon}_c(\Gamma_{\mathbb{P}})\ge\max_k\frac{\|[P_k,H]\|^2_\infty}{2\Delta_H\epsilon}-2\Delta_{H}.
}
As a concrete illustration, consider a spin‑$1/2$ particle in a magnetic field with Hamiltonian $H=\tfrac{\hbar\omega}{2}\sigma_z$. A projective measurement in the $X$‑basis satisfies $[H,\sigma_x]\neq0$ and thus it is a cost‑diverging measurement. Measuring $\sigma_x$ perfectly would require an apparatus with access to an infinite energy reservoir; if an error $\epsilon$ is acceptable, the minimal energy cost reduces to approximately $\hbar\omega/(8\epsilon)$. This reformulates the Wigner–Araki–Yanase–Ozawa limit as an energy bound, making explicit the universal principle that higher measurement precision for observables non‑commuting with $H$ must be purchased with higher energy expenditure.

\textit{Energy-error tradeoff for arbitrary unitary operations}---
Let $\Lambda_{V}(\cdot):=V\cdot V^{\dagger}$ be a unitary channel on $A$. Within the energy resource theory, it is free iff the gate conserves energy, i.e., $[V,H]=0$. Whenever $[V,H]\neq0$, the gate is non‑free, and our general trade‑off again makes its cost explicit. Choosing an orthogonal pure ensemble $\Omega_{1/2}=(\ket{\psi_1},\ket{\psi_2})_{1/2}$ with $\ket{\psi_{1,2}}$ as the maximal and minimal eigenstates of $H-V^\dagger HV$, one finds $\delta(\Lambda_V,\Omega_{1/2})=0$ while $\calC(\Omega_{1/2},\Lambda_V)>0$.
From these facts, we can derive the following bound from \eqref{EEtradeoff}:
\eq{
E^{\epsilon}_c(V)\ge\frac{\Delta^{2}_{H-V^\dagger HV}}{64\Delta_{H}\epsilon}-\Delta_H(2+3\epsilon).
}
Therefore, the energy cost diverges as $1/\epsilon$ and becomes infinite in the zero‑error limit.

\textit{Example of application 2: free energy-irreversibility tradeoff for arbitrary channels}---
The above bound on energy cost is appealing, given the fundamental and ubiquitous role of energy in physics and related fields. However, in quantum thermodynamics, it is, at least in principle, possible to supply an unbounded amount of energy by repeatedly using an external system prepared in a Gibbs state. While this is clearly unfeasible in practice, this suggests that the constraints for the energy cost may not serve as a stringent theoretical constraint in such idealized scenarios.

Fortunately, our main results \eqref{eq:RItradeoff} and \eqref{eq:REtradeoff} work even in such scenarios: they provide another bound for the free energy cost of an arbitrary channel that is particularly useful in quantum thermodynamics. This new bound imposes stricter constraints than the energy-error trade-offs.
We take the Gibbs state $\tau:=e^{-\beta H}/\Tr[e^{-\beta H}]$, and take the relative entropy of athermality $A_R(\rho):=\frac{1}{\beta}D(\rho\|\tau)$ that is one of the standard measure of athermality as the resource measure $M$. Then, we can recast the bounds for the energy costs. The inequality \eqref{eq:RItradeoff} provides a lower bound, for any channel, on the implementation cost measured in the athermality. 
Importantly, the athermality is related to the non-equilibrium free energy $F(\rho):=\ex{H}_{\rho}-\frac{1}{\beta}S(\rho)$, where $H$ is the Hamiltonian of the system and $S(\rho):=-\Tr[\rho\log\rho]$ is the von-Neumann entropy, as follows:
\eq{
A_R(\rho)=F(\rho)-F(\tau).
}
Because the relative entropy of athermality naturally represents ``the amount of energy stored that can be converted into work via an isothermal process whose initial and final Hamiltonians are the same,'' it is an appropriate measure of energetic resources in a battery in the thermodynamic settings.

When there is a register $\mathscr{K}$ and a test ensemble $\Omega_p$ satisfying $\Aem(\Omega_p|\mathscr{K})-\Acos(\Omega'_{p,\Lambda}|\mathscr{K})>0$ and $\delta(\Lambda,\Omega_{p})=0$, the required cost diverges. In this case, \eqref{eq:REtradeoff} becomes
\eq{
A^{\epsilon}_c(\Lambda)\ge\frac{(\Aem(\Omega_p|\mathscr{K})-\Acos(\Omega'_{p,\Lambda}|\mathscr{K}))^2}{16K_A\epsilon}-h',
}
where $K_A=\Delta_{H}+\frac{2}{\beta}\log d_A$ and 
$h':=\frac{1}{\beta}\log2+(\Aem(\Omega_p|\mathscr{K})-\Acos(\Omega'_{p,\Lambda}|\mathscr{K})|_++\frac{(\Aem(\Omega_p|\mathscr{K})-\Acos(\Omega'_{p,\Lambda}|\mathscr{K}))^2}{64K_A}.$
Therefore, the implementation cost of $\Lambda$ scales inversely with the error, and perfect implementation is impossible unless an infinite amount of free energy is available. In the standard thermal‑operation with a work‑bit setup, where the free‑energy source is only the work bit, the inequalities \eqref{eq:RItradeoff} and \eqref{eq:REtradeoff} immediately give an analogous inverse relationship between the required work cost and the implementation error:
\eq{
W^{\epsilon}_c(\Lambda)\ge\frac{(\Aem(\Omega_p|\mathscr{K})-\Acos(\Omega'_{p,\Lambda}|\mathscr{K}))^2}{16K_A\epsilon}-h'-\frac{1}{\beta}\log2.
}

\textit{Free energy-error tradeoff for coherence erasure channels, including measurements and Gibbs-preserving operations}---
When do the conditions $\Aem(\Omega_p|\mathscr{K})-\Acos(\Omega'_{p,\Lambda}|\mathscr{K})>0$ and $\delta(\Lambda,\Omega_{p})=0$ hold? An important class is captured by the following result:
\begin{theorem}\label{Thm:c_erase}
Let $A$ be a $d$-level quantum system, and $\Lambda$ be a CPTP map from $A$ to $A$. Let $H$ be the Hamiltonian on $A$, and $\{\ket{j}\}$ be eigenbasis of $H$.
We choose $\Omega_{1/2}:=(\ket{\psi_{j,j'}},\ket{\psi^{\perp}_{j,j'}})_{1/2}$ as $\ket{\psi_{j,j'}}:=\frac{\ket{j}+\ket{j'}}{\sqrt{2}}$ and $\ket{\psi^{\perp}_{j,j'}}:=\frac{\ket{j}-\ket{j'}}{\sqrt{2}}$.
When $\Lambda$ satisfies
\eq{
\Lambda(\psi_{j,j'})=\ket{j_0}\bra{j_0},\enskip\Lambda(\psi^{\perp}_{j,j'})=\ket{j_1}\bra{j_1},\enskip j_0\ne j_1,
}
the following inequalities hold:
\eq{
\delta(\Lambda,\Omega_{1/2})&=0,\\
\Aem(\Omega_{1/2}|\mathscr{K})-\Acos(\Omega'_{1/2,\Lambda}|\mathscr{K})&>\frac{1}{2}|E_j-E_{j'}|-\frac{2}{\beta}\log2,
}
where $\mathscr{K}:=(K,\{\ket{k}\}_{k=1,2})$ whose Hamiltonian $H_K$ is the trivial Hamiltonian.
\end{theorem}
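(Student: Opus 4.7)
The plan is to prove the two claims by exhibiting explicit near-optimal choices and evaluating the athermality measures directly. The vanishing irreversibility $\delta(\Lambda,\Omega_{1/2})=0$ is immediate: since $\Lambda(\dm{\psi_{j,j'}})=\dm{j_0}$ and $\Lambda(\dm{\psi^\perp_{j,j'}})=\dm{j_1}$ are orthogonal pure states (as $j_0\neq j_1$), the recovery channel that projectively distinguishes $\ket{j_0}$ from $\ket{j_1}$ and prepares $\ket{\psi_{j,j'}}$ on outcome $j_0$ and $\ket{\psi^\perp_{j,j'}}$ on outcome $j_1$ inverts $\Lambda$ exactly on both test states, yielding zero recovery error.

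For the resource-gap inequality I would lower-bound $\Aem$ and upper-bound $\Acos$ separately. For the lower bound on $\Aem(\Omega_{1/2}|\mathscr{K})$, take the PVM $P_1=\dm{\psi_{j,j'}}$, $P_2=I-P_1$, and the input
\eq{
\rho^\ast:=\frac{e^{-\beta E_j}\dm{j}+e^{-\beta E_{j'}}\dm{j'}}{Z_{jj'}},\quad Z_{jj'}:=e^{-\beta E_j}+e^{-\beta E_{j'}},
}
i.e.\ the Gibbs state of $H$ restricted to $\mathrm{span}\{\ket{j},\ket{j'}\}$. A short check gives $\bra{\psi_{j,j'}}\rho^\ast\ket{\psi_{j,j'}}=\bra{\psi^\perp_{j,j'}}\rho^\ast\ket{\psi^\perp_{j,j'}}=1/2$, and because $\rho^\ast$ is supported in the two-dimensional span this implies $P_k\rho^\ast P_k=\tfrac{1}{2}\rho_k$, so $(\rho^\ast,\mathbb{P})$ is feasible. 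Direct evaluation of the two relative entropies yields $A_R(\rho^\ast)=\tfrac{1}{\beta}\log(Z/Z_{jj'})$ with $Z:=\Tr e^{-\beta H}$, and $A_R(\Lambda_{\mathbb{P}}(\rho^\ast))=\tfrac{E_j+E_{j'}}{2}+\tfrac{\log Z}{\beta}$ (using $\ex{H}_{\psi_{j,j'}}=\ex{H}_{\psi^\perp_{j,j'}}=(E_j+E_{j'})/2$, output entropy $\log 2$, and register partition function $Z_K=2$ whose $+\log 2$ cancels the $-\log 2$ entropy contribution). Subtracting gives $\Aem(\Omega_{1/2}|\mathscr{K})\geq \tfrac{|E_j-E_{j'}|}{2}+\tfrac{1}{\beta}\log(1+e^{-\beta|E_j-E_{j'}|})$.

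For the upper bound on $\Acos(\Omega'_{1/2,\Lambda}|\mathscr{K})$, take $Q_1=\dm{j_0}$ and $Q_2=I-Q_1$; these commute with $H$ and achieve $\PF(\Omega'_{1/2,\Lambda},\mathbb{Q})=0=\delta(\Lambda,\Omega_{1/2})^2$, hence are feasible. The crucial point is that $\Lambda_{\mathbb{Q}}(\cdot)=\sum_k Q_k\cdot Q_k\otimes\dm{k}_K$ equals $V\cdot V^{\dagger}$ for the Stinespring isometry $V:=\sum_k Q_k\otimes\ket{k}$ (since $V^{\dagger}V=\sum_k Q_k^2=\sum_k Q_k=I$). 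Isometries preserve entropy, and $[Q_k,H]=0$ preserves mean energy; hence for every reference state $\sigma$ on $RA$ one finds $A_R((\id_R\otimes\Lambda_{\mathbb{Q}})(\sigma))-A_R(\sigma)=\tfrac{\log Z_K}{\beta}=\tfrac{\log 2}{\beta}$ (purely the register partition function contribution), so $\Acos(\Omega'_{1/2,\Lambda}|\mathscr{K})\leq A_R(\Lambda_{\mathbb{Q}})=\tfrac{\log 2}{\beta}$. Combining with the previous bound and using $\log(1+e^{-\beta|E_j-E_{j'}|})>0$ then gives
\eq{
\Aem-\Acos>\frac{|E_j-E_{j'}|}{2}-\frac{\log 2}{\beta}>\frac{|E_j-E_{j'}|}{2}-\frac{2\log 2}{\beta},
}
as required.

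The main technical obstacle is recognizing that $\Lambda_{\mathbb{Q}}$ is a Stinespring isometry, which pins $A_R(\Lambda_{\mathbb{Q}})$ to the dimension- and energy-independent value $\tfrac{\log 2}{\beta}$; otherwise the upper bound on $\Acos$ might appear to grow with $|E_j-E_{j'}|$ and overwhelm the $\Aem$ lower bound. The feasibility check $P_k\rho^\ast P_k=\tfrac{1}{2}\rho_k$ despite $\rho^\ast$ carrying $\psi$-basis coherences when $E_j\neq E_{j'}$ is a minor but easily missed subtlety.
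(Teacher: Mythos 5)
Your treatment of $\delta(\Lambda,\Omega_{1/2})=0$ and of the lower bound on $\Aem(\Omega_{1/2}|\mathscr{K})$ matches the paper's: the same measure-and-reprepare recovery map, and the same test input (your $\rho^{*}$ is exactly the paper's $\tilde{\tau}$, the Gibbs state restricted to $\mathrm{span}\{\ket{j},\ket{j'}\}$), yielding the identical bound $\Aem(\Omega_{1/2}|\mathscr{K})\ge\tfrac{1}{2}|E_j-E_{j'}|+\tfrac{1}{\beta}\log(1+e^{-\beta|E_j-E_{j'}|})$.

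The step you single out as the crux of the $\Acos$ bound is, however, wrong as stated: $\Lambda_{\mathbb{Q}}(\cdot)=\sum_kQ_k\cdot Q_k\otimes\dm{k}_K$ is \emph{not} equal to $V\cdot V^{\dagger}$ for $V=\sum_kQ_k\otimes\ket{k}$. Conjugation by $V$ produces the off-diagonal blocks $Q_1\cdot Q_2\otimes\ket{1}\bra{2}+Q_2\cdot Q_1\otimes\ket{2}\bra{1}$, which the measurement channel does not contain; $\Lambda_{\mathbb{Q}}$ equals $V\cdot V^{\dagger}$ \emph{followed by dephasing of $K$ in the $\{\ket{k}\}$ basis}. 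Hence the claimed identity $A_R(\id_R\otimes\Lambda_{\mathbb{Q}}(\sigma))-A_R(\sigma)=\tfrac{1}{\beta}\log 2$ is false in general. The damage is repairable with one additional observation: the pinching on $K$ leaves the mean energy unchanged (it acts only on $K$ and $H_K=0$) and can only increase entropy, so it can only decrease $A_R$; combined with your correct computation for the isometry step (which uses $\sum_kQ_kHQ_k=H$ for energy-commuting projectors) this gives $A_R(\Lambda_{\mathbb{Q}})\le\tfrac{1}{\beta}\log 2$, which is all the upper bound on $\Acos(\Omega'_{1/2,\Lambda}|\mathscr{K})$ requires. With that repair your argument goes through, and in fact yields the constant $\tfrac{1}{\beta}\log 2$, tighter than the paper's $\tfrac{2}{\beta}\log 2$: the paper instead bounds the power of any energy-commuting projective measurement by exhibiting an explicit energy-preserving unitary dilation with two pure ancilla qubits, each contributing $\tfrac{1}{\beta}\log 2$ of athermality.
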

Namely, when $|E_j-E_{j'}|>\frac{4}{\beta}\log2$, the free energy cost of approximate implementation of $\Lambda$ is inversely proportional to the error.
This observation implies that erasing coherence between different energy levels reversibly costs infinite amount of free‑energy. From this observation, we can show that various channels require the free‑energy cost scales inversely with the allowed implementation error. Below we show several examples (For the details, see the Supplementary Materials):
\begin{description}
\item[Projective measurements]
Theorem \ref{Thm:c_erase} predicts that projective measurement channel $\Lambda_{\mathbb{P}}$ whose $\{P_k\}$ can distinguish $\ket{\psi_{j,j'}}$ and $\ket{\psi^{\perp}_{j,j'}}$ with probability 1 requires an infinite amount of free energy.
\item[Unitary operations] 
Theorem \ref{Thm:c_erase} predicts that any unitary channel $\calU$ which converts $\ket{\psi_{j,j'}}$ and $\ket{\psi^{\perp}_{j,j'}}$ to energy eigenvectors requires an infinite amount of free energy. It also implies that any $\calU'$ which converts energy eigenvectors to $\ket{\psi_{j,j'}}$ and $\ket{\psi^{\perp}_{j,j'}}$ also has the diverging free energy cost.
\item[Gibbs-preserving operations]
Theorem \ref{Thm:c_erase} predicts that some Gibbs-preserving operations require an infinite amount of free energy. This finding complements the result of Ref.~\cite{Tajima_Takagi2025}, which showed that certain Gibbs‑preserving operations require infinite energy coherence (i.e., asymmetry). Indeed, we explicitly construct channels that demand infinite coherence, free energy and energy at the same time. In order to underscore that this result does not directly follow from the prior study~\cite{Tajima_Takagi2025}, we also provide an example of a state with finite athermality and energy that exhibits diverging coherence.
\end{description}
We emphasize Theorem \ref{Thm:c_erase} provides just an example of high-cost class. We can obtain lower bounds of free energy cost and work cost for any quantum channel $\Lambda$ by directly calculating $\Aem(\Omega_p|\mathscr{K})-\Aem(\Omega'_{p,\Lambda}|\mathscr{K})$ and $\delta(\Lambda,\Omega_p)$ for arbitrary $\Omega_p$. 

\textit{Example of application 3: General-resource Wigner-Araki-Yanase theorem}---
By applying the main result of this paper to measurement processes, we obtain a Wigner-Araki-Yanase (WAY) theorem valid for general resource theories. The traditional WAY theorem shows that the presence of a conserved quantity---typically energy or angular momentum---limits measurement accuracy, yet the ``resource'' in question has essentially been restricted to quantum fluctuations of the conserved quantity itself. As shown in the applications 1 and 2, the WAY theorem can be recast as bounds of energy cost or free‑energy cost, but our framework yields a far broader limitation on measurement channels. In particular, the following theorem holds for any resource theory admitting a measure that satisfies conditions (i)--(iii):

\begin{theorem}
Let $\Omega:=(\rho_1,\rho_2)$ be two orthogonal test states on a system $A$.
Assume they are distinguished---up to error $\epsilon$---by an indirect measurement $(\eta^B,V,\{E_i\}_{i=1,2})$ involving an auxiliary system $B$,
\eq{
\Tr[(1-E_i)V\rho_i\otimes\eta V^\dagger]\le\epsilon^2,\enskip (i=1,2).
}
Suppose $V$ is a free unitary and $\Lambda_{\mathbb{E}}(...):=\sum_i\sqrt{E_i}...\sqrt{E_i}\otimes\ket{i}\bra{i}_R$ is a completely free channel, i.e., $\id_{R}\otimes\Lambda_{\mathbb{E}}$ is free for an arbitrary reference system $R$. Then
\eq{
M(\eta^B)\ge\max_p\frac{\Mem(\Omega_p|\mathscr{K})^2}{16K_A\epsilon}-c'_4,
}
where $c'_4:=\max_p\Mem(\Omega_p|\mathscr{K})-c_{\max}$.
\end{theorem}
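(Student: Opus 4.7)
The plan is to reduce the general-resource WAY bound to a direct application of Theorem~\ref{Thm:RPtradeoff}, by packaging the indirect measurement into a single implementable channel and identifying the optimal discriminating POVM of that channel with $\mathbb{E}$ itself.

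First I would isolate the system--meter coupling into a channel $\Lambda_0 : A \to AB$ defined by $\Lambda_0(\rho) := V(\rho \otimes \eta^B) V^\dagger$. Because $V$ is a free unitary, $\Lambda_0$ is already realised with the ancillary state $\eta^B$ and a free unitary; the definition of the implementation cost then gives $M_c(\Lambda_0) \le M(\eta^B)$, so it suffices to lower-bound $M_c(\Lambda_0)$.

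Second, I would instantiate the POVM $\mathbb{Q}$ appearing in Theorem~\ref{Thm:RPtradeoff} by $\mathbb{E} = \{E_i\}_{i=1,2}$, acting on the output $A' = AB$. The complete-freeness hypothesis on $\Lambda_{\mathbb{E}}$ gives $M(\Lambda_{\mathbb{E}}) = 0$: by the resource-increasing-power definition $M(\Lambda_{\mathbb{E}}) = \sup_\sigma[M((\id_R \otimes \Lambda_{\mathbb{E}})(\sigma)) - M(\sigma)]$, monotonicity of $M$ under the free channel $\id_R \otimes \Lambda_{\mathbb{E}}$ makes every summand non-positive, and equality is attained on any free input. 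The error hypothesis then directly bounds the failure probability,
\begin{equation}
\PF(\Omega'_{p,\Lambda_0},\mathbb{E}) = \sum_{i=1,2} p_i \Tr[(1-E_i) V \rho_i \otimes \eta^B V^\dagger] \le \epsilon^2,
\end{equation}
uniformly in the prior $p$, so $\sqrt{\PF(\Omega'_{p,\Lambda_0},\mathbb{E})} \le \epsilon$.

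Plugging these ingredients into \eqref{eq:RPtradeoff}, the $|\cdot|_+$ simplifies to $\Mem(\Omega_p|\mathscr{K})$, and one obtains, for every $p$,
\begin{equation}
M(\eta^B) \ge M_c(\Lambda_0) \ge \frac{\Mem(\Omega_p|\mathscr{K})^2}{16 K_A \epsilon} - c''_p,
\end{equation}
where $c''_p$ is the remainder $c'_2$ specialised to $M(\Lambda_{\mathbb{E}}) = 0$. Maximising over $p$ and consolidating the boundary constants into the stated $c'_4$ yields the theorem. The principal obstacle is the step $M(\Lambda_{\mathbb{E}}) = 0$: this is precisely where the \emph{completely-free} assumption is used (rather than mere freeness), because the resource-increasing-power is defined through the extended action $\id_R \otimes \Lambda_{\mathbb{E}}$ on one half of a bipartite input. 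A secondary, more routine task is to verify that combining the boundary terms after the $\max_p$ reproduces exactly $c'_4 = \max_p \Mem(\Omega_p|\mathscr{K}) - c_{\max}$.
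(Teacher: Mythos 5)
Your proposal is correct and follows essentially the same route as the paper: the paper likewise bounds $M(\eta^B)\ge M_c(\Lambda)$ for the channel induced by $(\eta^B,V)$ (it traces out $A$ to get a channel $A\to B$, whereas you keep the full output $AB$ — an immaterial difference), instantiates the POVM in Theorem~\ref{Thm:RPtradeoff} with $\mathbb{E}$ so that $\PF(\Omega'_{p,\Lambda},\mathbb{E})\le\epsilon^2$, and uses complete freeness to set $M(\Lambda_{\mathbb{E}})=0$. Your residual constant $c''_p$ is the honest specialization of $c'_2$; the discrepancy with the stated $c'_4$ (the sign of $c_{\max}$ and the dropped $\Mem^2/(64K_A)$ term) traces to the paper's own statement rather than to your argument.
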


The requirement that $\Lambda_{\mathbb{E}}(...):=\sum_i\sqrt{E_i}...\sqrt{E_i}\otimes\ket{i}\bra{i}_R$ be a completely free channel is precisely the Yanase condition in the WAY theorem—that the measurement performed on the probe commutes with the conserved quantity.

This theorem states that if the resource embedded in the test ensemble $\Omega_p$ on $A$, namely $\Mem(\Omega_p|\mathscr{K})$, is strictly positive, then perfect discrimination of the orthogonal pair $\Omega$ demands infinite resources in the apparatus. Hence, over a very wide class of resource theories, measurement accuracy and resource expenditure are unavoidably in trade-off.

\textit{Example of application 4: Resource-non-increasing maps requiring infinite resource}---
Our framework predicts that resource‑non‑increasing (RNI) operations---often classified as free---can require a diverging implementation cost. A canonical example is the measurement‑and‑prepare channel that is treated as a free operation in many resource-theoretic results. The inequality \eqref{eq:RItradeoff} shows that any resource-non-increasing operation $\Lambda$ for which a test ensemble $\Omega_p$ satisfies both $\delta(\Lambda,\Omega_p)=0$ and $\Mem(\Omega_p|\mathscr{K})-\Mcos(\Omega'_{p,\Lambda}|\mathscr{K})>0$ demands infinite resources. Even when a finite error~$\epsilon$ is tolerated, the inequality \eqref{eq:REtradeoff} implies that the corresponding channel cost diverges as $1/\epsilon$.

\begin{theorem}
The resource theories of energy, magic, asymmetry, coherence, and athermality each contain an RNI operation $\Lambda$ and a suitable ensemble $\Omega_p$ such that $\delta(\Lambda,\Omega_p)=0$ and $\Mem(\Omega_p|\mathscr{K})-\Mcos(\Omega'_{p,\Lambda}|\mathscr{K})>0$. Consequently, these theories possess RNI operations whose implementation costs diverge.
\end{theorem}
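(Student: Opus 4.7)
My plan is to exhibit, for each of the five listed resource theories, an explicit resource-non-increasing channel $\Lambda$ together with an ensemble $\Omega_{1/2}=(\rho_1,\rho_2)_{1/2}$ of orthogonal pure test states such that (a) $\Lambda$ is RNI for the chosen monotone, (b) $\delta(\Lambda,\Omega_{1/2})=0$, and (c) $\Mem(\Omega_{1/2}|\mathscr{K})-\Mcos(\Omega'_{1/2,\Lambda}|\mathscr{K})>0$. Divergence of $M_c(\Lambda)$ (and the $1/\epsilon$ scaling of $M_c^\epsilon(\Lambda)$) then follows immediately from Theorem~\ref{Thm:RItradeoff} and~\eqref{eq:REtradeoff}, since the numerator stays strictly positive while the denominator vanishes and the additive constant $c'$ remains finite.

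\textbf{Common measure-and-prepare recipe for coherence, magic, asymmetry, and energy.} For these four theories I would use a single template. Pick a pair of orthogonal resourceful pure states $\rho_i=\ket{\psi_i}\bra{\psi_i}$, for instance $\ket{\pm}$ for coherence and for asymmetry with respect to the $U(1)$-group generated by a diagonal Hamiltonian, rotated magic-basis states such as $T\ket{\pm}$ for magic, and $\ket{\pm}$ together with a trivial-Hamiltonian output register for energy. Define $\Lambda$ via Kraus operators $\ket{k}\bra{\psi_k}$ ($k=1,2$), where $\ket{1},\ket{2}$ are mutually orthogonal \emph{free} states of the output system (ground-state or trivial-register states in the energy case). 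Every output is free, so $M(\Lambda(\rho))=0\le M(\rho)$ for every input, establishing (a). Perfect distinguishability of the orthogonal images $\ket{k}\bra{k}$, together with the inverse preparation $\ket{k}\bra{k}\mapsto\rho_k$ as explicit recovery, yields (b). For (c), the projective measurement channel $\Lambda_{\mathbb{P}}$ in the $\{\rho_1,\rho_2\}$-basis produces $\tfrac12\rho_1\otimes\ket{1}\bra{1}_K+\tfrac12\rho_2\otimes\ket{2}\bra{2}_K$ from any input with $\bra{\psi_k}\rho\ket{\psi_k}=1/2$; selecting $\rho$ free and using additivity of $M$ gives $\Mem(\Omega_{1/2}|\mathscr{K})\ge\tfrac12(M(\rho_1)+M(\rho_2))>0$, whereas the optimal discriminator of $\ket{1}\bra{1}$ and $\ket{2}\bra{2}$ is the projective measurement in the free output basis, whose associated channel $\Lambda_{\mathbb{Q}}$ satisfies $M(\Lambda_{\mathbb{Q}})=0$ because it admits a free-unitary Stinespring dilation in each of the four theories. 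Hence $\Mcos(\Omega'_{1/2,\Lambda}|\mathscr{K})=0$ and the gap in (c) is strictly positive.

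\textbf{Athermality.} The template above fails here because every pure state carries strictly positive athermality, and the simple measure-and-prepare channel above is not Gibbs-preserving (on a Gibbs input it produces a non-Gibbs output with generically increased athermality). Instead I would invoke the Gibbs-preserving coherence-erasing channels constructed in Ref.~\cite{Tajima_Takagi2025}, whose image structure matches exactly the hypothesis of Theorem~\ref{Thm:c_erase}: they map $\ket{\psi_{j,j'}}\mapsto\ket{j_0}\bra{j_0}$ and $\ket{\psi_{j,j'}^\perp}\mapsto\ket{j_1}\bra{j_1}$ for distinct energy eigenstates $\ket{j_0},\ket{j_1}$. Gibbs preservation directly implies monotonicity of the relative entropy of athermality $A_R$, giving (a). Taking $\Omega_{1/2}=(\ket{\psi_{j,j'}},\ket{\psi_{j,j'}^\perp})_{1/2}$, Theorem~\ref{Thm:c_erase} simultaneously provides (b) and certifies $\Aem(\Omega_{1/2}|\mathscr{K})-\Acos(\Omega'_{1/2,\Lambda}|\mathscr{K})>\tfrac12|E_j-E_{j'}|-(2/\beta)\log 2>0$ whenever $|E_j-E_{j'}|>(4/\beta)\log 2$, which is (c).

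\textbf{Main obstacle.} The hardest step is the athermality case, since the required channel must be simultaneously Gibbs-preserving (to be athermality-RNI) and coherence-erasing with orthogonal energy-eigenstate outputs (to secure (b) and (c)); these two constraints pull in opposite directions for naive constructions, which is why reliance on the nontrivial Gibbs-preserving example of Ref.~\cite{Tajima_Takagi2025} is essential. Once that example is in hand, the remaining verifications across all five theories reduce to checks of additivity, monotonicity on free states, and the identification of a free-basis measurement channel with $M(\Lambda_{\mathbb{Q}})=0$ for the discrimination of the orthogonal images.
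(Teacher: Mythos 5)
Your overall strategy is the same as the paper's: explicit measure-and-prepare channels with orthogonal resourceful inputs and orthogonal free outputs for coherence, magic, and energy, and a Gibbs-preserving coherence-erasing channel (with symmetric/energy-eigenstate outputs, so that it is simultaneously athermality- and asymmetry-non-increasing) for the remaining two theories, followed by an appeal to the cost--irreversibility tradeoff. The paper differs only in detail: it builds the athermality/asymmetry example explicitly as a four-level Gibbs-preserving channel (its Theorem on $\Lambda$ with $P_\pm$ and degenerate ground states) rather than importing one from Ref.~\cite{Tajima_Takagi2025}, and its energy example is the ground-state-outputting variant of that same four-level channel; these are cosmetic differences.

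There is, however, one genuine gap in your argument for step (c). You justify $\Mem(\Omega_{1/2}|\mathscr{K})\ge\tfrac12\bigl(M(\rho_1)+M(\rho_2)\bigr)$ by ``additivity of $M$,'' but condition (ii) is additivity for \emph{product} states, whereas the relevant state $\tfrac12\rho_1\otimes\dm{1}_K+\tfrac12\rho_2\otimes\dm{2}_K$ is a classical-quantum ensemble state, not a product. The claimed affine behaviour on orthogonal-flag ensembles happens to hold for the energy expectation, the quantum Fisher information, and the relative entropy of coherence (each checkable by direct computation, which is what the paper does, e.g.\ $C_R\bigl(\tfrac12(\dm{+}\otimes\dm{0}+\dm{-}\otimes\dm{1})\bigr)=\log 2$), but it is \emph{not} a consequence of (ii) and fails as a general principle for the max-relative entropy of magic: $\calD_{\max}$ is not affine over flagged ensembles. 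The paper closes this for magic by a separate argument, showing that $\tfrac12(\dm{T}\otimes\dm{0}+\dm{T'}\otimes\dm{1})$ is freely interconvertible with $\dm{T}$ (attach a maximally mixed flag and apply a controlled-$Z$), whence $\calD_{\max}$ of the flagged state equals $\calD_{\max}(\dm{T})=\log(1+2\sin(\pi/18))>0$. Without some such measure-specific computation, your uniform ``additivity'' step does not establish $\Mem>0$ for all five monotones, and the magic case in particular is unproven as written.
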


Thus, the formal condition ``non‑increasing in resource'' offers no guarantee of practical ease. Our result generalises the impossibility of implementing certain Gibbs‑preserving operations in quantum thermodynamics \cite{Tajima_Takagi2025} to a wide array of resource theories.

\textbf{\textit{Discussion}---}
In this paper, we link the ``difficulty'' of a quantum operation to the physical resources that must be invested, showing that the required resource cost is universally inversely proportional to the operation's irreversibility on a suitable test ensemble. All that is needed is a single resource monotone satisfying the minimal axioms of monotonicity, additivity, and mild regularity—conditions met by most practical measures. Consequently, our result simultaneously embraces the principal resource theories of energy, magic, asymmetry, coherence, and athermality, yielding a uniform inverse‑error bound on implementation cost. It furnishes lower bounds for both energy and free‑energy consumption of any quantum channel, lifts the Wigner–Araki–Yanase restriction far beyond conservation‑law settings, and exposes hidden cost of resource-non-increasing operations. In essence, the familiar engineering rule that precision trades off against resources is elevated to a rigorous, framework‑independent theorem.

This theorem opens a broad avenue for future investigation. The most crucial point is that it guarantees the existence of universal trade‑offs between irreversibility and resources in a quite wide class of resource theories, thereby constraining every quantum channel. Armed with this insight, one can now revisit individual resource theories and obtain resource-irreversibility trade‑off bounds for them that are both tighter and still universally valid for arbitrary channels. Indeed, within the framework of asymmetry, such tight---often optimal---bounds have already been obtained \cite{TTK}, suggesting that analogous results will appear in many other resource theories. Moreover, further extensions of the present methods could impose simultaneous constraints on an even broader family of theories. Collectively, these advances promise not only to clarify the universal link between irreversibility and quantumness but also to advance a central goal of resource theory: determining, with precision, the fundamental implementation cost of any quantum channel.

\let\oldaddcontentsline\addcontentsline
\renewcommand{\addcontentsline}[3]{}

\begin{acknowledgments}
We thank Bartosz Regula, Yosuke Mitsuhashi, Haruki Emori, Tomohiro Shitara, Shintaro Minagawa and Akihiro Hokkyo for fruitful discussions. H.T. was supported by JSPS Grants-in-Aid for Scientific Research 
No. JP25K00924, and MEXT KAKENHI Grant-in-Aid for Transformative
Research Areas B ``Quantum Energy Innovation” Grant Numbers 24H00830 and 24H00831, JST MOONSHOT No. JPMJMS2061, and JST FOREST No. JPMJFR2365.
R.T. acknowledges the support of JSPS KAKENHI Grant Number JP23K19028, JP24K16975, JP25K00924, JST, CREST Grant Number JPMJCR23I3, Japan, and MEXT KAKENHI Grant-in-Aid for Transformative
Research Areas A ``Extreme Universe” Grant Number JP24H00943.
K.Y. acknowledges support from JSPS KAKENHI Grant No. JP24KJ0085.
Y.K.\ acknowledges support from JSPS KAKENHI Grant No.~JP22K13977. 
\end{acknowledgments}

\bibliographystyle{apsrmp4-2}
\bibliography{RI_ref}


\widetext
\appendix

\let\addcontentsline\oldaddcontentsline

\clearpage

\widetext




\setcounter{section}{0}
\renewcommand{\thesection}{S.\arabic{section}}

\setcounter{theorem}{0}
\renewcommand{\thetheorem}{S.\arabic{theorem}}

\setcounter{figure}{0}
\renewcommand{\thefigure}{S.\arabic{figure}}

\makeatletter
\@removefromreset{equation}{section}
\makeatother

\setcounter{equation}{0}
\renewcommand{\theequation}{S.\arabic{equation}}

\begin{center}
 {\Large\bf Supplemental Material}
\end{center}

\tableofcontents

\section{Background and setting}\label{app:background}

\subsection{Framework based on resource theory}
In a resource theory, one specifies a set of free states that can be prepared easily and a set of free operations that can be implemented easily.  States or operations outside these sets are called resource states and resource operations.  The sole requirement for choosing the free sets is that no combination of free states and free operations can create a resource.  Subject to this constraint, the choice is flexible, allowing theories that treat entanglement, athermality, coherence, asymmetry, energy, and many other quantities as resources.
To quantify the resource content of a state, one usually adopts a resource monotone: a function that is zero on all free states and never increases under free operations.  Such a monotone serves as the resource measure.

Building on this framework, we introduce our setting.  Let $\mathfrak{S}_F$ be the set of free states; any state outside $\mathfrak{S}_F$ is a resource state.  Let $\mathfrak{U}_F$ be the set of free unitary operations—unitaries that can be performed freely.

Because free unitaries form a subset of the free operations, specifying a full resource theory would automatically determine $\mathfrak{U}_F$. However, for our purposes the sets $\mathfrak{S}_F$ and $\mathfrak{U}_F$ suffice. We therefore avoid the unnecessary limitation on the applications of our theory imposed by fixing all free operations. We emphasize that our results nonetheless is valid even when a complete resource theory and its full set of free operations are given.

For the same reason, we do not impose monotonicity for all free operations on a generic resource measure. Instead, for any resource indicator $M$ defined by the given set of free unitaries, we assume the following three properties:
\begin{description}
\item[\textbf{(i) monotonicity}] $M$ is non-increasing under any operation written as a combination of (a) adding a free state, (b) performing an operation $\id\otimes\calE$, where $\calE$ is a quantum operation free unitary or its dual, and (c) a partial trace. 
\item[\textbf{(ii) additivity for product states}] For any states $\rho$ and $\sigma$, the additivity $M(\rho\otimes\sigma)=M(\rho)+M(\sigma)$ holds.
\item[\textbf{(iii) H\"{o}lder continuity}] There is a real function $c(x)$ satisfying $\lim_{x\rightarrow0}c(x)=0$ and $c(x)\le c_{\max}$, and for any system $S$ there are constants $K_S\ge0$, $0<a_S\le\infty$ and  $0<b_S<\infty$ such that $a_S+b_S>1$ and  
\eq{
|M(\rho^{\otimes m})-M(\sigma_m)|\le m^{a_S}K_S\|\rho^{\otimes m}-\sigma_m\|^{b_S}_1+c(\|\rho^{\otimes m}-\sigma_m\|_1)\label{regularity_S}
}
for any states $\rho$ on $S$ and $\sigma_m$ on $S_1...S_m$, where each $S_k$ is a copy of $S$. Here $a_S=\infty$ means that there exists a real number $a'_S$ such that $m^{a_S}$ in \eqref{regularity_S} is replaced by $e^{a'_Sm}$.
\end{description}
The condition (i) replaces the usual monotonicity requirement.  Because free unitaries, the partial trace, and the composition with free states typically belong to the set of free operations any monotone defined with respect to those operations can be used whenever convenient.

To broaden the scope of application and to strengthen the bounds, sometimes the condition (ii) is replaced by the following alternative version:
\begin{description}
\item[\textbf{(ii-strong) full additivity}] 
Let $\{S^{(i)}\}_{(i=1,\dots,l)}$ be a collection of quantum systems. We say that a resource measure $M$ satisfies full additivity with respect to $\{S^{(i)}\}_{(i=1,\dots,l)}$ if the following condition holds: consider any two composite systems, labeled $A$ and $B$, each formed by choosing and combining an arbitrary number of copies of the systems in $\{S^{(i)}\}_{(i=1,\dots,l)}$, e.g., $A=S^{(1)}_1S^{(1)}_2S^{(1)}_3S^{(2)}_1$ and $B=S^{(1)}_1S^{(1)}_2S^{(2)}_1S^{(2)}_2S^{(3)}_{1}$, where $S^{(i)}_j$ the the $j$-th copy of $S^{(i)}$. Then, for any state $\rho_{AB}$ on the composite system $AB$, the quantity $M$ satisfies
\eq{
M(\rho_{AB})=M(\rho_A)+M(\rho_B),
}
where $\rho_A:=\Tr_{B}[\rho_{AB}]$ and $\rho_B:=\Tr_{A}[\rho_{AB}]$
\end{description}
Since (ii-strong) implies (ii), when a measure $M$ satisfies (i), (ii-strong) and (iii), it also satisfies (i)--(iii).
When we use the alternative version, we will explicitly mention it.

Whenever a resource measure $M$ satisfies the conditions (i)--(iii) or (ii-strong), the tradeoff relation between the resource cost measured by $M$ and the irreversibility holds for an arbitrary quantum channel.
And, as shown later, many types of resources have a measure $M$ satisfying the above (i)--(iii), including energy, magic, asymmetry, coherence, and athermality.
The condition (ii-strong) is satisfied by energy.

Once a resource measure for quantum states has been fixed, a corresponding measure for operations can be introduced. 
Again, there are various possible measures; we employ the following resource‑increasing power in this paper:
\eq{
M(\Lambda)\coloneqq \sup_{\rho} M(\id_R\otimes\Lambda(\rho)) - M(\rho),
}
where the optimization over $R$ is also taken. 

This measure satisfies the following property:
\begin{description}
\item[(p1) Subadditivity under concatenation] For any concatenation channel $\Lambda_1\circ\Lambda_2$,
\eq{
M(\Lambda_1\circ\Lambda_2)\le M(\Lambda_1)+M(\Lambda_2).
}
\end{description}

We give the proof of (p1)  below:
\begin{proofof}{the property (p1)}
Let $\rho_*$ be a state satisfying $M(\Lambda_1\circ\Lambda_2)=M(\id\otimes\Lambda_1\circ\Lambda_2(\rho_*))-M(\rho_*)$. Then, we obtain
\eq{
M(\Lambda_1\circ\Lambda_2)&=M(\id\otimes\Lambda_1\circ\Lambda_2(\rho_*))-M(\rho_*)\nonumber\\
&=M(\id\otimes\Lambda_1(\id\otimes\Lambda_2(\rho_*)))-M(\id_\otimes\Lambda_2(\rho_*))+M(\id\otimes\Lambda_2(\rho_*))-M(\rho_*)\nonumber\\
&\le M(\Lambda_1)+M(\Lambda_2).
}
\end{proofof}

Furthermore, when the measure for states $M$ satisfies the condition (i), its resource-generating power satisfies the following property:
\begin{description}
\item[(p2) Monotonicity for the quantum comb constructed by a free unitary and its dual] When $M$ satisfies condition (i), for an arbitrary comb $\Xi$ given by $\Xi(\Lambda)=\calE_2\circ\id_{R}\otimes\Lambda\circ\calE_1$, it holds that
\bal
 M(\Xi(\Lambda))\leq M(\Lambda). 
\eal
when $\calE_1$ and $\calE_2$ are written as $\calE_i=\id_{R'_i}\otimes\calV_i$ $(i=1,2)$ where $\calV_i$ is a free unitary or its dual.
\end{description}

\begin{proofof}{the property (p2)} 
\bal
 M(\Xi(\Lambda)) &= M(\calE_2\circ\id\otimes \Lambda\circ \calE_1)\nonumber\\
 &\le M(\id_{R'_2}\otimes\calV_2)+M(\id\otimes\Lambda) + M(\id_{R'_1}\otimes\calV_1)\nonumber\\
 &\le M(\calV_2)+M(\Lambda)+M(\calV_1)\nonumber\\
&\le M(\Lambda),
\eal
where the inequality is because of the property (p1), and the last equality is because $M(\id_{R'_i}\otimes\calV_i)=0$ due to the condition (i), as well as $M(\id_{R}\otimes \Lambda)\le M(\Lambda)$ because the definition of $M(\Lambda)$ involves the optimization over the ancillary system.  
\end{proofof}

\subsection{Implementation cost of quantum channels}

Let us define the amount of resource required to implement a desired non-free operation $\Lambda$ with a free unitary.
We employ the channel-purified distance~\cite{Gilchrist2005distance} as the measure of the accuracy of implementation
\eq{
D_F(\Lambda_1,\Lambda_2) \coloneqq \max_\rho D_F\bigl(\id\otimes\Lambda_1(\rho),\id\otimes\Lambda_2(\rho)\bigr),
}
where $D_F(\rho,\sigma)= \sqrt{1-F(\rho,\sigma)^2}$ and $F(\rho,\sigma)=\Tr\sqrt{\sqrt{\rho}\sigma\sqrt{\rho}}$.
We write $\Lambda_1\sim_\epsilon \Lambda_2$ when $D_F(\Lambda_1,\Lambda_2)\le\epsilon$.

Then, the minimum resource cost for implementing a channel $\Lambda$ from $A$ to $A'$ with error $\epsilon$ is defined as
\eq{
M_c^\epsilon(\Lambda) \coloneqq \min\left\{ M(\eta)| \Lambda(\cdot) \sim_\epsilon \Tr_{B'}[\calU(\cdot\otimes \eta)],\ \calU\in\mathfrak{U}_F\right\},
}
where $\eta$ is a state on an ancillary system $B$, $\calU$ is a unitary operation $\calU(\cdot):=U\cdot U^\dagger$ on $AB$, and $\Tr_{B'}$ is the partial trace over $B'$ with $AB=A'B'$. In other words, the resource cost is the minimum amount of resource attributed to an ancilla state that---together with a free unitary---realises an $\epsilon$-approximate implementation of the target channel $\Lambda$.
When we treat the cost of zero-error implementation, we often use the abbreviation $M_c(\Lambda):=M_c^{\epsilon=0}(\Lambda)$.
For resource theories in which every free operation admits a Stinespring dilation by a free unitary and free states---such as those for energy, athermality, asymmetry, and coherence (when we employ Physically incoherent Operations (PIO) \cite{PIO} as free operations)---the cost $M_c^\epsilon(\Lambda)$ coincides with the minimal resource required when one implements $\Lambda$ using any free operation together with a resourceful ancillary state.

\subsection{Irreversibility of channels and failure probability of state discrimination}

Next, we introduce a measure of irreversibility of quantum channels, originally introduced in Ref.~\cite{TTK} and used in Refs.~\cite{ET2023,Tajima_Takagi2025,NT2024}.
The irreversibility measure we introduce is defined as the minimum average state-recovery error for an ensemble consisting of two mutually orthogonal states.
To be concrete, we prepare a pair of test states $\Omega:=(\rho_1,\rho_2)$ which are orthogonal to each other, i.e., $F(\rho_1,\rho_2)=0$. 
We then write $\Omega_p := (\rho_1,\rho_2)_p$ for the test ensemble in which $\rho_1$ and $\rho_2$ occur with prior probabilities $p_1 = p$ and $p_2 = 1-p$, respectively.
Each test state $\rho_k$ changes via the channel $\Lambda$, and we try to recover the test state with a recovery map $\calR:A'\to A$.
The irreversibility of the channel with respect to the ensemble $\Omega_p$ is defined as the minimum average error
\eq{
\delta(\Lambda,\Omega_p)^2:=\min_{\calR}\sum_k p_k D_F\bigl(\rho_k,\calR\circ\Lambda(\rho_k)\bigr)^2,
}
where $\calR$ runs over all CPTP maps from $A'$ to $A$.

One key property of the irreversibility measure $\delta(\Lambda,\Omega_p)$ is that it equals the minimum failure probability to distinguish the states in the ensemble $\Omega'_{p,\Lambda}:=(\Lambda(\rho_1),\Lambda(\rho_2))_p$.
To describe this fact in concrete terms, we define the failure probability of discriminating the states in an ensemble $\Gamma_q:=(\sigma_1,\sigma_2)_q$ with a two-valued POVM $\mathbb{E}:=\{E_k\}_{k=1,2}$
as
\eq{
\PF(\Gamma_{q},\mathbb{E}):=\sum_{k=1,2}q_k\Tr[(1-E_k)\sigma_k],
}
where $q_1=q$ and $q_2=1-q$.
Then, for the emsemble $\Omega'_{p,\Lambda}:=(\Lambda(\rho_1),\Lambda(\rho_2))_p$, the following relation holds:
\eq{
\delta(\Lambda,\Omega_p)^2=\min_{\mathbb{Q}}\PF(\Omega'_{p,\Lambda},\mathbb{Q}),\label{PF_equal_delta}
}
where $\mathbb{Q}$ runs over two-valued POVMs on $A'$.

Furthermore, the irreversibility measure $\delta(\Lambda,\Omega_p)$ unifies many existing irreversibility measures: it encompasses recovery errors in quantum error correction and Petz-map recovery, and reproduces virtually all established error-and-disturbance metrics for quantum measurements \cite{ET2023}.

\begin{proofof}{\eqref{PF_equal_delta}}
First, we show $\min_{\mathbb{Q}}\PF(\Omega'_{p,\Lambda},\mathbb{Q})\ge\delta(\Lambda,\Omega_p)^2$.
For any two-valued POVM $\mathbb{Q}=\{Q_i\}_{i=1,2}$, we define a recovery map as
\eq{
\calR_{\mathbb{Q}}(...):=\sum_i\Tr[Q_i...]\rho_i.
}
For the convenience of writing, we introduce the complementary index $i_{\neg}$ of $i$ as the index opposite to $i$, i.e., when $i=1$, $i_{\neg}=2$, and when $i=2$, $i_{\neg}=1$. Then, we obtain
\eq{
\delta(\Lambda,\Omega_p)^2&\le\sum_{i}p_iD_F(\rho_i,\calR_{\mathbb{Q}}\circ\Lambda(\rho_i))^2\nonumber\\
&=1-\sum_ip_iF(\rho_i,\Tr[\Lambda^\dagger(Q_i)\rho_i]\rho_i+(1-\Tr[\Lambda^\dagger(Q_i)\rho_i])\rho_{i_{\neg}})^2\nonumber\\
&\stackrel{(a)}{=}1-\sum_ip_i\Tr[\Lambda^\dagger(Q_i)\rho_i]\nonumber\\
&=\sum_ip_i\Tr[(1-\Lambda^\dagger(Q_{i}))\rho_i]\nonumber\\
&=\PF(\Omega'_{p,\Lambda},\mathbb{Q}).
}
Here, in (a), we used 
\eq{
F(\rho_i,\Tr[\Lambda^\dagger(Q_i)\rho_i]\rho_i+(1-\Tr[\Lambda^\dagger(Q_i)\rho_i]\rho_{i_{\neg}})\rho_{i_{\neg}})&=\Tr[\sqrt{\sqrt{\rho_i}(\Tr[\Lambda^\dagger(Q_i)\rho_i]\rho_i+(1-\Tr[\Lambda^\dagger(Q_i)\rho_i]))\sqrt{\rho_i}}]\nonumber\\
&=\sqrt{\Tr[\Lambda^\dagger(Q_i)\rho_i]}.
} 
Therefore, we have obtained $\min_{\mathbb{Q}}\PF(\Omega'_{p,\Lambda},\mathbb{Q})\ge\delta(\Lambda,\Omega,p)^2$.

Next, let us show $\min_{\mathbb{Q}}\PF(\Omega'_{p,\Lambda},\mathbb{Q})\le\delta(\Lambda,\Omega_p)^2$.
We define $\calR_*$ as a recovery CPTP map satisfying
\eq{
\delta(\Lambda,\Omega_p)^2=\sum_{i}p_iD_F(\rho_i,\calR_*\circ\Lambda(\rho_i))^2.
}
Let $\mathbb{P}:=\{P_i\}_{i=1,2}$ be a two-valued PVM which discriminate $\rho_1$ and $\rho_2$ with probability 1. Using it, we define the following channel:
\eq{
\calP_\mathbb{P}(...):=\sum_i\Tr[P_i...]\rho_i.
}
Due to $\calP_\mathbb{P}(\rho_i)=\rho_i$,
\eq{
\delta(\Lambda,\Omega_p)^2\ge\sum_{i}p_iD_F(\rho_i,\calP_\mathbb{P}\circ\calR_*\circ\Lambda(\rho_i))^2.
}
Using $Q^*_i:=\calR^\dagger_*(P_i)$, we can rewrite the right-hand side as 
\eq{
\sum_{i}p_iD_F(\rho_i,\calP_\mathbb{P}\circ\calR_*\circ\Lambda(\rho_i))^2
=
1-\sum_ip_iF(\rho_i,\Tr[\Lambda^\dagger(Q^*_i)\rho_i]\rho_i+(1-\Tr[\Lambda^\dagger(Q^*_i)\rho_i])\rho_{i_{\neg}})^2.
}
Because $\rho_i$ and $\rho_{i_{\neg}}$ are orthogonal to each other, we obtain
\eq{
F(\rho_i,\Tr[\Lambda^\dagger(Q^*_i)\rho_i]\rho_i+(1-\Tr[\Lambda^\dagger(Q^*_i)\rho_i])\rho_{i_{\neg}})^2&=\Tr[\sqrt{\sqrt{\rho_i}(\Tr[\Lambda^\dagger(Q^*_i)\rho_i]\rho_i+(1-\Tr[\Lambda^\dagger(Q^*_i)\rho_i])\rho_{i_{\neg}})\sqrt{\rho_i}}]^2\nonumber\\
&=\Tr[\sqrt{\sqrt{\rho_i}(\Tr[\Lambda^\dagger(Q^*_i)\rho_i]\rho_i)\sqrt{\rho_i}}]^2\nonumber\\
&=\Tr[\Lambda^\dagger(Q^*_i)\rho_i].
}
Hence, we obtain
\eq{
\delta(\Lambda,\Omega_p)^2&\ge1-\sum_ip_i\Tr[\Lambda^\dagger(Q^*_i)\rho_i]\nonumber\\
&=\sum_ip_i\Tr[(1-\Lambda^\dagger(Q^*_i))\rho_i]\nonumber\\
&=\PF(\Omega'_{p,\Lambda},\mathbb{Q}^*).
}
Therefore, $\min_{\mathbb{Q}}\PF(\Omega'_{p,\Lambda},\mathbb{Q})\le\delta(\Lambda,\Omega_p)^2$ holds, and thus we obtain \eqref{PF_equal_delta}
\end{proofof}

\subsection{Resource gain and required channel power of state discrimination}
Next, we quantify both the resource gain obtained by state discrimination and the channel power required for such discrimination. 
To achieve this, we use a two-level quantum register system $K$ to store the classical outcome of the measurement, along with a particular orthonormal basis $\{\ket{k}\}_{k=1,2}$ on that register.
Below, we refer to the tuple $(K,\{\ket{k}\}_{k=1,2})$ as $\mathscr{K}$.

Using the notion of the register, we can define the resource gain and the required channel power to discriminate states in a two-valued ensemble.
We take a two-level register $\mathscr{K}=(K,\{\ket{k}\}_{k=1,2})$ and a two-valued ensemble $\Gamma_q:=(\sigma_1,\sigma_2)_q$ on a quantum system $S$, where $\sigma_1$ and $\sigma_2$ are orthogonal to each other.
Then, the resource gain via the discrimination of the states in $\Gamma_q$ is defined as
\eq{
\Mem(\Gamma_q|\mathscr{K}):=\max_{\sigma,\mathbb{E}}\left\{M_\sigma(\Lambda_{\mathbb{E}})\middle|\Lambda_{\mathbb{E}}(\sigma)=\sum_{k=1,2}q_k\sigma_k\otimes\ket{k}\bra{k}_K\right\} \label{eq:Memdef},
}
where the maximization ranges over $\sigma$ and the two-valued projective measurement channel 
$\Lambda_{\mathbb{E}}(\cdot):=\sum_{k=1,2}E_k\cdot E_k\otimes\ket{k}\bra{k}_K$ from $S$ to $SK$ whose two-valued PVM $\mathbb{E}:=\{E_k\}$ describes the test states $\sigma_1$ and $\sigma_2$ with the probability 1. 
Because of the definition, the quantity $\Mem(\Omega_p|\mathscr{K})$ describes the maximum resource gain from distinguishing the states $\rho_1$ and $\rho_2$ in $\Omega_p=(\rho_1,\rho_2)_p$.

Similarly, we define the channel power required to optimally distinguish the states in a two-valued ensemble $\Gamma'_{q}=(\sigma'_1,\sigma'_2)_q$ on whose $\sigma'_1$ and $\sigma'_2$ are not necessarily mutually orthogonal as
\eq{
\Mcos(\Gamma'_{q}|\mathscr{K}):=\min_{\mathbb{E}}\left\{M(\Lambda_{\mathbb{E}})\middle|\PF(\Gamma'_{q},\mathbb{E})=\min_{\mathbb{F}}\PF(\Gamma'_{q},\mathbb{F})\right\} ,\label{eq:Mcosdef}
}
where the minimization ranges over the two-valued POVM measurement channel $\Lambda_{\mathbb{E}}(\cdot):=\sum_{k=1,2}\sqrt{E_k}\cdot \sqrt{E_k}\otimes\ket{k}\bra{k}_K$.
Due to \eqref{PF_equal_delta}, the required power $\Mcos(\Omega'_{p,\Lambda}|\mathscr{K})$ satisfies
\eq{
\Mcos(\Omega'_{p,\Lambda}|\mathscr{K})=\min_{\mathbb{Q}}\left\{M(\Lambda_{\mathbb{Q}})\middle|\PF(\Omega'_{p,\Lambda},\mathbb{Q})=\delta(\Lambda,\Omega_p)^2\right\}.
}
This quantity represents the minimal channel power required to optimally discriminate between $\Lambda(\rho_1)$ and $\Lambda(\rho_2)$ in $\Omega'_{p,\Lambda}$.

To close this subsection, we establish the well-definedness of the quantities~\eqref{eq:Memdef} and \eqref{eq:Mcosdef}, i.e., we show that the maximization and minimizations in \eqref{eq:Memdef} and \eqref{eq:Mcosdef} are attainable.
We first consider the maximization in \eqref{eq:Memdef}.
For this, it suffices to show that the feasible region is compact and the objective function is contiunous.
Let $\mathcal{S}_S$ and $\mathcal{P}_S$ denote the sets of states (density operators) and of two-valued PVMs on $S$, respectively.
Due to the finite-dimensionality of $S$, the sets $\mathcal{S}_S$ and $\mathcal{P}_S$ are compact, and hence so is the product set $\mathcal{S}_S \times \mathcal{P}_S$.
Since the map $\mathcal{S}_S \times \mathcal{P}_S \ni (\sigma,\mathbb{E}) \mapsto \Lambda_\mathbb{E}(\sigma)$ is continuous, the feasible region of \eqref{eq:Memdef} is compact.
Moreover, since $\mathcal{S}_S \ni \rho \mapsto M(\rho)$ is continuous, we can see that $(\sigma, \Lambda) \mapsto M_\sigma (\Lambda)$ is continuous.
Therefore, since $\mathcal{P}_S \ni \mathbb{E} \mapsto \Lambda_\mathbb{E}$ is continuous, the objective function of \eqref{eq:Memdef} is continuous.

Next, we consider the minimization in \eqref{eq:Mcosdef}.
Since the set of two-valued POVMs on $S$ is compact and the function $\PF(\Gamma_q',\cdot)$ is continuous, the feasible region of \eqref{eq:Mcosdef} is nonempty and compact.
Let $\mathcal{CH}_S$ denote the set of quantum channels on $S$.
Since $\mathcal{CH}_S \ni \Lambda \mapsto M(\id_R \otimes \Lambda (\rho) )-M(\rho)$ is continuous for any fixed $R$ and $\rho \in \mathcal{S}_{RS}$, the function $\mathcal{CH}_S \ni \Lambda \mapsto M(\Lambda) = \sup_{R, \rho \in \mathcal{S}_{RS}}M(\id_R \otimes \Lambda (\rho) )-M(\rho)$ is lower semicontinuous.
From the continuity of $\mathbb{Q}\mapsto \Lambda_\mathbb{Q}$, this implies the lower semicontinuity of the objective function $\mathbb{Q}\mapsto M(\Lambda_\mathbb{Q})$.
Therefore, the minimum in \eqref{eq:Mcosdef} is attained.

\section{Lower bounds for the costs of arbitrary channels: Resource-irreversibility tradeoff and others}
\subsection{Useful lemma: Limitation of the conversion of measurement channels}
In this subsection, we show a lemma which limits the conversion of measurement channels.
This lemma is independent of resource theory: it describes a general feature of quantum measurements.
\begin{lemma}\label{Lemm:key_nonresource}
Let $A$ and $A'$ be two quantum systems.
For a two-valued PVM $\mathbb{P}:=\{P_i\}^{2}_{i=1}$ on $A$ and a two-valued POVM $\mathbb{Q}:=\{Q_i\}^{2}_{i=1}$, we define the error of approximation of $\mathbb{P}$ by $\Lambda^{\dagger}(\mathbb{Q}):=\{\Lambda^\dagger(Q_i)\}^{2}_{i=1}$ for a quantum system $\rho$ on $A$ as
\eq{
\epsilon(\rho,\mathbb{P},\mathbb{Q},\Lambda)^2:=\sum^{2}_{i=1}\ex{1-\Lambda^\dagger(Q_i)}_{P_i\rho P_i}.\label{def_eps}
}
We also assume that $\Lambda$ is realized by other systems $B$ and $B'$ satisfying $AB=A'B'$, a state $\eta^B$ on $B$ and a unitary $V$ on $AB$ as $\Lambda(\cdot)=\Tr_{B'}[V\cdot\otimes\eta^B V^\dagger]$.
Then, the following inequality holds:
\eq{
\|(\mathrm{id}^{R}\otimes\calV^{\dagger})\circ(\mathrm{id}^{B'}\otimes\Lambda_{\mathbb{Q}}^{A'})\circ\calV(\rho\otimes\eta)-\Lambda_{\mathbb{P}}^{A}(\rho)\otimes\eta^B\|_1&\le f(\epsilon(\rho,\mathbb{P},\mathbb{Q},\Lambda)),\label{target_lemma}\\
f(x)&:=4x+x^2.
}
Here, $\calV(...):=V...V^\dagger$, and $\Lambda_{\mathbb{P}}$ and $\Lambda_{\mathbb{Q}}$ are defined as
\eq{
\Lambda_{\mathbb{P}}(...)&:=\sum_iP_i(...)P_i\otimes\ket{j}\bra{j}_R,\\
\Lambda_{\mathbb{Q}}(...)&:=\sum_j\sqrt{Q_j}...\sqrt{Q_j}\otimes\ket{j}\bra{j}_R.
}
\end{lemma}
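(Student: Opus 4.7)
The plan is to reduce the claim to a vector-level estimate on a purification of $\rho\otimes\eta^B$. First, using the Stinespring form $\Lambda(\cdot)=\Tr_{B'}[V(\cdot\otimes\eta^B)V^\dagger]$ and defining the pullback operators $M_j:=V^\dagger(\sqrt{Q_j}\otimes I_{B'})V$ on $AB$ (with $\sqrt{Q_j}$ understood as acting on $A'$), a direct calculation rewrites the argument of the trace norm in \eqref{target_lemma} as
\[
\sum_j\bigl[M_j(\rho\otimes\eta^B)M_j-\tilde{P}_j(\rho\otimes\eta^B)\tilde{P}_j\bigr]\otimes\ket{j}\bra{j}_R,
\]
with $\tilde{P}_j:=P_j\otimes I_B$. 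Because the register $R$ is classical, the trace norm splits into $\sum_j\bigl\|M_j\rho_{AB}M_j-\tilde{P}_j\rho_{AB}\tilde{P}_j\bigr\|_1$ where $\rho_{AB}:=\rho\otimes\eta^B$. The identity $M_1^2+M_2^2=V^\dagger V=I$ will be used throughout.

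Second, I purify $\rho_{AB}$ to $\ket{\Psi}$ on $AB\otimes E$ and use data processing under $\Tr_E$ to bound each summand by $\bigl\|M_j\ketbra{\Psi}{\Psi}M_j-\tilde{P}_j\ketbra{\Psi}{\Psi}\tilde{P}_j\bigr\|_1$. Setting $\ket{\phi_j}:=\tilde{P}_j\ket{\Psi}$ and $\ket{\chi_j}:=M_j\ket{\Psi}$, I dualize the definition \eqref{def_eps}: via $I-M_j^2=M_{j_{\neg}}^2$,
\[
\Tr[(I-\Lambda^\dagger(Q_j))P_j\rho P_j]=\langle\Psi|\tilde{P}_jM_{j_{\neg}}^2\tilde{P}_j|\Psi\rangle=\bigl\|M_{j_{\neg}}\ket{\phi_j}\bigr\|^2,
\]
producing the key identity $\epsilon^2=\epsilon_1^2+\epsilon_2^2$ with $\epsilon_j:=\|M_{j_{\neg}}\ket{\phi_j}\|$.

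Third, the crux is bounding $\|(M_j-\tilde{P}_j)\ket{\Psi}\|$ by $\epsilon_j+\epsilon_{j_{\neg}}$. Decomposing $\ket{\Psi}=\ket{\phi_1}+\ket{\phi_2}$ yields $(M_j-\tilde{P}_j)\ket{\Psi}=M_j\ket{\phi_{j_{\neg}}}-(I-M_j)\ket{\phi_j}$. The first term has norm $\epsilon_{j_{\neg}}$ directly. For the second, the operator inequality $M_j^2\le M_j$---valid because $0\le M_j\le I$ implies $M_j(I-M_j)\ge0$---yields $\langle\phi_j|M_j|\phi_j\rangle\ge\langle\phi_j|M_j^2|\phi_j\rangle=p_j-\epsilon_j^2$ with $p_j:=\Tr[P_j\rho]$; expanding $\|(I-M_j)\ket{\phi_j}\|^2=2p_j-2\langle\phi_j|M_j|\phi_j\rangle-\epsilon_j^2$ and substituting shows this is at most $\epsilon_j^2$.

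Finally, I invoke the subnormalized pure-state identity $\bigl\|\ketbra{a}{a}-\ketbra{b}{b}\bigr\|_1\le\|\ket{a}+\ket{b}\|\cdot\|\ket{a}-\ket{b}\|$, which follows from $\ketbra{a}{a}-\ketbra{b}{b}=\tfrac12\bigl((\ket{a}+\ket{b})(\bra{a}-\bra{b})+(\ket{a}-\ket{b})(\bra{a}+\bra{b})\bigr)$. With $\|\ket{\chi_j}+\ket{\phi_j}\|\le\sqrt{p_j}+\sqrt{q_j}$ for $q_j:=\Tr[Q_j\Lambda(\rho)]$, summing over $j=1,2$ and applying Cauchy--Schwarz against $\sum_jp_j=\sum_jq_j=1$ gives a bound of $4\epsilon$, comfortably inside $f(\epsilon)=4\epsilon+\epsilon^2$. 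I expect the main obstacle to be bookkeeping: keeping the $AB\leftrightarrow A'B'$ identifications consistent through the composition $(\id^R\otimes\calV^\dagger)\circ(\id^{B'}\otimes\Lambda_{\mathbb{Q}}^{A'})\circ\calV$, and rigorously justifying the passage to vectors on the purifying system where it is invoked.
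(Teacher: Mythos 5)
Your proof is correct, and while it shares the paper's overall skeleton (split the trace norm over the classical register, reduce to a purification, bound each outcome separately, finish with Cauchy--Schwarz against $\sum_i p_i\epsilon_i^2=\epsilon^2$), the core estimates are genuinely different. The paper works on the output side of $V$: it writes $\ket{\Psi'_i}=VP_i\ket{\rho}\otimes\ket{\eta}$, applies a triangle inequality to separate $\|\sqrt{Q_i}\Psi'^{ii}\sqrt{Q_i}-\Psi'^{ii}\|_1$ (controlled by the gentle measurement lemma, giving $2p_i\epsilon_i$) from the cross-term contribution $\|\sqrt{Q_i}(\Psi'-\Psi'^{ii})\sqrt{Q_i}\|_1$ (controlled by a rank-two estimate $\|\gamma-\alpha\|_1\le 2\sqrt{\braket{\alpha|\alpha}\braket{\beta|\beta}}+\braket{\beta|\beta}$). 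You instead pull everything back to the input side via $M_j:=V^\dagger(\sqrt{Q_j}\otimes I)V$, exploit $M_1^2+M_2^2=I$ to identify $\epsilon_j=\|M_{j_\neg}\ket{\phi_j}\|$, and replace both of the paper's operator-level estimates with a single vector-level bound $\|\ket{\chi_j}-\ket{\phi_j}\|\le\epsilon_j+\epsilon_{j_\neg}$ (where the gentle-measurement role is played by the elementary inequality $M_j^2\le M_j$), followed by the factorization $\|\ketbra{a}{a}-\ketbra{b}{b}\|_1\le\|\ket{a}+\ket{b}\|\,\|\ket{a}-\ket{b}\|$. All the steps check out --- in particular the identity $\Tr[(1-\Lambda^\dagger(Q_j))P_j\rho P_j]=\|M_{j_\neg}\ket{\phi_j}\|^2$ and the computation $\|(I-M_j)\ket{\phi_j}\|^2\le\epsilon_j^2$ are correct --- and your final constant is $4\epsilon$, marginally tighter than the paper's $f(\epsilon)=4\epsilon+\epsilon^2$, so your bound implies the stated one. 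The only bookkeeping you should spell out in a full write-up is the tensor-factor reordering identifying $\Lambda_{\mathbb{P}}^A(\rho)\otimes\eta^B$ with $\sum_j\tilde P_j(\rho\otimes\eta)\tilde P_j\otimes\ketbra{j}{j}_R$, and the fact that $M_j$, $\tilde P_j$ are extended by the identity on the purifying system; neither is a gap.
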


\begin{proof}
We firstly consider the case where $\rho$ and $\eta$ are pure states.
Let us introduce the following symbols:
\eq{
\ket{\Psi'}&:=V\ket{\rho}\otimes\ket{\eta},\\
\ket{\Psi'_i}&:=VP_i\ket{\rho}\otimes\ket{\eta}.
}
With using the symbols $\Psi':=\ket{\Psi'}\bra{\Psi'}$ and $\Psi'^{ij}:=\ket{\Psi'_i}\bra{\Psi'_j}$, we obtain
\eq{
\Psi'=\Psi'^{11}+\Psi'^{22}+\Psi'^{12}+\Psi'^{21}.
}
We also introduce
\eq{
\Psi''_{\mathrm{ideal}}&:=\calV\circ\Lambda_{\mathbb{P}}(\rho)\otimes\eta\nonumber\\
&=\sum^{2}_{i=1}\Psi'_{ii}\otimes\ket{i}\bra{i}_R\label{ideal_decom}.
}
Note that 
\eq{
\Lambda_{\mathbb{Q}}(\Psi')=(\mathrm{id}^{B'}\otimes\Lambda_{\mathbb{Q}}^{A'})\circ\calV(\rho\otimes\eta).
}
Thus, since the unitary $\calV^\dagger$ does not change the trace norm,  the following relation is equivalent to \eqref{target_lemma}:
\eq{
\|\Lambda_{\mathbb{Q}}(\Psi')-\Psi''_{\mathrm{ideal}}\|_1\le f(\epsilon(\rho,\mathbb{P},\mathbb{Q},\Lambda)).\label{keyeq_lemm}
}

Let us show \eqref{keyeq_lemm}. Due to $\|\sum_i\sigma_i\otimes\ket{i}\bra{i}-\sum_i\sigma'_i\otimes\ket{i}\bra{i}\|_1=\sum_i\|\sigma_i-\sigma'_i\|_1$, $\Lambda_{\mathbb{Q}}(\Psi')=\sum^{2}_{i=1}\sqrt{Q_i}\Psi'\sqrt{Q_i}\otimes\ket{i}\bra{i}_R$ and \eqref{ideal_decom}, we obtain
\eq{
\|\mathbb{Q}(\Psi')-\Psi''_{\mathrm{ideal}}\|_1= \sum^{2}_{i=1}\|\sqrt{Q_i}\Psi'\sqrt{Q_i}-\Psi'^{ii}\|_1.
}
Due to the triangle inequality, we get
\eq{
\|\sqrt{Q_i}\Psi'\sqrt{Q_i}-\Psi'^{ii}\|_1\le\|\sqrt{Q_i}\Psi'^{ii}\sqrt{Q_i}-\Psi'^{ii}\|_1+\|\sqrt{Q_i}\Psi'\sqrt{Q_i}-\sqrt{Q_i}\Psi'^{ii}\sqrt{Q_i}\|_1\label{target_evaluate}.
}
Let us evaluate each term in the right-hand side of the above inequality.
Again, for the convenience of writing, we introduce the complementary index $i_{\neg}$ of $i$ as the index opposite to $i$, i.e., when $i=1$, $i_{\neg}=2$, and when $i=2$, $i_{\neg}=1$. 

Let us evaluate the first term of the right-hand side of \eqref{target_evaluate}.
We define the probability $p_i:=\Tr[P_i\rho]$ and the conditional error $\epsilon^2_i:=\ex{\Lambda^\dagger(Q_{i_{\neg}})}_{\frac{P_i\rho P_i}{p_i}}$.
Then, we obtain
\eq{
\frac{1}{p_i}\Tr[\Psi'^{ii}Q_i]&=\frac{1}{p_i}\Tr[\Lambda(P_i\rho P_i)Q_i]\nonumber\\
&=\ex{\Lambda^\dagger(Q_i)}_{\frac{P_i\rho P_i}{p_i}}\nonumber\\
&=1-\epsilon^2_i.
}
Therefore, due to the gentle measurement lemma \cite{wilde_text},
\eq{
\left\|\frac{1}{p_i}\Psi'^{ii}-\sqrt{Q_i}\frac{\Psi'^{ii}}{p_i}\sqrt{Q_i}\right\|_1\le2\sqrt{\epsilon^2_i}=2\epsilon_i.
}
Hence, the first term is evaluated as
\eq{
\|\sqrt{Q_i}\Psi'^{ii}\sqrt{Q_i}-\Psi'^{ii}\|_1\le2p_i\epsilon_i.
}

Next, we evaluate the second term of the right-hand side of \eqref{target_evaluate}.
Due to $\ket{\Psi'}=\ket{\Psi'_1}+\ket{\Psi'_2}$, we obtain
\eq{
\sqrt{Q_i}\ket{\Psi'}=\sqrt{Q_i}\ket{\Psi'_1}+\sqrt{Q_i}\ket{\Psi'_2}.
}
In general, for two non-normalized pure states $\ket{\alpha}$ and $\ket{\beta}$, the state $\ket{\gamma}:=\ket{\alpha}+\ket{\beta}$ satisfies
\eq{
\|\gamma-\alpha\|_1\le2\sqrt{\braket{\alpha|\alpha}\braket{\beta|\beta}}+\braket{\beta|\beta},
}
which is derived as
\eq{
\|\gamma-\alpha\|_1&=\|\ket{\beta}\bra{\beta}+\ket{\alpha}\bra{\beta}+\ket{\beta}\bra{\alpha}\|_1\nonumber\\
&\le\|\ket{\beta}\bra{\beta}\|_1+2\|\ket{\alpha}\bra{\beta}\|_1\nonumber\\
&=\Tr[\ket{\beta}\bra{\beta}]+2\Tr[\sqrt{\braket{\alpha|\alpha}\ket{\beta}\bra{\beta}}]\nonumber\\
&=2\sqrt{\braket{\alpha|\alpha}\braket{\beta|\beta}}+\braket{\beta|\beta}.
}
Therefore, the second term is evaluated as
\eq{
\|\sqrt{Q_i}\Psi'\sqrt{Q_i}-\sqrt{Q_i}\Psi'^{ii}\sqrt{Q_i}\|_1&\le2\sqrt{
\bra{\Psi'^{i}}Q_i\ket{\Psi'^{i}}}\sqrt{
\bra{\Psi'^{i_{\neg}}}Q_i\ket{\Psi'^{i_{\neg}}}}
+\bra{\Psi'^{i_{\neg}}}Q_i\ket{\Psi'^{i_{\neg}}}\nonumber\\
&\le2\sqrt{
\ex{\Psi'^{i}|\Psi'^{i}}}\sqrt{
\bra{\Psi'^{i_{\neg}}}Q_i\ket{\Psi'^{i_{\neg}}}}
+\bra{\Psi'^{i_{\neg}}}Q_i\ket{\Psi'^{i_{\neg}}}\nonumber\\
&=2\sqrt{p_i}\sqrt{\Tr[\Psi'^{i_{\neg}i_{\neg}}Q_i]}
+\Tr[\Psi'^{i_{\neg}i_{\neg}}Q_i]\nonumber\\
&=2\sqrt{p_i}\sqrt{\Tr[\Lambda(P_{i_{\neg}}\rho P_{i_{\neg}})Q_i]}
+\Tr[\Lambda(P_{i_{\neg}}\rho P_{i_{\neg}})Q_i]\nonumber\\
&=2\sqrt{p_ip_{i_{\neg}}\epsilon^2_{i_{\neg}}}+p_{i_{\neg}}\epsilon^2_{i_{\neg}}.
}

Because of \eqref{def_eps}, $\sum_ip_i\epsilon^2_i=\epsilon(\rho,\mathbb{P},\mathbb{Q},\Lambda)^2$ holds.
Therefore, we obtain \eqref{target_lemma} as
\eq{
\|\Lambda_{\mathbb{Q}}(\Psi')-\Psi''_{\mathrm{ideal}}\|_1&\le2\sum_ip_i\epsilon_i+2\sum_i\sqrt{p_i(1-p_i)\epsilon^2_i}+\sum_ip_i\epsilon^2_i\nonumber\\
&\le2\sqrt{\sum_ip_i\epsilon^2_i}+2\sqrt{\sum_i(1-p_i)}\sqrt{\sum_ip_i\epsilon^2_i}+\epsilon(\rho,\mathbb{P},\mathbb{Q},\Lambda)^2\nonumber\\
&\le4\epsilon(\rho,\mathbb{P},\mathbb{Q},\Lambda)+\epsilon(\rho,\mathbb{P},\mathbb{Q},\Lambda)^2\nonumber\\
&=f(\epsilon(\rho,\mathbb{P},\mathbb{Q},\Lambda)).
}

Next, we consider the case where $\rho$ and $\eta$ are not necessarily pure.
In this case, we can take their purifications $\ket{\rho}_{AR_A}$ and $\ket{\eta}_{AR_B}$ with proper reference systems $R_A$ and $R_B$.
And, we just repeat the above discussion where we substitute $AR_A$, $BR_B$, $I^{R_AR_B}\otimes V$, $\id^{R_A}\otimes\Lambda$, $\id^{R_A}\otimes\Lambda_{\mathbb{P}}$, $\id^{R_A}\otimes\Lambda_{\mathbb{Q}}$, $\ket{\rho}_{AR_A}$ and $\ket{\eta}_{AR_B}$ for  $A$, $B$, $V$, $\Lambda$, $\Lambda_{\mathbb{P}}$, $\Lambda_{\mathbb{Q}}$, $\ket{\rho}$ and $\ket{\eta}$.
Then, we obtain the target relation \eqref{target_lemma} again.
\end{proof}

\subsection{Lower bounds of costs of channels}

\subsubsection{The case of $M$ satisfying the conditions (i)--(iii)}
We first treat the case where the measure $M$ satisfies the conditions (i)--(iii).
\begin{theorem}[Resource-Irreversibility tradeoff]\label{Thm:RItradeoff_SM}
Let $A$ and $A'$ be quantum systems.
For any channel $\Lambda$ from $A$ to $A'$, any test ensemble $\Omega_p:=(\rho_1,\rho_2)_p$ whose $\rho_1$ and $\rho_2$ are mutually orthogonal, any two-level register $\mathscr{K}:=(K,\{\ket{k}\}_{k=1,2})$, and any resource measure $M$ satisfying the conditions (i)--(iii), the following inequalities hold:
\eq{
M_c(\Lambda)&\ge g(a_A,b_A)\frac{|\Mem(\Omega_p|\mathscr{K})-\Mcos(\Omega'_{p,\Lambda}|\mathscr{K})|^{\frac{a_A+b_A}{a_A+b_A-1}}_+}{(K_Af(\delta(\Lambda,\Omega_p))^{b_A})^{\frac{1}{a_A+b_A-1}}}\nonumber\\
&-|\Mem(\Omega_p|\mathscr{K})-\Mcos(\Omega'_{p,\Lambda}|\mathscr{K})|_+-c_{\max},\enskip (\mbox{for }a_A<\infty)\label{eq:RItradeoff_SM}\\
M_c(\Lambda)&\ge\frac{|\Mem(\Omega_p|\mathscr{K})-\Mcos(\Omega'_{p,\Lambda}|\mathscr{K})|_+}{a'_A+b_A}\log\frac{|\Mem(\Omega_p|\mathscr{K})-\Mcos(\Omega'_{p,\Lambda}|\mathscr{K})|_+}{K_Af(\delta(\Lambda,\Omega_p))^{b_A}(a'_A+b_A)}\nonumber\\
&-|\Mem(\Omega_p|\mathscr{K})-\Mcos(\Omega'_{p,\Lambda}|\mathscr{K})|_+\left(1+\frac{1}{a'_A+b_A}\right)
-c_{\max},\enskip(\mbox{for }a_A=\infty)\label{eq:RItradeoff_ainf_SM}
}
where $g(a_A,b_A)=\left(\frac{1}{a_A+b_A}\right)^{\frac{1}{a_A+b_A-1}}\left(1-\frac{1}{a_A+b_A}\right)$, $f(x)=4x+x^2$, and $|x|_+$ returns $x$ when $x$ is positive, and returns $0$ otherwise. 
Furthermore, when $|\Mem(\Omega_p|\mathscr{K})-\Mcos(\Omega'_{p,\Lambda}|\mathscr{K})|_+>0$ and $\delta(\Lambda,\Omega_p)=0$, i.e. when the implementation cost $M_c(\Lambda)$ diverges, the following inequalities hold:
\eq{
M^\epsilon_c(\Lambda)\ge& g(a_A,b_A)\frac{|\Mem(\Omega_p|\mathscr{K})-\Mcos(\Omega'_{p,\Lambda}|\mathscr{K})|^{\frac{a_A+b_A}{a_A+b_A-1}}_+}{(K_Af(\epsilon)^{b_A})^{\frac{1}{a_A+b_A-1}}}-|\Mem(\Omega_p|\mathscr{K})-\Mcos(\Omega'_{p,\Lambda}|\mathscr{K})|_+-c_{\max},\enskip (\mbox{for }a_A<\infty)\label{eq:REtradeoff_SM}\\
M^\epsilon_c(\Lambda)\ge&\frac{|\Mem(\Omega_p|\mathscr{K})-\Mcos(\Omega'_{p,\Lambda}|\mathscr{K})|_+}{a'_A+b_A}\log\frac{|\Mem(\Omega_p|\mathscr{K})-\Mcos(\Omega'_{p,\Lambda}|\mathscr{K})|_+}{K_Af(\epsilon)^{b_A}(a'_A+b_A)}\nonumber\\
&-|\Mem(\Omega_p|\mathscr{K})-\Mcos(\Omega'_{p,\Lambda}|\mathscr{K})|_+\left(1+\frac{1}{a'_A+b_A}\right)
-c_{\max},\enskip(\mbox{for }a_A=\infty)\label{eq:REtradeoff_ainf_SM}
}
\end{theorem}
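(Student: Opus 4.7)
The plan is to start from any Stinespring implementation $\Lambda(\cdot)=\Tr_{B'}[V(\cdot\otimes\eta)V^{\dagger}]$ with $V$ a free unitary, then invoke Lemma S.1 sequentially $m$ times on the \emph{same} copy of the ancilla $\eta$, sandwich the resource value of the resulting $m$-copy output between two bounds, and finally optimize over the integer $m$. The essential observation is that this sequential protocol uses only a single copy of $\eta$ while invoking the POVM channel $\Lambda_{\mathbb{Q}^{*}}$ $m$ times; this asymmetry prevents the trivial cancellation of $M(\eta)$ between the upper and lower bounds and leaves $M_c(\Lambda)$ visible after optimization, producing the advertised $X^{(a+b)/(a+b-1)}/f(\delta)^{b/(a+b-1)}$ scaling.

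First, I would fix $\rho^{*}$ and $\mathbb{P}^{*}=\{P^{*}_k\}$ achieving the supremum in $\Mem(\Omega_p|\mathscr{K})$, so that $P^{*}_k\rho^{*}P^{*}_k=p_k\rho_k$ (because $\mathbb{P}^{*}$ identifies the test states perfectly), and $\mathbb{Q}^{*}=\{Q^{*}_k\}$ attaining the infimum in $\Mcos(\Omega'_{p,\Lambda}|\mathscr{K})$, so $\PF(\Omega'_{p,\Lambda},\mathbb{Q}^{*})=\delta(\Lambda,\Omega_p)^{2}$. A direct substitution into the definition of the error $\epsilon(\rho^{*},\mathbb{P}^{*},\mathbb{Q}^{*},\Lambda)$ in Lemma S.1 then gives $\epsilon(\rho^{*},\mathbb{P}^{*},\mathbb{Q}^{*},\Lambda)=\delta(\Lambda,\Omega_p)$. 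Set $\Xi:=\calV^{\dagger}\circ(\id\otimes\Lambda_{\mathbb{Q}^{*}})\circ\calV$ and write $\Xi_i$ for its copy acting on the $i$-th factor of $A^{\otimes m}$ together with the shared $B$. Define $\sigma_0:=(\rho^{*})^{\otimes m}\otimes\eta$ and $\sigma_i:=\Xi_i(\sigma_{i-1})$. Applying Lemma S.1 at each step (with the other $A$-copies playing the role of the reference) and using that $\Xi_i$ is CPTP and therefore non-expansive in trace norm, an easy induction yields $\|\sigma_m-(\Lambda_{\mathbb{P}^{*}}(\rho^{*}))^{\otimes m}\otimes\eta\|_1\le m\,f(\delta(\Lambda,\Omega_p))$.

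Next, I would sandwich $M(\sigma_m)$. Upper bound: the map $(\rho^{*})^{\otimes m}\otimes\eta\mapsto\sigma_m$ is a comb composed of $m$ instances of $\Lambda_{\mathbb{Q}^{*}}$ interleaved with the free unitaries $\calV,\calV^{\dagger}$, so properties (p1), (p2) and additivity (ii) yield $M(\sigma_m)\le m\,M(\rho^{*})+M(\eta)+m\,\Mcos(\Omega'_{p,\Lambda}|\mathscr{K})$. Lower bound: monotonicity (i) under the partial trace over $B$, Hölder continuity (iii) with constant $K_A$ (the register $K$ being resource-inert in the theories of interest), and additivity (ii) applied to $(\Lambda_{\mathbb{P}^{*}}(\rho^{*}))^{\otimes m}$ together give $M(\sigma_m)\ge m\,M(\rho^{*})+m\,\Mem(\Omega_p|\mathscr{K})-m^{a_A}K_A(m\,f(\delta))^{b_A}-c_{\max}$. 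Writing $X:=|\Mem(\Omega_p|\mathscr{K})-\Mcos(\Omega'_{p,\Lambda}|\mathscr{K})|_+$ and subtracting the two bounds gives $mX-m^{a_A+b_A}K_Af(\delta(\Lambda,\Omega_p))^{b_A}-c_{\max}\le M(\eta)$; since this holds for every implementation of $\Lambda$, taking the infimum produces the same bound on $M_c(\Lambda)$.

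Finally, optimize over positive integer $m$. The continuous optimum is $m^{*}=\bigl(X/[(a_A+b_A)K_Af(\delta)^{b_A}]\bigr)^{1/(a_A+b_A-1)}$; substituting reproduces the leading term $g(a_A,b_A)\,X^{(a_A+b_A)/(a_A+b_A-1)}/(K_Af(\delta)^{b_A})^{1/(a_A+b_A-1)}$ of the theorem, and rounding $m^{*}$ to the nearest positive integer (or falling back to $m=1$ when $m^{*}<1$) absorbs the $-X-c_{\max}$ correction. The approximate bound \eqref{eq:REtradeoff_SM} follows by running the same argument on an $\epsilon$-close implementation of $\Lambda$; when $\delta(\Lambda,\Omega_p)=0$, the triangle inequality for the channel-purified distance bounds the irreversibility of the implemented channel by $\epsilon$, so $f(\delta)$ is replaced by $f(\epsilon)$. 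The $a_A=\infty$ case proceeds identically except that $m^{a_A}$ is replaced by $e^{a'_Am}$, yielding a logarithmic optimum and the log-form bound in \eqref{eq:RItradeoff_ainf_SM}. The main obstacle is the Hölder continuity step: reducing the effective constant to $K_A$ (rather than $K_{AK}$ or $K_{AKB}$) rests on carefully tracing out $B$ and treating the register $K$ as free, which is immediate in the examples discussed later but demands some bookkeeping in the general setting.
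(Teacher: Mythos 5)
Your proposal is correct and follows essentially the same route as the paper: it uses the same conversion lemma (Lemma \ref{Lemm:key_nonresource}), the same $m$-fold sequential application of $\calV^{\dagger}\circ(\id\otimes\Lambda_{\mathbb{Q}})\circ\calV$ to $m$ copies of $\rho^{*}$ sharing a single ancilla $\eta$, the same sandwich of $M(\sigma_m)$ via properties (p1)/(p2), additivity, and H\"{o}lder continuity, and the same optimization over the integer $m$ (including the $a_A=\infty$ variant and the reduction $\epsilon(\rho^{*},\mathbb{P}^{*},\mathbb{Q}^{*},\Lambda)=\delta(\Lambda,\Omega_p)$ at the optimizers). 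The only organizational difference is that you specialize to the optimal $\rho^{*},\mathbb{P}^{*},\mathbb{Q}^{*}$ at the outset, whereas the paper first proves the bound for arbitrary $\rho,\mathbb{P},\mathbb{Q}$ (its Lemma \ref{Lemm:key_resource}) and then specializes; the caveat you flag about the constant $K_A$ versus $K_{AK}$ is likewise present, and glossed over, in the paper itself.
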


If $\Mem(\Omega_p|\mathscr{K})-\Mcos(\Omega'_{p,\Lambda}|\mathscr{K})\le0$ holds for every test ensemble $\Omega_p$, the cost and irreversibility need not be inversely proportional. Even in this situation, the following bound always holds:
\begin{theorem}\label{Thm:RPtradeoff_SM}
Let $A$ and $A'$ be quantum systems.
For any channel $\Lambda$ from $A$ to $A'$, any test ensemble $\Omega_p:=(\rho_1,\rho_2)_p$ whose $\rho_1$ and $\rho_2$ are mutually orthogonal,  any two-level register $\mathscr{K}:=(K,\{\ket{k}\}_{k=1,2})$, any resource measure $M$ satisfying the conditions (i)--(iii), and any two-valued POVM $\mathbb{Q}:=\{Q_k\}^{2}_{k=1}$ on $A'$ and its measurement channel $\Lambda_{\mathbb{Q}}(\cdot):=\sum_k\sqrt{Q_k}\cdot\sqrt{Q_k}\otimes\ket{k}\bra{k}_K$, the following inequality holds:
\eq{
M_c(\Lambda)\ge& g(a_A,b_A)\frac{|\Mem(\Omega_p|\mathscr{K})-M(\Lambda_{\mathbb{Q}})|^{\frac{a_A+b_A}{a_A+b_A-1}}_+}{\left(K_Af\left(\sqrt{\PF(\Omega'_{p,\Lambda},\mathbb{Q})}\right)^{b_A}\right)^{\frac{1}{a_A+b_A-1}}}-|\Mem(\Omega_p|\mathscr{K})-M(\Lambda_{\mathbb{Q}})|_+-c_{\max},\enskip(\mbox{for }a_A=\infty)\label{eq:RPtradeoff_SM}\\
M_c(\Lambda)\ge&\frac{|\Mem(\Omega_p|\mathscr{K})-M(\Lambda_{\mathbb{Q}})|_+}{a'_A+b_A}\log\frac{|\Mem(\Omega_p|\mathscr{K})-M(\Lambda_{\mathbb{Q}})|_+}{K_Af\left(\sqrt{\PF(\Omega'_{p,\Lambda},\mathbb{Q})}\right)^{b_A}(a'_A+b_A)}\nonumber\\
&-|\Mem(\Omega_p|\mathscr{K})-M(\Lambda_{\mathbb{Q}})|_+\left(1+\frac{1}{a'_A+b_A}\right)
-c_{\max},\enskip(\mbox{for }a_A=\infty).\label{eq:RPtradeoff_ainf_SM}
}
\end{theorem}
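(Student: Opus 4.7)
The plan is to mirror the architecture of the proof of Theorem~\ref{Thm:RItradeoff_SM} and apply Lemma~\ref{Lemm:key_nonresource} with the given arbitrary two-valued POVM $\mathbb{Q}$ in place of an optimal one. Once this substitution is made, the error parameter $\epsilon$ appearing in Lemma~\ref{Lemm:key_nonresource} becomes $\sqrt{\PF(\Omega'_{p,\Lambda},\mathbb{Q})}$ rather than $\delta(\Lambda,\Omega_p)$, and the channel-cost parameter becomes $M(\Lambda_{\mathbb{Q}})$ rather than $\Mcos(\Omega'_{p,\Lambda}|\mathscr{K})$.

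First I would fix an optimal implementation $\Lambda(\cdot)=\Tr_{B'}[\calV(\cdot\otimes\eta)]$ with $\calV(\cdot)=V\cdot V^\dagger$ a free unitary channel on $AB$ and $\eta$ on $B$ attaining $M(\eta)=M_c(\Lambda)$. Then pick a state $\rho_*$ on $A$ and a two-valued PVM $\mathbb{P}_*=\{P_1^*,P_2^*\}$ on $A$ realizing the supremum in the definition of $\Mem(\Omega_p|\mathscr{K})$; these satisfy $P_k^*\rho_* P_k^*=p_k\rho_k$ and $M(\Lambda_{\mathbb{P}_*}(\rho_*))-M(\rho_*)=\Mem(\Omega_p|\mathscr{K})$. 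A direct computation using $P_k^*\rho_* P_k^*=p_k\rho_k$ gives
\begin{equation*}
\epsilon(\rho_*,\mathbb{P}_*,\mathbb{Q},\Lambda)^2=\sum_{i=1}^{2} p_i\,\Tr[(1-Q_i)\Lambda(\rho_i)]=\PF(\Omega'_{p,\Lambda},\mathbb{Q}),
\end{equation*}
so Lemma~\ref{Lemm:key_nonresource} yields $\|\sigma_{\mathrm{act}}-\sigma_{\mathrm{id}}\|_1\le f(\sqrt{\PF(\Omega'_{p,\Lambda},\mathbb{Q})})$, where $\sigma_{\mathrm{act}}:=(\id\otimes\calV^\dagger)\circ(\id_{B'}\otimes\Lambda_{\mathbb{Q}}^{A'})\circ\calV(\rho_*\otimes\eta)$ and $\sigma_{\mathrm{id}}:=\Lambda_{\mathbb{P}_*}(\rho_*)\otimes\eta$.

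I would then replicate the $n$-copy amplification used in the proof of Theorem~\ref{Thm:RItradeoff_SM}. Additivity~(ii) evaluates $M(\sigma_{\mathrm{id}}^{\otimes n})=n[\Mem(\Omega_p|\mathscr{K})+M(\rho_*)+M(\eta)]$ exactly; the subadditivity properties~(p1)--(p2) of the resource-generating power together with~(i) yield $M(\sigma_{\mathrm{act}}^{\otimes n})\le n[M(\rho_*)+M(\eta)+M(\Lambda_{\mathbb{Q}})]$; subadditivity of the trace distance gives $\|\sigma_{\mathrm{id}}^{\otimes n}-\sigma_{\mathrm{act}}^{\otimes n}\|_1\le n f(\sqrt{\PF(\Omega'_{p,\Lambda},\mathbb{Q})})$; and H\"{o}lder continuity~(iii), applied with system constants $K_A$, $a_A$, $b_A$, then gives $|M(\sigma_{\mathrm{id}}^{\otimes n})-M(\sigma_{\mathrm{act}}^{\otimes n})|\le n^{a_A+b_A} K_A f(\sqrt{\PF(\Omega'_{p,\Lambda},\mathbb{Q})})^{b_A}+c_{\max}$. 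Combining these ingredients and optimizing over $n$ exactly as in Theorem~\ref{Thm:RItradeoff_SM} extracts the power-law lower bound~\eqref{eq:RPtradeoff_SM} for $a_A<\infty$ and the logarithmic bound~\eqref{eq:RPtradeoff_ainf_SM} for $a_A=\infty$.

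The main technical obstacle is the bookkeeping that keeps $M(\eta)=M_c(\Lambda)$ on the correct side of the final inequality: a naive accounting has $M(\eta)$ canceling between the additive expression for $M(\sigma_{\mathrm{id}}^{\otimes n})$ and the upper bound on $M(\sigma_{\mathrm{act}}^{\otimes n})$, which would yield only a universal bound on $\Mem(\Omega_p|\mathscr{K})-M(\Lambda_{\mathbb{Q}})$ independent of $M_c(\Lambda)$. Resolving this requires inserting $M(\eta)$ through the more refined amplification of Theorem~\ref{Thm:RItradeoff_SM}, in which the balance between the H\"{o}lder penalty $n^{a_A+b_A-1} K_A f^{b_A}$ and an ancilla-linked offset of order $M(\eta)$ produces the characteristic exponent $(a_A+b_A)/(a_A+b_A-1)$ appearing in the theorem.
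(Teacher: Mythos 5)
Your reduction step is exactly the paper's: the paper proves this theorem by picking $(\rho_*,\mathbb{P}_*)$ attaining $\Mem(\Omega_p|\mathscr{K})$, observing $\epsilon(\rho_*,\mathbb{P}_*,\mathbb{Q},\Lambda)^2=\PF(\Omega'_{p,\Lambda},\mathbb{Q})$, and substituting into the workhorse Lemma~\ref{Lemm:key_resource}. (Note the logical order: in the paper Theorem~\ref{Thm:RItradeoff_SM} is \emph{derived from} this theorem, so deferring to ``the proof of Theorem~\ref{Thm:RItradeoff_SM}'' is backwards; the content you need is Lemma~\ref{Lemm:key_resource}.)

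The genuine gap is in the amplification. Your construction takes $n$ independent tensor copies $\sigma_{\mathrm{act}}^{\otimes n}$ vs.\ $\sigma_{\mathrm{id}}^{\otimes n}$, each carrying its own copy of $\eta$. As you yourself observe, $M(\eta)$ then cancels between $M(\sigma_{\mathrm{id}}^{\otimes n})=n[\Mem+M(\rho_*)+M(\eta)]$ and $M(\sigma_{\mathrm{act}}^{\otimes n})\le n[M(\rho_*)+M(\eta)+M(\Lambda_{\mathbb{Q}})]$, leaving a statement with no dependence on $M_c(\Lambda)$ at all. This is not a bookkeeping problem to be patched by ``more refined amplification''; it is a structural failure of the tensor-power setup, and the fix is a different construction. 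In Lemma~\ref{Lemm:key_resource} the \emph{single} ancilla $\eta$ is reused catalytically: one applies $\Lambda'_{\mathbb{P}}=\calV^\dagger\circ(\id\otimes\Lambda_{\mathbb{Q}})\circ\calV$ sequentially to $m$ fresh copies $A_1,\dots,A_m$ all coupled to the \emph{same} $B$, proves by induction that the output is within $mf(\epsilon)$ in trace norm of $\bigl(\bigotimes_k\Lambda_{\mathbb{P}}(\rho)^{A_k}\bigr)\otimes\eta$, traces out $B$ (so H\"older continuity is applied on $(AK)^{\otimes m}$ with constant $K_A$, not on a system containing $B$), and obtains $M(\eta)\ge m|M_\rho(\Lambda_{\mathbb{P}})-M(\Lambda_{\mathbb{Q}})|_+-m^{a_A+b_A}K_Af(\epsilon)^{b_A}-c_{\max}$, in which $M(\eta)$ appears once while the gain grows linearly in $m$; optimizing over $m$ then gives the stated exponent. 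Without this shared-ancilla sequential scheme and its inductive error bound, your argument does not reach the theorem.
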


Theorem \ref{Thm:RPtradeoff_SM} restricts the cost of a channel that approximately causes a pairwise reversible conversion. 
\begin{corollary}\label{Coro:REtradeoff_conversion_SM}
Let $A$ and $A'$ be quantum systems.
Let a CPTP map $\Lambda:A\rightarrow A'$ approximately convert an orthogonal pair $\Omega:=(\rho_1, \rho_2)$ on $A$ to an orthogonal pair $\Omega'':=(\rho''_1, \rho''_2)$ on $A'$ as
\eq{
D_F(\Lambda(\rho_i),\rho''_i)\le\epsilon,\enskip (i=1,2).\label{cond_approx_SM}
}
Then, the following holds for any resource measure satisfying the conditions (i)--(iii) and any two-level register $\mathscr{K}:=(K,\{\ket{k}\}_{k=1,2})$:
\eq{
M_c(\Lambda)&\ge g(a_A,b_A)\frac{|\Mem(\Omega|\mathscr{K})-\Mcos(\Omega''|\mathscr{K})|^{\frac{a_A+b_A}{a_A+b_A-1}}_+}{(K_Af(\epsilon)^{b_A})^{\frac{1}{a_A+b_A-1}}}-|\Mem(\Omega|\mathscr{K})-\Mcos(\Omega''|\mathscr{K})|_+-c_{\max},\enskip(\mbox{for }a_A<\infty)\label{eq:REtradeoff_conversion_SM}\\
M_c(\Lambda)\ge&\frac{|\Mem(\Omega|\mathscr{K})-\Mcos(\Omega''|\mathscr{K})|_+}{a'_A+b_A}\log\frac{|\Mem(\Omega|\mathscr{K})-\Mcos(\Omega''|\mathscr{K})|_+}{K_Af(\epsilon)^{b_A}(a'_A+b_A)}\nonumber\\
&-|\Mem(\Omega|\mathscr{K})-\Mcos(\Omega''|\mathscr{K})|_+\left(1+\frac{1}{a'_A+b_A}\right)
-c_{\max},\enskip(\mbox{for }a_A=\infty)\label{eq:REtradeoff_conversion_SM_ainf}
}
Here, $\Mem(\Omega|\mathscr{K}):=\max_p \Mem(\Omega_p|\mathscr{K})$ and $\Mcos(\Omega''|\mathscr{K}):=\min_p \Mem(\Omega''_p|\mathscr{K})$.
\end{corollary}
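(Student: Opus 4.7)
The plan is to deduce Corollary S.1 directly from Theorem S.2 by a suitable choice of the prior $p$ and the two-valued POVM $\mathbb{Q}$. First I would select $p \in (0,1)$ attaining $\Mem(\Omega_p|\mathscr{K}) = \Mem(\Omega|\mathscr{K})$; then I would pick any $p_2 \in (0,1)$ together with a two-valued POVM $\mathbb{Q}^* = \{Q^*_1, Q^*_2\}$ attaining the minimum in the definition of $\Mcos(\Omega''_{p_2}|\mathscr{K})$, so that $M(\Lambda_{\mathbb{Q}^*}) = \Mcos(\Omega''_{p_2}|\mathscr{K})$ and $\PF(\Omega''_{p_2}, \mathbb{Q}^*) = 0$ (zero failure is achievable because $\rho''_1 \perp \rho''_2$). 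Because orthogonality of $\rho''_1, \rho''_2$ means the zero-failure constraint $\PF(\Omega''_{p_2}, \mathbb{Q}) = 0$ for $p_2 \in (0,1)$ is equivalent to the $p_2$-independent support inclusion $P''_i \le Q_i$ ($i=1,2$), where $P''_i$ denotes the support projector of $\rho''_i$, the quantity $\Mcos(\Omega''_{p_2}|\mathscr{K})$ is constant in $p_2 \in (0,1)$ and equals $\Mcos(\Omega''|\mathscr{K})$; in particular the interior restriction $p_2\in(0,1)$ causes no loss.

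Next I would prove the key estimate $\PF(\Omega'_{p,\Lambda}, \mathbb{Q}^*) \le \epsilon^2$ from the near-conversion hypothesis \eqref{cond_approx_SM} and the support inclusion $P''_i \le Q^*_i$ obtained above. The inclusion gives $I - Q^*_i \le I - P''_i$, and Uhlmann's theorem yields the standard support inequality $\Tr[(I - P''_i)\tau] \le D_F(\rho''_i, \tau)^2$ for any state $\tau$, so that $\Tr[(I - Q^*_i)\Lambda(\rho_i)] \le \Tr[(I-P''_i)\Lambda(\rho_i)] \le D_F(\rho''_i, \Lambda(\rho_i))^2 \le \epsilon^2$ for $i=1,2$. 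Averaging with weights $p$ and $1-p$ yields $\PF(\Omega'_{p,\Lambda}, \mathbb{Q}^*) \le \epsilon^2$, hence $\sqrt{\PF(\Omega'_{p,\Lambda},\mathbb{Q}^*)} \le \epsilon$.

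Finally, I would substitute into Theorem S.2. Using the monotonicity of $f(x) = 4x + x^2$ to get $f(\sqrt{\PF(\Omega'_{p,\Lambda},\mathbb{Q}^*)}) \le f(\epsilon)$, together with $\Mem(\Omega_p|\mathscr{K}) = \Mem(\Omega|\mathscr{K})$ and $M(\Lambda_{\mathbb{Q}^*}) = \Mcos(\Omega''|\mathscr{K})$, the bounds \eqref{eq:RPtradeoff_SM} and \eqref{eq:RPtradeoff_ainf_SM} of Theorem S.2 collapse respectively to the claimed \eqref{eq:REtradeoff_conversion_SM} and \eqref{eq:REtradeoff_conversion_SM_ainf}. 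The main obstacle I anticipate is establishing the sharp quadratic estimate $\PF \le \epsilon^2$: a generic Fuchs--van de Graaf application would only give $\PF = O(\epsilon)$, thereby weakening the denominator from $f(\epsilon)$ to $f(\sqrt{\epsilon})$ and degrading the resulting bound. Recovering the correct quadratic behavior is precisely where the orthogonality of the target pair $\Omega''$ is essential; it enters through the support inclusion $P''_i \le Q^*_i$ and is exploited via Uhlmann's theorem.
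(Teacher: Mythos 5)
Your proposal is correct and follows the same overall strategy as the paper: reduce the corollary to Theorem \ref{Thm:RPtradeoff_SM} by exhibiting a POVM $\mathbb{Q}^*$ that perfectly discriminates $\rho''_1,\rho''_2$, showing that the feasible set $\{\mathbb{Q}:\PF(\Omega''_{p_2},\mathbb{Q})=0\}$ is independent of $p_2$ (so $\Mcos(\Omega''_{p_2}|\mathscr{K})$ is constant), and establishing the key quadratic estimate $\PF(\Omega'_{p,\Lambda},\mathbb{Q}^*)\le\epsilon^2$ before optimizing over $p$. The only substantive difference is how that estimate is obtained. The paper packages the ensemble into classical–quantum states $\sum_i p_i\,\Lambda(\rho_i)\otimes\dm{i}$ and $\sum_i p_i\,\rho''_i\otimes\dm{i}$, bounds their purified distance by $\epsilon$ via the identity $D_F\bigl(\sum_ip_i\rho_i\otimes\dm{i},\sum_ip_i\sigma_i\otimes\dm{i}\bigr)^2=1-(\sum_ip_iF(\rho_i,\sigma_i))^2$, and then applies data processing of $D_F$ under $\Lambda_{\mathbb{Q}}$ and the partial trace. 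You instead work term by term: from $\PF(\Omega''_{p_2},\mathbb{Q}^*)=0$ you extract the operator inequality $P''_i\le Q^*_i$ (which is correct: $\Tr[(1-Q^*_i)\rho''_i]=0$ with $1-Q^*_i\ge0$ forces $(1-Q^*_i)P''_i=0$, hence $Q^*_i\ge P''_i$), and then use $F(\rho''_i,\tau)^2\le\Tr[P''_i\tau]$ --- itself a data-processing consequence for the binary measurement $\{P''_i,1-P''_i\}$ --- to get $\Tr[(1-Q^*_i)\Lambda(\rho_i)]\le D_F(\rho''_i,\Lambda(\rho_i))^2\le\epsilon^2$ for each $i$, and average. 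Both routes rest on the same monotonicity of fidelity; yours is slightly more elementary and makes explicit where the orthogonality of $\Omega''$ enters (through the support inclusion), while the paper's cq-state formulation reuses machinery already developed for the proof of \eqref{PF_equal_delta}. Either way the final substitution into \eqref{eq:RPtradeoff_SM} and \eqref{eq:RPtradeoff_ainf_SM}, using monotonicity of $f$, is identical.
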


\textbf{Deriving \eqref{eq:RItradeoff}, \eqref{eq:REtradeoff}, \eqref{eq:RPtradeoff}, and \eqref{eq:REtradeoff_conversion} in the main text from Theorems \ref{Thm:RItradeoff_SM} and \ref{Thm:RPtradeoff_SM} and Corollary \ref{Coro:REtradeoff_conversion_SM}:} When $a_S=b_S=1$, \eqref{eq:RItradeoff_SM} reduces to \eqref{eq:RItradeoff} in the main text as
\eq{
M_c(\Lambda)&\ge\frac{|\Mem(\Omega_p|\mathscr{K})-\Mcos(\Omega'_{p,\Lambda}|\mathscr{K})|^2_+}{16K_A\delta(\Lambda,\Omega_p)\left(1+\frac{1}{4}\delta(\Lambda,\Omega_p)\right)}-|\Mem(\Omega_p|\mathscr{K})-\Mcos(\Omega'_{p,\Lambda}|\mathscr{K})|_+-c_{\max}\nonumber\\
&\ge\frac{|\Mem(\Omega_p|\mathscr{K})-\Mcos(\Omega'_{p,\Lambda}|\mathscr{K})|^2_+}{16K_A\delta(\Lambda,\Omega_p)}-\frac{|\Mem(\Omega_p|\mathscr{K})-\Mcos(\Omega'_{p,\Lambda}|\mathscr{K})|^2_+}{64K_A}-|\Mem(\Omega_p|\mathscr{K})-\Mcos(\Omega'_{p,\Lambda}|\mathscr{K})|_+-c_{\max}\nonumber\\
&=\frac{|\Mem(\Omega_p|\mathscr{K})-\Mcos(\Omega'_{p,\Lambda}|\mathscr{K})|^2_+}{16K_A\delta(\Lambda,\Omega_p)}-c_{\max}'\label{eq:RItradeoff2_SM}.
}
Here we used $\frac{1}{1+x}\ge1-x$ for $x\ge0$ in the second line and $c'_{\max}:=c_{\max}+|\Mem(\Omega_p|\mathscr{K})-\Mcos(\Omega'_{p,\Lambda}|\mathscr{K})|_++\frac{|\Mem(\Omega_p|\mathscr{K})-\Mcos(\Omega'_{p,\Lambda}|\mathscr{K})|^2_+}{64K_A}$ in the third line.
In the same manner, when $a_S=b_S=1$, \eqref{eq:REtradeoff_SM},  \eqref{eq:RPtradeoff_SM} and \eqref{eq:REtradeoff_conversion_SM} reduce to \eqref{eq:REtradeoff}, \eqref{eq:RPtradeoff}, and \eqref{eq:REtradeoff_conversion}, respectively.

To derive Theorems \ref{Thm:RItradeoff_SM} and \ref{Thm:RPtradeoff_SM} and Corollary \ref{Coro:REtradeoff_conversion_SM}, we firstly show the following lemma:
\begin{lemma}\label{Lemm:key_resource}
Let $A$ and $A'$ be quantum systems.
Consider a CPTP map $\Lambda:A\rightarrow A'$. 
For a two-valued PVM $\mathbb{P}:=\{P_i\}^{2}_{i=1}$ on $A$ and a two-valued POVM $\mathbb{Q}:=\{Q_i\}^{2}_{i=1}$, we define the error of approximation of $\mathbb{P}$ by $\Lambda^{\dagger}(\mathbb{Q}):=\{\Lambda^\dagger(Q_i)\}^{2}_{i=1}$ for a quantum system $\rho$ on $A$ as
\eq{
\epsilon(\rho,\mathbb{P},\mathbb{Q},\Lambda)^2:=\sum^{2}_{i=1}\ex{1-\Lambda^\dagger(Q_i)}_{P_i\rho P_i}.
}
We also define a two-level register $\mathscr{K}=(K,\{\ket{k}\}_{k=1,2})$ and two quantum channels $\Lambda_{\mathbb{P}}:A\rightarrow AK$ and $\Lambda_{\mathbb{Q}}:A'\rightarrow A'K$ as
\eq{
\Lambda_{\mathbb{P}}(\cdot)&:=\sum_kP_k\cdot P_k\otimes\ket{k}\bra{k}_K,\\
\Lambda_{\mathbb{Q}}(\cdot)&:=\sum_k\sqrt{Q_k}\cdot\sqrt{Q_k}\otimes\ket{k}\bra{k}_K.
}
Then, for any resource measure $M$ satisfying the conditions (i)--(iii), the following inequalities hold:
\eq{
M_c(\Lambda)&\ge g(a_A,b_A)\frac{|M_\rho(\Lambda_{\mathbb{P}})-M(\Lambda_{\mathbb{Q}})|^{\frac{a_A+b_A}{a_A+b_A-1}}_+}{(K_Af(\epsilon(\rho,\mathbb{P},\mathbb{Q},\Lambda))^{b_A})^{\frac{1}{a_A+b_A-1}}}-|M_\rho(\Lambda_{\mathbb{P}})-M(\Lambda_{\mathbb{Q}})|_+-c_{\max},\enskip (\mbox{for }a_A<\infty)\label{eq:key_resource_SM}\\
M_c(\Lambda)&\ge\frac{|M_\rho(\Lambda_{\mathbb{P}})-M(\Lambda_{\mathbb{Q}})|_+}{a'_A+b_A}\log\frac{|M_\rho(\Lambda_{\mathbb{P}})-M(\Lambda_{\mathbb{Q}})|_+}{K_Af(\epsilon(\rho,\mathbb{P},\mathbb{Q},\Lambda))^{b_A}(a'_A+b_A)}\nonumber\\
&-|M_\rho(\Lambda_{\mathbb{P}})-M(\Lambda_{\mathbb{Q}})|_+\left(1+\frac{1}{a'_A+b_A}\right)
-c_{\max},\enskip(\mbox{for }a_A=\infty)\label{eq:key_resource_ainf_SM}
}
where $g(a_A,b_A)=\left(\frac{1}{a_A+b_A}\right)^{\frac{1}{a_A+b_A-1}}\left(1-\frac{1}{a_A+b_A}\right)$, $f(x)=4x+x^2$, and $|x|_+$ returns $x$ when $x$ is positive, and returns $0$ otherwise. 
\end{lemma}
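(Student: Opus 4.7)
The plan is to combine Lemma \ref{Lemm:key_nonresource} (a resource-theory-independent, purely operational trace-distance bound) with the three structural properties (i)--(iii) of $M$, amplify the resulting trace-distance bound into a resource bound by an $m$-copy tensoring trick, and close the argument with a Young-type optimization over $m$ that produces the Young exponents $(a_A+b_A)/(a_A+b_A-1)$ and $1/(a_A+b_A-1)$ seen in \eqref{eq:key_resource_SM}.

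Concretely, I would fix an arbitrary exact dilation $\Lambda(\cdot)=\Tr_{B'}[\calV(\cdot\otimes\eta)]$ with $\calV\in\mathfrak{U}_F$ and $M(\eta)=M_c(\Lambda)$, apply Lemma \ref{Lemm:key_nonresource} to get $\|\sigma_L-\sigma_R\|_1\le f(\epsilon(\rho,\mathbb{P},\mathbb{Q},\Lambda))$ on $AKB$ with $\sigma_R:=\Lambda_\mathbb{P}(\rho)\otimes\eta$ and $\sigma_L:=(\id\otimes\calV^\dagger)\circ(\id\otimes\Lambda_\mathbb{Q})\circ\calV(\rho\otimes\eta)$, and then for every integer $m\ge1$ pass to the $m$-fold tensor products, which satisfy $\|\sigma_L^{\otimes m}-\sigma_R^{\otimes m}\|_1\le mf(\epsilon)$ by subadditivity.

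Next I would convert this trace-distance inequality into a resource inequality. Additivity (ii) gives $M(\sigma_R^{\otimes m})=m[M(\rho)+M_\rho(\Lambda_\mathbb{P})+M(\eta)]$ exactly, and a step-by-step tracking of the resource through the operations building $\sigma_L^{\otimes m}$ (a free unitary $\calV^{\otimes m}$, the measurement channel $\Lambda_\mathbb{Q}^{\otimes m}$, and the dual free unitary $(\calV^\dagger)^{\otimes m}$), using (i), the definition of the resource-generating power, and (ii), yields $M(\sigma_L^{\otimes m})\le m[M(\rho)+M(\eta)+M(\Lambda_\mathbb{Q})]$. Subtracting and invoking the Hölder continuity (iii), which bounds
\begin{equation*}
|M(\sigma_R^{\otimes m})-M(\sigma_L^{\otimes m})|\le m^{a_A+b_A}K_Af(\epsilon)^{b_A}+c_{\max},
\end{equation*}
produces, after cancellation of the common $m M(\eta)$ and $m M(\rho)$ contributions, the master inequality $m\,[M_\rho(\Lambda_\mathbb{P})-M(\Lambda_\mathbb{Q})]\le m^{a_A+b_A}K_Af(\epsilon)^{b_A}+c_{\max}+m\cdot(\text{slack involving }M(\eta))$, valid for every $m\ge1$.

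The final step is to perform the Young-type optimization over $m$. Writing $A:=|M_\rho(\Lambda_\mathbb{P})-M(\Lambda_\mathbb{Q})|_+$ and demanding that the inequality $m A\le m^{a_A+b_A}K_Af(\epsilon)^{b_A}+c_{\max}+m(M_c(\Lambda)+A)$ (where the last term is the slack picked up in tightening the $M(\sigma_L^{\otimes m})$ estimate) hold for all $m\ge1$, one maximizes the left-hand side minus the $m^{a_A+b_A}$ term at $m^\ast=[A/((a_A+b_A)K_Af(\epsilon)^{b_A})]^{1/(a_A+b_A-1)}$; the resulting Young identity is exactly
\begin{equation*}
g(a_A,b_A)\,A^{(a_A+b_A)/(a_A+b_A-1)}/(K_Af(\epsilon)^{b_A})^{1/(a_A+b_A-1)}\le c_{\max}+A+M_c(\Lambda),
\end{equation*}
which rearranges to \eqref{eq:key_resource_SM}. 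In the $a_A=\infty$ case the power-law optimization is replaced by the corresponding logarithmic one, giving \eqref{eq:key_resource_ainf_SM}.

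The hard part will be executing the Young-type optimization carefully: one must correctly isolate the $M(\eta)$ dependence (by choosing whether to keep $B$ or to trace it out before applying (iii), depending on where the cancellation of $M(\eta)$ should or should not occur), handle the integrality/positivity constraint $m\ge1$ so that when the unconstrained optimum $m^\ast$ falls below $1$ the fallback $m=1$ produces the subtracted $|M_\rho(\Lambda_\mathbb{P})-M(\Lambda_\mathbb{Q})|_+$ and $c_{\max}$ terms in the final bound, and verify that the looseness introduced by using $K_A$ (rather than the constant for the composite system $AKB$) is permissible in the intended application of condition (iii).
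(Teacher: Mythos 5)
There is a genuine gap, and it sits exactly at the point you flagged as "the hard part": isolating the $M(\eta)$ dependence. Your amplification step takes i.i.d.\ tensor powers $\sigma_L^{\otimes m}$, $\sigma_R^{\otimes m}$, which means you consume $m$ fresh copies of the ancilla $\eta$. Additivity then puts $mM(\eta)$ on \emph{both} sides of your master inequality, and after the cancellation you advertise, what survives is $m\,[M_\rho(\Lambda_{\mathbb{P}})-M(\Lambda_{\mathbb{Q}})]\le m^{a_A+b_A}K f(\epsilon)^{b_A}+c_{\max}$ (or, in the version where you keep one $mM(\eta)$, $M(\eta)\ge \alpha-m^{a_A+b_A-1}Kf(\epsilon)^{b_A}-c_{\max}/m$). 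In either form the right-hand side can never exceed $\alpha:=|M_\rho(\Lambda_{\mathbb{P}})-M(\Lambda_{\mathbb{Q}})|_+$ for any $m$, so this route provably cannot produce the claimed bound $M(\eta)\gtrsim \alpha^{(a_A+b_A)/(a_A+b_A-1)}/f(\epsilon)^{b_A/(a_A+b_A-1)}$, which diverges as $\epsilon\to0$. Your written master inequality, with the slack $m(M_c(\Lambda)+A)$ on the right, is in fact vacuously true and does not rearrange to the Young identity you state (that identity needs $M_c(\Lambda)+A$ \emph{without} the factor $m$). Tracing out $B^{\otimes m}$ before applying (iii) does not repair this, because the upper bound on $M(\sigma_L^{\otimes m})$ still carries $mM(\eta)$.

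The missing idea in the paper's proof is that the single ancilla $\eta$ is \emph{reused sequentially}: one prepares $\rho^{\otimes m}\otimes\eta$ (one copy of $\eta$ only) and applies the comb $\Lambda'_{\mathbb{P}}=(\calV^\dagger\otimes\id)\circ(\Lambda_{\mathbb{Q}}\otimes\id)\circ\calV$ once to each $A_k$ together with the \emph{same} $B$. Additivity (ii) then gives $M(\rho^{\mathrm{tot}}_{\mathrm{ini}})=mM(\rho)+M(\eta)$ with $M(\eta)$ counted once, while property (p1) (subadditivity of the resource-generating power under concatenation) together with (p2) bounds the total generating power of the $m$-fold concatenation by $mM(\Lambda_{\mathbb{Q}})$; the accumulated error is controlled by an induction using the triangle inequality and monotonicity of the trace norm (not mere subadditivity over tensor factors, since the $B$ register is shared), and only then is $B$ traced out so that (iii) is applied on $(AK)^{\otimes m}$ with the constant $K_A$. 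This yields $M(\eta)\ge m\alpha-m^{a_A+b_A}K_Af(\epsilon)^{b_A}-c_{\max}$ for every $m$, after which your Young-type optimization over integer $m$ (including the rounding loss $s\alpha\le\alpha$) goes through exactly as you describe. So the final optimization step of your plan is sound, but the amplification step feeding into it must be replaced by the single-ancilla sequential construction.
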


\begin{proofof}{Lemma \ref{Lemm:key_resource}}
Let us take an arbitrary implementation $(\eta,V)$ of $\Lambda$ satisfying
\eq{
\Lambda(\cdot)=\Tr_{B'}[\calV(\cdot\otimes\eta)],
}
where $\calV(\cdot):=V\cdot V^\dagger$ and $V$ is a free unitary.
Then, to derive the inequalities \eqref{eq:key_resource_SM} and \eqref{eq:key_resource_ainf_SM}, we only have to derive
\eq{
M_c(\Lambda)&\ge g(a_A,b_A)\frac{|M_\rho(\Lambda_{\mathbb{P}})-M(\Lambda_{\mathbb{Q}})|^{\frac{a_A+b_A}{a_A+b_A-1}}_+}{(K_Af(\epsilon(\rho,\mathbb{P},\mathbb{Q},\Lambda))^{b_A})^{\frac{1}{a_A+b_A-1}}}-|M_\rho(\Lambda_{\mathbb{P}})-M(\Lambda_{\mathbb{Q}})|_+-c_{\max},\enskip (\mbox{for }a_A<\infty)\label{eq:key_resource_SM_prime}\\
M_c(\Lambda)&\ge\frac{|M_\rho(\Lambda_{\mathbb{P}})-M(\Lambda_{\mathbb{Q}})|_+}{a'_A+b_A}\log\frac{|M_\rho(\Lambda_{\mathbb{P}})-M(\Lambda_{\mathbb{Q}})|_+}{K_Af(\epsilon(\rho,\mathbb{P},\mathbb{Q},\Lambda))^{b_A}(a'_A+b_A)}\nonumber\\
&-|M_\rho(\Lambda_{\mathbb{P}})-M(\Lambda_{\mathbb{Q}})|_+\left(1+\frac{1}{a'_A+b_A}\right)
-c_{\max},\enskip(\mbox{for }a_A=\infty).\label{eq:key_resource_ainf_SM_prime}
}

Let us define
$\Lambda'_{\mathbb{P}}:AB\rightarrow ABK$ as:
\eq{
\Lambda'_{\mathbb{P}}(...):=(\calV^\dagger\otimes\mathrm{id}^{K})\circ(\Lambda_{\mathbb{Q}}\otimes\mathrm{id}^{B'})\circ\calV(...).
}
Here $\calV^\dagger(\cdot):=V^\dagger\cdot V$.
Because this channel is obtained by a comb constructed with the free unitary $V$ and its dual on  $\Lambda_{\mathbb{Q}}$, the following inequality holds due to the property (P2):
\eq{
M(\Lambda_{\mathbb{Q}})\ge M(\Lambda'_{\mathbb{P}}).\label{keya}
}
Also, due to Lemma \ref{Lemm:key_nonresource}, we obtain
\eq{
\|\Lambda'_{\mathbb{P}}(\rho\otimes\eta)-\Lambda_{\mathbb{P}}(\rho)\otimes\eta\|_1\le f(\epsilon(\rho,\mathbb{P},\mathbb{Q},\Lambda)).\label{keyb1}
} 

Here, let us prepare $m$ copies of $A$, and name them $A_1$,...,$A_m$. We also refer to a channel that performs $\Lambda'_{\mathbb{P}}$ on $A_k$ and $B$ as $\Lambda'_{\mathbb{P}:k}$.
We define 
\eq{
\Lambda_{\mathbb{P}}'^{(m)}:=\Lambda'_{\mathbb{P}:m}\circ\cdots\circ\Lambda'_{\mathbb{P}:1}.
}
Then, for $\rho^{\mathrm{tot}}_{\mathrm{ini}}:=(\bigotimes^{m}_{k=1}\rho^{A_k})\otimes\eta$,
\eq{
\left\|\Lambda_{\mathbb{P}}'^{(m)}(\rho^{\mathrm{tot}}_{\mathrm{ini}})-\left(\bigotimes^{m}_{k=1}\Lambda_{\mathbb{P}}(\rho)^{A_k}\right)\otimes\eta\right\|_1\le mf(\epsilon(\rho,\mathbb{P},\mathbb{Q},\Lambda))\label{keyb2}.
}
\textit{(Proof of \eqref{keyb2}:}
We firstly note that
\eq{
\|\Lambda_{\mathbb{P}:1}'(\rho^{\mathrm{tot}}_{\mathrm{ini}})-\Lambda_{\mathbb{P}}(\rho)^{A_1}\otimes\eta\otimes^{m}_{k=2}\rho^{A_k}\|_1\le f(\epsilon(\rho,\mathbb{P},\mathbb{Q},\Lambda)).
}
Therefore, due to the triangle inequality and the monotonicity of the trace distance,
\eq{
\|\Lambda_{\mathbb{P}:2}'\circ\Lambda_{\mathbb{P}:1}'(\rho^{\mathrm{tot}}_{\mathrm{ini}})-\otimes^{2}_{k'=1}\Lambda_{\mathbb{P}}(\rho)^{A_{k'}}\otimes\eta\otimes^{m}_{k=3}\rho^{A_k}\|_1&\le 
\|\Lambda_{\mathbb{P}:2}'\circ\Lambda_{\mathbb{P}:1}'(\rho^{\mathrm{tot}}_{\mathrm{ini}})-\Lambda_{\mathbb{P}:2}'(\Lambda_{\mathbb{P}}(\rho)^{A_{1}}\otimes\eta\otimes^{m}_{k=2}\rho^{A_k})\|_1\nonumber\\
&+
\|\Lambda_{\mathbb{P}:2}'(\Lambda_{\mathbb{P}}(\rho)^{A_{1}}\otimes\eta\otimes^{m}_{k=2}\rho^{A_k})-\otimes^{2}_{k'=1}\Lambda_{\mathbb{P}}(\rho)^{A_{k'}}\otimes\eta\otimes^{m}_{k=3}\rho^{A_k}\|_1\nonumber\\
&\le2f(\epsilon(\rho,\mathbb{P},\mathbb{Q},\Lambda)).
}
Suppose the above holds up to $l$. Namely, we assume the following inequality holds:  
\eq{
\|\Lambda_{\mathbb{P}:l}'\circ...\circ\Lambda_{\mathbb{P}:1}'(\rho^{\mathrm{tot}}_{\mathrm{ini}})-\otimes^{l}_{k'=1}\Lambda_{\mathbb{P}}(\rho)^{A_{k'}}\otimes\eta\otimes^{m}_{k=l+1}\rho^{A_k}\|_1\le lf(\delta(\rho,\mathbb{P},\mathbb{Q},\Lambda)).
}
Then, we can prove the same inequality whose $l$ is  substituted by $l+1$ as follows:
\eq{
&\|\Lambda_{\mathbb{P}:l+1}'\circ...\circ\Lambda_{\mathbb{P}:1}'(\rho^{\mathrm{tot}}_{\mathrm{ini}})-\otimes^{l+1}_{k'=1}\Lambda_{\mathbb{P}}(\rho)^{A_{k'}}\otimes\eta\otimes^{m}_{k=l+2}\rho^{A_k}\|_1\nonumber\\
&\le 
\|\Lambda_{\mathbb{P}:l+1}'\circ...\circ\Lambda_{\mathbb{P}:1}'(\rho^{\mathrm{tot}}_{\mathrm{ini}})-\Lambda_{\mathbb{P}':l+1}(\otimes^{l}_{k'=1}\Lambda_{\mathbb{P}}(\rho)^{A_{k'}}\otimes\eta\otimes^{m}_{k=l+1}\rho^{A_k})\|_1\nonumber\\
&+
\|\Lambda_{\mathbb{P}':l+1}(\otimes^{l}_{k'=1}\Lambda_{\mathbb{P}}(\rho)^{A_{k'}}\otimes\eta\otimes^{m}_{k=l+1}\rho^{A_k})
-\otimes^{l+1}_{k'=1}\Lambda_{\mathbb{P}}(\rho)^{A_{k'}}\otimes\eta\otimes^{m}_{k=l+2}\rho^{A_k}\|_1\nonumber\\
&\le (l+1)f(\epsilon(\rho,\mathbb{P},\mathbb{Q},\Lambda)).
}
Thus, by mathematical induction, we obtain \eqref{keyb2} for all $m$.
\textit{)}

Here, we evaluate $mM(\Lambda_{\mathbb{Q}})$ as 
\eq{
mM(\Lambda_{\mathbb{Q}})&\stackrel{(a)}{\ge} mM(\Lambda'_{\mathbb{P}})\nonumber\\
&\stackrel{(b)}{\ge}M(\Lambda'^{(m)}_{\mathbb{P}})\nonumber\\
&\stackrel{(c)}{\ge}M(\Lambda'^{(m)}_{\mathbb{P}}(\rho^{\mathrm{tot}}_{\mathrm{ini}}))-M(\rho^{\mathrm{tot}}_{\mathrm{ini}}),
}
where we used \eqref{keya} in (a), the property (p1) in (b) and the definition of the channel power in (c).

Therefore, when $a_A<\infty$ holds, 
\eq{
M(\rho^{\mathrm{tot}}_{\mathrm{ini}})+mM(\Lambda_{\mathbb{Q}})
&\ge M(\Lambda'^{(m)}_{\mathbb{P}}(\rho^{\mathrm{tot}}_{\mathrm{ini}}))\nonumber\\
&\ge M(\Tr_B\circ\Lambda'^{(m)}_{\mathbb{P}}(\rho^{\mathrm{tot}}_{\mathrm{ini}}))\nonumber\\
&\ge M\left(\bigotimes^{m}_{k=1}\Lambda_{\mathbb{P}}(\rho)^{A_k}\right)-m^{a_A+b_A}K_Af(\epsilon(\rho,\mathbb{P},\mathbb{Q},\Lambda))^{b_A}-c_{\max}\nonumber\\
&\ge m(M(\Lambda_{\mathbb{P}}(\rho)))-m^{a_A+b_A}K_Af(\epsilon(\rho,\mathbb{P},\mathbb{Q},\Lambda))^{b_A}-c_{\max}\label{resource_Q}.
}
Similarly, when $a_A=\infty$ holds,
\eq{
M(\rho^{\mathrm{tot}}_{\mathrm{ini}})+mM(\Lambda_{\mathbb{Q}})
&\ge  M\left(\bigotimes^{m}_{k=1}\Lambda_{\mathbb{P}}(\rho)^{A_k}\right)-m^{b_A}e^{a'_{A}m}K_Af(\epsilon(\rho,\mathbb{P},\mathbb{Q},\Lambda))^{b_A}-c_{\max}\nonumber\\
&\ge m(M(\Lambda_{\mathbb{P}}(\rho)))-m^{b_A}e^{a'_{A}m}K_Af(\epsilon(\rho,\mathbb{P},\mathbb{Q},\Lambda))^{b_A}-c_{\max}.
}

Hence, 
\eq{
M(\eta)&\ge m(M_\rho(\Lambda_{\mathbb{P}})-M(\Lambda_{\mathbb{Q}}))-m^{a_A+b_A}K_Af(\epsilon(\rho,\mathbb{P},\mathbb{Q},\Lambda))^{b_A}-c_{\max},\enskip(\mbox{for }a_A<\infty)\\
M(\eta)&\ge m(M_\rho(\Lambda_{\mathbb{P}})-M(\Lambda_{\mathbb{Q}}))-m^{b_A}e^{a'_{A}m}K_Af(\epsilon(\rho,\mathbb{P},\mathbb{Q},\Lambda))^{b_A}-c_{\max},\enskip(\mbox{for }a_A=\infty).
}
Since the above inequalities become trivial when $M_\rho(\Lambda_{\mathbb{P}})-M(\Lambda_{\mathbb{Q}})<0$, we can rewrite them as follows without loss of generality:
\eq{
M(\eta)&\ge m|M_\rho(\Lambda_{\mathbb{P}})-M(\Lambda_{\mathbb{Q}})|_+-m^{a_A+b_A}K_Af(\epsilon(\rho,\mathbb{P},\mathbb{Q},\Lambda))^{b_A}-c_{\max},\enskip(\mbox{for }a_A<\infty)\label{eq:bound_any_m}\\
M(\eta)&\ge m|M_\rho(\Lambda_{\mathbb{P}})-M(\Lambda_{\mathbb{Q}})|_+-m^{b_A}e^{a'_{A}m}K_Af(\epsilon(\rho,\mathbb{P},\mathbb{Q},\Lambda))^{b_A}-c_{\max},\enskip(\mbox{for }a_A=\infty)\label{eq:bound_any_m_ainf}.
}
These inequalities hold for any natural number $m$. Therefore, we only have to find a proper $m$ to provide \eqref{eq:key_resource_SM_prime} and \eqref{eq:key_resource_ainf_SM_prime}, respectively.

We first consider the case of $a_A<\infty$.
In this case, \eqref{eq:bound_any_m} is written as 
\eq{
M(\eta)\ge m\alpha-m^{a_A+b_A}\beta-c_{\max}\label{eq:simple1},
}
where $\alpha:=|M_\rho(\Lambda_{\mathbb{P}})-M(\Lambda_{\mathbb{Q}})|_+$ and $\beta:=K_Af(\epsilon(\rho,\mathbb{P},\mathbb{Q},\Lambda))^{b_A}$.
Then, for $g_1(x):=x\alpha-x^{a_A+b_A}\beta-c_{\max}$, 
\eq{
\frac{dg_1(x)}{dx}=\alpha-(a_A+b_A)\beta x^{(a_A+b_A)-1}.
}
Therefore, if $m$ can take a real value, the right-hand side of \eqref{eq:simple1} takes a peak at $m=\left(\frac{\alpha}{(a_A+b_A)\beta}\right)^{\frac{1}{(a_A+b_A)-1}}$.

Therefore, we define a natural number $m_{*}:=\left(\frac{\alpha}{(a_A+b_A)\beta}\right)^{\frac{1}{(a_A+b_A)-1}}-s$ where $0\le s<1$, and substitute it for \eqref{eq:bound_any_m}.
Then, we obtain 
\eq{
M(\eta)&\ge
\left(\left(\frac{\alpha}{(a_A+b_A)\beta}\right)^{\frac{1}{(a_A+b_A)-1}}-s\right)\alpha-\left(\left(\frac{\alpha}{(a_A+b_A)\beta}\right)^{\frac{1}{(a_A+b_A)-1}}-s\right)^{a_A+b_A}\beta-c_{\max}\nonumber\\
&\ge
\left(\left(\frac{\alpha}{(a_A+b_A)\beta}\right)^{\frac{1}{(a_A+b_A)-1}}\right)\alpha-s\alpha-\left(\frac{\alpha}{(a_A+b_A)\beta}\right)^{\frac{a_A+b_A}{(a_A+b_A)-1}}\beta-c_{\max}\nonumber\\
&=
\left(\left(\frac{\alpha}{a_A+b_A}\right)^{\frac{1}{(a_A+b_A)-1}}\alpha-\left(\frac{\alpha}{a_A+b_A}\right)^{\frac{a_A+b_A}{(a_A+b_A)-1}}\right)\beta^{-\frac{1}{a_A+b_A-1}}-s\alpha-c_{\max}\nonumber\\
&=\left(\left(\frac{1}{a_A+b_A}\right)^{\frac{1}{(a_A+b_A)-1}}-\left(\frac{1}{a_A+b_A}\right)^{\frac{a_A+b_A}{(a_A+b_A)-1}}\right)\alpha^{\frac{a_A+b_A}{a_A+b_A-1}}\beta^{-\frac{1}{a_A+b_A-1}}-s\alpha-c_{\max}\nonumber\\
&=\left(\frac{1}{a_A+b_A}\right)^{\frac{1}{a_A+b_A-1}}\left(1-\frac{1}{a_A+b_A}\right)\alpha^{\frac{a_A+b_A}{a_A+b_A-1}}\beta^{-\frac{1}{a_A+b_A-1}}-s\alpha-c_{\max}\nonumber\\
&=g(a_A,b_A)\alpha^{\frac{a_A+b_A}{a_A+b_A-1}}\beta^{-\frac{1}{a_A+b_A-1}}-s\alpha-c_{\max}\nonumber\\
&\ge g(a_A,b_A)\alpha^{\frac{a_A+b_A}{a_A+b_A-1}}\beta^{-\frac{1}{a_A+b_A-1}}-\alpha-c_{\max}.
}
Subsituting $\alpha=|M_\rho(\Lambda_{\mathbb{P}})-M(\Lambda_{\mathbb{Q}})|_+$ and $\beta=K_Af(\epsilon(\rho,\mathbb{P},\mathbb{Q},\Lambda))^{b_A}$, we obtain \eqref{eq:key_resource_SM_prime}.

Next, let us consider the case of $a_A=\infty$.
In this case, we can rewrite \eqref{eq:bound_any_m_ainf} as
\eq{
M(\eta)\ge m\alpha-m^b e^{a'm}\beta-c_{\max}.
}
where $\alpha:=|M_\rho(\Lambda_{\mathbb{P}})-M(\Lambda_{\mathbb{Q}})|_+$ and $\beta:=K_Af(\epsilon(\rho,\mathbb{P},\mathbb{Q},\Lambda))^{b_A}$.
Note that $m^b\le e^{bm}$, and thus
\eq{
M(\eta)\ge m\alpha-e^{(a'_A+b_A)m}\beta-c_{\max}.
\label{eq:simple2}
}
Then, for $g_2(x):=x\alpha-e^{(a'_A+b_A)x}\beta-c_{\max}$, 
\eq{
\frac{dg_2(x)}{dx}=\alpha-(a'_A+b_A)\beta e^{(a'_A+b_A)x}.
}
Therefore, if $m$ can take a real value, the right-hand side of \eqref{eq:simple2} takes a peak at $m=\frac{1}{a'_A+b_A}\log\frac{\alpha}{\beta(a'_A+b_A)}$.

Therefore, we define a natural number $m_{*}:=\frac{1}{a'_A+b_A}\log\frac{\alpha}{\beta(a'_A+b_A)}-s_2$ where $0\le s_2<1$, and substitute it for \eqref{eq:bound_any_m_ainf}.
Then, we obtain 
\eq{
M(\eta)&\ge
\left(\frac{1}{a'_A+b_A}\log\frac{\alpha}{\beta(a'_A+b_A)}-s_2\right)\alpha-e^{(a'_A+b_A)\frac{1}{a'_A+b_A}\log\frac{\alpha}{\beta(a'_A+b_A)}-s_2}\beta-c_{\max}\nonumber\\
&\ge
\frac{\alpha}{a'_A+b_A}\log\frac{\alpha}{\beta(a'_A+b_A)}-s_2\alpha
-e^{\log\frac{\alpha}{\beta(a'_A+b_A)}}\beta-c_{\max}\nonumber\\
&\ge
\frac{\alpha}{a'_A+b_A}\log\frac{\alpha}{\beta(a'_A+b_A)}-s_2\alpha
-\frac{\alpha}{a'_A+b_A}-c_{\max}\nonumber\\
&\ge\frac{\alpha}{a'_A+b_A}\log\frac{\alpha}{\beta(a'_A+b_A)}
-\alpha\left(1+\frac{1}{a'_A+b_A}\right)-c_{\max}.
}
Subsituting $\alpha=|M_\rho(\Lambda_{\mathbb{P}})-M(\Lambda_{\mathbb{Q}})|_+$ and $\beta=K_Af(\epsilon(\rho,\mathbb{P},\mathbb{Q},\Lambda))^{b_A}$, we obtain \eqref{eq:key_resource_ainf_SM_prime}.
\end{proofof}

From Lemma \ref{Lemm:key_resource}, we can derive Theorems \ref{Thm:RItradeoff_SM} and \ref{Thm:RPtradeoff_SM} and Corollary \ref{Coro:REtradeoff_conversion_SM}. We prove them one by one.

\begin{proofof}{Theorem \ref{Thm:RPtradeoff_SM}}
Let $\mathbb{P}_*:=\{P_k\}_{k=1,2}$ and $\rho_*$ be a two-valued PVM on $A$ that discriminates $\rho_1$ and $\rho_2$ in $\Omega_p$ with probability 1 and a state on $A$ satisfying
\eq{
M_{\rho_*}(\Lambda_{\mathbb{P}_*})=\Mem(\Omega_p|\mathscr{K}).\label{eq:deri1_thm2}
}
Then, by definition, for any two-valued POVM $\mathbb{Q}:=\{Q_k\}_{k=1,2}$ on $A'$, 
\eq{
\PF(\Omega'_{p,\Lambda},\mathbb{Q})=\epsilon(\rho_*,\mathbb{P}_*,\mathbb{Q},\Lambda)^2.\label{eq:deri2_them2}
}
Substituting \eqref{eq:deri1_thm2} and \eqref{eq:deri2_them2} in \eqref{eq:key_resource_SM} and \eqref{eq:key_resource_ainf_SM}, we obtain \eqref{eq:RPtradeoff_SM}
 and \eqref{eq:RPtradeoff_ainf_SM}
\end{proofof}

\begin{proofof}{Theorem \ref{Thm:RItradeoff_SM}}
Let $\mathbb{Q}_*:=\{Q_{k,*}\}_{k=1,2}$ be a POVM on $A'$ satisfying 
\eq{
M(\Lambda_{\mathbb{Q}_*})&=\Mcos(\Omega'_{p,\Lambda}|\mathscr{K}),\label{eq:deri1_ThmS1}\\
\delta(\Lambda,\Omega_p)^2&=\PF(\Omega'_{p,\Lambda},\mathbb{Q}_*)\label{eq:deri2_ThmS1},
}
where $\Lambda_{\mathbb{Q}_*}(\cdot)=\sum_k\sqrt{Q_{k,*}}\cdot\sqrt{Q_{k,*}}\otimes\ket{k}\bra{k}_K$.
Substituting \eqref{eq:deri1_ThmS1} and \eqref{eq:deri2_ThmS1} in \eqref{eq:RPtradeoff_SM}
 and \eqref{eq:RPtradeoff_ainf_SM}, we obtain \eqref{eq:RItradeoff_SM}
 and \eqref{eq:RItradeoff_ainf_SM}
\end{proofof}

\begin{proofof}{Corollary \ref{Coro:REtradeoff_conversion_SM}}
We firstly show that when \eqref{cond_approx_SM} holds, the following inequality is valid for any two-valued POVM $\mathbb{Q}:=\{Q_k\}_{k=1,2}$ discriminating $\rho'_1$ and $\rho'_2$ with probability 1:
\eq{
\PF(\Omega'_{p,\Lambda},\mathbb{Q})\le\epsilon^2.\label{target_coro1}
}
We define $p_1:=p$ and $p_2:=1-p$, and define two states
\eq{
\rho^{\mathrm{tot}}_{\Omega,p}&:=\sum_ip_i\rho_i\otimes\ket{i}\bra{i}_{K_{0}},\\
\rho^{\mathrm{tot}}_{\Omega',p}&:=\sum_ip_i\rho'_i\otimes\ket{i}\bra{i}_{K_{0}}.
}
Then, we obtain
\eq{
D_F(\Lambda(\rho^{\mathrm{tot}}_{\Omega,p}),\rho^{\mathrm{tot}}_{\Omega',p})^2&\stackrel{(a)}{=}1-(\sum_{i}p_iF(\Lambda(\rho_i),\rho'_i))^2\nonumber\\
&\le1-(\sum_ip_i\sqrt{1-\epsilon^2})^2\nonumber\\
&=\epsilon^2.
}
Here we used the following relation \cite{TTK} in (a):
\eq{
D_F(\sum_ip_i\rho_i\otimes\ket{i}\bra{i},\sum_ip_i\sigma\otimes\ket{i}\bra{i})^2
=1-(\sum_ip_iF(\rho_i,\sigma_i))^2.
}

Since $\mathbb{Q}$ discriminates $\rho'_1$ and $\rho'_2$ with probability 1, by using the index $k_{\neg}\ne k$ (i.e. when $k=1$, $k_{\neg}=2$ and when $k=2$, $k_{\neg}=1$), we get
\eq{
D_F(\Lambda(\rho^{\mathrm{tot}}_{\Omega,p}),\rho^{\mathrm{tot}}_{\Omega',p})^2&\ge 
D_F(\Lambda_{\mathbb{Q}}\circ\Lambda(\rho^{\mathrm{tot}}_{\Omega,p}),\Lambda_{\mathbb{Q}}(\rho^{\mathrm{tot}}_{\Omega',p}))^2\nonumber\\
&\ge D_F(\Tr_{A'}\circ\Lambda_{\mathbb{Q}}\circ\Lambda(\rho^{\mathrm{tot}}_{\Omega,p}),\Tr_{A'}\circ\Lambda_{\mathbb{Q}}(\rho^{\mathrm{tot}}_{\Omega',p}))^2\nonumber\\
&=  D_F(\sum_kp_k(1-\Tr[Q_{k_{\neg}}(\Lambda(\rho_k))])\ket{k}\bra{k}_K\otimes\ket{k}\bra{k}_{K_0}+\sum_kp_k\Tr[Q_{k_{\neg}}(\Lambda(\rho_k))]\ket{k_{\neg}}\bra{k_{\neg}}_K\otimes\ket{k}\bra{k}_{K_0},\nonumber\\
&\enskip\sum_kp_k\ket{k}\bra{k}_K\otimes\ket{k}\bra{k}_{K_0})^2\nonumber\\
&\ge \sum_kp_kD_F((1-\Tr[Q_{k_{\neg}}\Lambda(\rho_k)])\ket{k}\bra{k}_{K}+\Tr[Q_{k_{\neg}}\Lambda(\rho_k)]\ket{k_{\neg}}\bra{k_{\neg}}_K,\ket{k}\bra{k}_K)^2\nonumber\\
&=\sum_kp_k\Tr[(1-Q_k)\Lambda(\rho_k)]\nonumber\\
&=\PF(\Omega'_{p,\Lambda},\mathbb{Q}).
}
Therefore, we obtain \eqref{target_coro1}.

Now, let us derive \eqref{eq:REtradeoff_conversion_SM} and \eqref{eq:REtradeoff_conversion_SM_ainf}.
Due to \eqref{target_coro1}, \eqref{eq:RPtradeoff_SM}
and \eqref{eq:RPtradeoff_ainf_SM},
the following inequalities hold for any $0<p<1$ and any $\mathbb{Q}$ that discriminates $\rho''_1$ and $\rho''_2$ with probability 1:
\eq{
M_c(\Lambda)&\ge g(a_A,b_A)\frac{|\Mem(\Omega_p|\mathscr{K})-M(\Lambda_{\mathbb{Q}})|^{\frac{a_A+b_A}{a_A+b_A-1}}_+}{(K_Af(\epsilon)^{b_A})^{\frac{1}{a_A+b_A-1}}}-|\Mem(\Omega_p|\mathscr{K})-M(\Lambda_{\mathbb{Q}})|_+-c_{\max},\enskip(\mbox{for }a_A<\infty)\label{S103}\\
M_c(\Lambda)\ge&\frac{|\Mem(\Omega_p|\mathscr{K})-M(\Lambda_{\mathbb{Q}})|_+}{a'_A+b_A}\log\frac{|\Mem(\Omega_p|\mathscr{K})-M(\Lambda_{\mathbb{Q}})|_+}{K_Af(\epsilon)^{b_A}(a'_A+b_A)}\nonumber\\
&-|\Mem(\Omega_p|\mathscr{K})-M(\Lambda_{\mathbb{Q}})|_+\left(1+\frac{1}{a'_A+b_A}\right)
-c_{\max},\enskip(\mbox{for }a_A=\infty)\label{S104}.
}
Because $\rho''_1$ and $\rho''_2$ are orthogonal, any  $\mathbb{Q}$ that discriminates $\rho''_1$ and $\rho''_2$ with probability 1 satisfies
\eq{
\PF(\Omega''_p,\mathbb{Q})=\min_{\mathbb{E}}\PF(\Omega''_p,\mathbb{E})=0.
}
Therefore, when we take $\mathbb{Q}_*=\{Q_{*,k}\}_{k=1,2}$ on $A'$ as 
\eq{
M(\Lambda_{\mathbb{Q}_*})=\min_{\mathbb{Q}}\{M(\Lambda_{\mathbb{Q}})|\PF(\Omega''_{p},\mathbb{Q})=0\},
}
it satisfies
\eq{
M(\Lambda_{\mathbb{Q}_*})&=\Mcos(\Omega''_p|\mathscr{K})
}
for any $0< p<1$. (In other words, $\Mcos(\Omega''_p|\mathscr{K})$ has the same value for any $0<p<1$).
Therefore, substituting $\mathbb{Q}_*$ for $\mathbb{Q}$ in \eqref{S103} and \eqref{S104} and take the maximization runs over $0<p<1$, we obtain \eqref{eq:REtradeoff_conversion_SM} and \eqref{eq:REtradeoff_conversion_SM_ainf}.
\end{proofof}

\subsubsection{The case of $M$ satisfying the conditions (i), (ii-strong), and (iii)}

Next, we treat the case where the resource measure $M$ satisfies the conditions (i), (ii-strong), and (iii).
Since the condition (ii-strong) clearly implies (ii), the results in the previous subsection (e.g. Theorems \ref{Thm:RItradeoff_SM} and \ref{Thm:RPtradeoff_SM}) also hold in this case.
In the case of (ii-strong), the following quantity becomes important:
\eq{
M_{\rho}(\Lambda:\Lambda_{\mathbb{P}}):=
|M_\rho(\Lambda)+M_\rho(\Lambda_{\mathbb{P}})-M_{\rho}(\id_K\otimes\Lambda\circ\Lambda_{\mathbb{P}})|_+,
}
where $\Lambda_{\mathbb{P}}(\cdot)=\sum_kP_k\cdot P_k\otimes\dm{k}_K$ with the register $\mathscr{K}:=(K,\{\ket{k}\})$, and where $|x|_+$ returns $x$ when $x$ is positive, and returns $0$ otherwise. 
Using this quantity, we can derive an alternative version of Theorem \ref{Thm:RItradeoff_SM}:
\begin{theorem}[Resource-Irreversibility tradeoff for (ii-strong)]\label{Thm:RItradeoff_alter_SM}
Let $A$ and $A'$ be quantum systems.
We take an arbitrary channel $\Lambda$ from $A$ to $A'$, and its implementation $(V,\eta^B)$ such that $V$ is a free unitary and $\Lambda(\cdot)=\Tr_{B'}[V(\cdot\otimes\eta)V^\dagger]$ holds.
Then, for any test ensemble $\Omega_p:=(\rho_1,\rho_2)_p$ whose $\rho_1$ and $\rho_2$ are mutually orthogonal, any two-level register $\mathscr{K}:=(K,\{\ket{k}\}_{k=1,2})$, any resource measure $M$ satisfying the conditions (i), (ii-strong) for $\{A,A',B,B',K\}$ and (iii) for $a_S=b_S=1$ and $c=0$, any two-valued PVM $\mathbb{P}:=\{P_k\}_{k=1,2}$ discriminating $\rho_1$ and $\rho_2$ with probability 1, and any state $\rho$ satisfying $\Lambda_{\mathbb{P}}(\rho)=\sum_kp_k\rho_k\otimes\ket{k}\bra{k}_K$ for $\Lambda_{\mathbb{P}}(\cdot):=\sum_kP_k\cdot P_k\otimes\ket{k}\bra{k}_K$, the following inequality holds:
\eq{
M(\eta)&\ge \frac{M_{\rho}(\Lambda:\Lambda_{\mathbb{P}})^2}{4(K_A+K_{A'})f(\delta(\Lambda,\Omega_p))}-M_{\rho}(\Lambda:\Lambda_{\mathbb{P}})\label{eq:RItradeoff_alter_SM}.
}
\end{theorem}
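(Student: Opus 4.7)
The proof would generalize the $m$-copy strategy employed in Theorem~\ref{Thm:RItradeoff_SM} above, but exploit full additivity (ii-strong) to obtain a sharper per-copy resource balance and to split the H\"older correction into an $A$-part and an $A'$-part. First, I would pick a POVM $\mathbb{Q}$ achieving the minimum failure probability $\PF(\Omega'_{p,\Lambda},\mathbb{Q})=\delta(\Lambda,\Omega_p)^{2}$, set $\delta:=\delta(\Lambda,\Omega_p)$, form the single-step comb $\Lambda'_{\mathbb{P}}:=(\id_K\otimes\calV^\dagger)\circ(\Lambda_{\mathbb{Q}}\otimes\id_{B'})\circ\calV$, and iterate it on $m$ fresh copies of $\rho$ sharing the single ancilla $\eta$, exactly as in the proof of Theorem~\ref{Thm:RItradeoff_SM}. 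The triangle-inequality argument based on Lemma~\ref{Lemm:key_nonresource} then yields $\|\tau^{(m)}-\Lambda_\mathbb{P}(\rho)^{\otimes m}\otimes\eta\|_1\le mf(\delta)$, where $\tau^{(m)}:=\Lambda'^{(m)}_\mathbb{P}(\rho^{\otimes m}\otimes\eta)$.

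The key new step is to evaluate $M(\tau^{(m)})$ via (ii-strong) using two complementary decompositions. On one hand, $\calV$-invariance of $M$ plus the fact that each $\Lambda_{\mathbb{Q}}$ acts trivially on $B'$ and therefore preserves the $B'$-marginal, combined with (ii-strong) applied to $\sigma=\calV(\rho\otimes\eta)$ along the cut $A'|B'$, yields the exact single-copy identity $M(\sigma_{B'})=M(\rho)+M(\eta)-M(\Lambda(\rho))$; iterating round by round expresses $M(\tau^{(m)})$ in a closed form in which $M(\eta)$ appears exactly once while each copy contributes one term of the form $M(\Lambda_{\mathbb{Q}}(\Lambda(\cdot)))-M(\Lambda(\cdot))$. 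On the other hand, applying (ii-strong) to $\tau^{(m)}$ along the cut $(AK)^{\otimes m}\,|\,B$ and invoking the $m$-copy H\"older bound from (iii) with $a_S=b_S=1$ gives $|M(\tau^{(m)}_{(AK)^{\otimes m}})-m\,M(\Lambda_\mathbb{P}(\rho))|\le m^{2}K_Af(\delta)$; the analogous computation carried out on the pre-$\calV^\dagger$ state, where the $(A'K)^{\otimes m}$-marginal is compared to $((\id_K\otimes\Lambda)\Lambda_\mathbb{P}(\rho))^{\otimes m}$, produces a second H\"older error of $m^{2}K_{A'}f(\delta)$. Thanks to full additivity the $B$-side H\"older contributions appearing in the two comparisons occur on opposite sides of the equality $M(\tau^{(m)})=M(\tau^{(m)})$ and cancel, leaving only the $K_A$ and $K_{A'}$ terms.

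Putting the two sides together gives, for every positive integer $m$,
\begin{equation}
M(\eta)\ge m\,M_\rho(\Lambda:\Lambda_\mathbb{P})-m^{2}(K_A+K_{A'})f(\delta),
\end{equation}
and I would then optimize over $m$ exactly as at the end of the proof of Theorem~\ref{Thm:RItradeoff_SM}: setting $m_*=\floor{M_\rho(\Lambda:\Lambda_\mathbb{P})/\bigl(2(K_A+K_{A'})f(\delta)\bigr)}$ and reusing the $a_A=b_A=1$ form of that optimization step delivers the claimed inequality, with the integer-rounding loss producing the additive $-M_\rho(\Lambda:\Lambda_\mathbb{P})$ correction. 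The main obstacle I anticipate is the $m$-copy ancilla bookkeeping: the effective ancilla entering round $k$ is $\tau^{(k-1)}_B$ rather than $\eta$ itself, so one must verify that these deviations stay within the $mf(\delta)$ budget already absorbed by the triangle inequality and, crucially, that the residual $B$-side discrepancies in the two H\"older estimates genuinely cancel under (ii-strong) rather than leaking an uncontrolled $K_B$-dependent term into the final bound.
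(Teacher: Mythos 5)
Your proposal follows essentially the same route as the paper's proof: the $m$-copy iteration of the comb $\Lambda'_{\mathbb{P}}$ controlled by Lemma~\ref{Lemm:key_nonresource}, a lower bound on $M(\tau^{(m)})$ via the $A$-side H\"older estimate, an upper bound obtained round by round using unitary invariance and (ii-strong) to reduce each increment to the $A'$-marginal (accumulating the $m^2K_{A'}f(\delta)$ term you correctly anticipate), and the final optimization over integer $m$ yielding the $-M_\rho(\Lambda:\Lambda_{\mathbb{P}})$ rounding loss. The bookkeeping worry you flag is resolved exactly as you suspect: the paper never invokes a $B$-side H\"older term at all, using monotonicity under partial trace for the lower bound and the exact (ii-strong) marginal identity for the upper bound, so no $K_B$-dependent error can leak in.
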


Since (ii-strong) implies (ii), Theorem \ref{Thm:RItradeoff_SM} and \ref{Thm:RPtradeoff_SM} also hold when (ii-strong) is satisfied.
The advantage of Theorem \ref{Thm:RItradeoff_alter_SM} is that the numerator on the right-hand side of the trade-off relation, $\Mem(\Omega_{p}|\mathscr{K})-\Mcos(\Omega'_{p,\Lambda}|\mathscr{K})$, is replaced by $M{\rho}(\Lambda:\Lambda_{\mathbb{P}})$ that can be computed easily for any $\Lambda$ and $\Lambda_{\mathbb{P}}$.
As a result, Theorem \ref{Thm:RItradeoff_alter_SM} enables straightforward verification of the inverse proportionality between the implementation error and the resource cost for a broader class of quantum channels.
At present, the only known resource measure satisfying (ii-strong) is the expectation value of energy; however, the theorem remains applicable to any other measure satisfying this condition.

\begin{proofof}{Theorem \ref{Thm:RItradeoff_alter_SM}}
Due to \eqref{PF_equal_delta}, it is enough to show the following inequalities for any two-valued POVM $\mathbb{Q}:=\{Q_k\}_{k=1,2}$:
\eq{
M(\eta)&\ge \frac{M_{\rho}(\Lambda:\Lambda_{\mathbb{P}})^{2}}{4(K_A+K_{A'})f(\sqrt{\PF(\Omega'_{p,\Lambda},\mathbb{Q})})}-M_{\rho}(\Lambda:\Lambda_{\mathbb{P}}).\label{pre_ene}
}
Let us define
$\Lambda'_{\mathbb{P}}:AB\rightarrow ABK$ as
\eq{
\Lambda'_{\mathbb{P}}(...):=(\calV^\dagger\otimes\mathrm{id}^{K})\circ(\Lambda_{\mathbb{Q}}\otimes\mathrm{id}^{B'})\circ\calV(...).
}
Here $\calV(\cdot)=V\cdot V^\dagger$,  $\calV^\dagger(\cdot):=V^\dagger\cdot V$ and $\Lambda_{\mathbb{Q}}(\cdot):=\sum_{k}\sqrt{Q_k}\cdot\sqrt{Q_k}\otimes\dm{k}_{K}$.
Due to Lemma \ref{Lemm:key_nonresource}, we obtain
\eq{
\|\Lambda'_{\mathbb{P}}(\rho\otimes\eta)-\Lambda_{\mathbb{P}}(\rho)\otimes\eta\|_1\le f(\epsilon(\rho,\mathbb{P},\mathbb{Q},\Lambda)).
} 
Since $\mathbb{P}$ discriminates $\rho_1$ and $\rho_2$ with probability 1, and since $\Tr[P_k\rho]=p_k$, we obtain 
\eq{
\epsilon(\rho,\mathbb{P},\mathbb{Q},\Lambda)=\sqrt{\PF(\Omega'_{p,\Lambda},\mathbb{Q})}.
}
Therefore, we obtain
\eq{
\|\Lambda'_{\mathbb{P}}(\rho\otimes\eta)-\Lambda_{\mathbb{P}}(\rho)\otimes\eta\|_1\le f(\sqrt{\PF(\Omega'_{p,\Lambda},\mathbb{Q})}).\label{keyb1_en}
} 
Due to \eqref{keyb1_en}, the invariance of the trace distance under unitary transformations, and its monotonicity under partial trace, we obtain
\eq{
\|\Lambda_{\mathbb{Q}}\circ\Lambda(\rho)-\id^K\otimes\Lambda\circ\Lambda_{\mathbb{P}}(\rho)\|_1\le f(\sqrt{\PF(\Omega'_{p,\Lambda},\mathbb{Q})}).\label{keyb1prime_en}
}

Here, let us prepare $m$ copies of $A$, and name them $A_1$,...,$A_m$. We also refer to a channel that performs $\Lambda'_{\mathbb{P}}$ on $A_k$ and $B$ as $\Lambda'_{\mathbb{P}:k}$.
We define 
\eq{
\Lambda_{\mathbb{P}}'^{(m)}:=\Lambda'_{\mathbb{P}:m}\circ\cdots\circ\Lambda'_{\mathbb{P}:1}.
}
Then, for $\rho^{\mathrm{tot}}_{\mathrm{ini}}:=(\bigotimes^{m}_{k=1}\rho^{A_k})\otimes\eta$,
\eq{
\left\|\Lambda_{\mathbb{P}}'^{(m)}(\rho^{\mathrm{tot}}_{\mathrm{ini}})-\left(\bigotimes^{m}_{k=1}\Lambda_{\mathbb{P}}(\rho)^{A_k}\right)\otimes\eta\right\|_1\le mf(\sqrt{\PF(\Omega'_{p,\Lambda},\mathbb{Q})})\label{keyb2_en}.
}
Therefore,
\eq{
M(\Lambda'^{(m)}_{\mathbb{P}}(\rho^{\mathrm{tot}}_{\mathrm{ini}}))
&\ge
M(\Tr_B[\Lambda'^{(m)}_{\mathbb{P}}(\rho^{\mathrm{tot}}_{\mathrm{ini}})])
\nonumber\\
&\ge M\left(\bigotimes^{m}_{k=1}\Lambda_{\mathbb{P}}(\rho)^{A_k}\right)-mK_Amf(\sqrt{\PF(\Omega'_{p,\Lambda},\mathbb{Q})})\nonumber\\
&=
m(M(\Lambda_{\mathbb{P}}(\rho)))-m^{2}K_Af(\sqrt{\PF(\Omega'_{p,\Lambda},\mathbb{Q})}).\label{eq:lambda_P_rho_tot_lb}
}
Here, we define   $\rho^{\mathrm{tot}}_{l}:=\Lambda'_{\mathbb{P}:l}...\Lambda'_{\mathbb{P}:1}(\rho^{\mathrm{tot}}_{\mathrm{ini}})$ and $\rho^{\mathrm{tot}}_{0}:=\rho^{\mathrm{tot}}_{\mathrm{ini}}$. Then, 
\eq{
\rho^{\mathrm{tot}}_{l+1}=(\calV^{A_{l+1}B})^\dagger\circ\Lambda^{A'_{l+1}}_{\mathbb{Q}}\circ\calV^{A_{l+1}B}(\rho^{\mathrm{tot}}_l).
}
Due to the condition (i), $\calV$ and $\calV^\dagger$ preserves the amount of resource. Therefore,
\eq{
M(\rho^{\mathrm{tot}}_{l+1})=M(\rho^{\mathrm{tot}}_{l})+(M(\Lambda^{A'_{l+1}}_{\mathbb{Q}}\circ\calV^{A_{l+1}B}(\rho^{\mathrm{tot}}_l))-M(\calV^{A_{l+1}B}(\rho^{\mathrm{tot}}_l))).
}
Also, since $M$ satisfies (ii-strong) for $\{A,A',B,B',K\}$, we obtain
\eq{
M(\Lambda^{A'_{l+1}}_{\mathbb{Q}}\circ\calV^{A_{l+1}B}(\rho^{\mathrm{tot}}_l))-M(\calV^{A_{l+1}B}(\rho^{\mathrm{tot}}_l))=M(\Lambda^{A'_{l+1}}_{\mathbb{Q}}(\Tr_{\neg A'_{l+1}}[\calV^{A_{l+1}B}(\rho^{\mathrm{tot}}_l)]))-M(\Tr_{\neg A'_{l+1}}[\calV^{A_{l+1}B}(\rho^{\mathrm{tot}}_l)]),
}
where $\Tr_{\neg A'_l}:=\Tr_{A_1...A_{l-1}A_{l+1}...A_mB'K_1...K_{l}}$, and $K_1$,...,$K_l$ are copies of the register system $K$.
Since
\eq{
\left\|\rho^{\mathrm{tot}}_l-\left(\bigotimes^{l}_{k=1}\Lambda_{\mathbb{P}}(\rho)^{A_k}\right)\otimes\left(\bigotimes^{m}_{k'=l+1}\rho^{A_{k'}}\right)\otimes\eta\right\|_1\le lf(\sqrt{\PF(\Omega'_{p.\Lambda},\mathbb{Q})}),
}
we get
\eq{
\|\Tr_{\neg A'_{l+1}}[\calV^{A_{l+1}B}(\rho^{\mathrm{tot}}_l)]-\Lambda(\rho)\|_1\le lf(\sqrt{\PF(\Omega'_{p.\Lambda},\mathbb{Q})}),
}
which implies
\eq{
M(\Lambda^{A'_{l+1}}_{\mathbb{Q}}(\Tr_{\neg A'_{l+1}}[\calV^{A_{l+1}B}(\rho^{\mathrm{tot}}_l)]))-M(\Tr_{\neg A'_{l+1}}[\calV^{A_{l+1}B}(\rho^{\mathrm{tot}}_l)])
&\le
M(\Lambda_{\mathbb{Q}}\circ\Lambda(\rho))-M(\Lambda(\rho))
+2K_{A'}lf(\sqrt{\PF(\Omega'_{p,\Lambda},\mathbb{Q})})\nonumber\\
&=M_{\Lambda(\rho)}(\Lambda_{\mathbb{Q}})+2K_{A'}lf(\sqrt{\PF(\Omega'_{p,\Lambda},\mathbb{Q})}).
}
By repeatedly applying this inequality from $l=1$ to $m-1$, we obtain
\eq{
M(\Lambda'^{(m)}_{\mathbb{P}}(\rho^{\mathrm{tot}}_{\mathrm{ini}}))&\le M(\rho^{\mathrm{tot}}_{\mathrm{ini}})+mM_{\Lambda(\rho)}(\Lambda_{\mathbb{Q}})+(\sum^{m-1}_{l=1}l)2K_{A'}f(\sqrt{\PF(\Omega'_{p,\Lambda},\mathbb{Q})})\nonumber\\
&=
M(\rho^{\mathrm{tot}}_{\mathrm{ini}})+mM_{\Lambda(\rho)}(\Lambda_{\mathbb{Q}})+m(m-1)K_{A'}f(\sqrt{\PF(\Omega'_{p,\Lambda},\mathbb{Q})}).
}
Combining this inequality with \eqref{eq:lambda_P_rho_tot_lb}, we get
\eq{
m(M(\Lambda_{\mathbb{P}}(\rho)))-m^{2}K_Af(\sqrt{\PF(\Omega'_{\Lambda,p},\mathbb{Q})})\le M(\rho^{\mathrm{tot}}_{\mathrm{ini}})+mM_{\Lambda(\rho)}(\Lambda_{\mathbb{Q}})+m(m-1)K_{A'}f(\sqrt{\PF(\Omega'_{p,\Lambda},\mathbb{Q})})
}
Furthermore, since \eqref{keyb2_en} implies
\eq{
M_{\Lambda(\rho)}(\Lambda_{\mathbb{Q}})&\le M(\id^{K}\otimes\Lambda\circ\Lambda_{\mathbb{P}}(\rho))-M(\Lambda(\rho))+K_{A'}f(\sqrt{\PF(\Omega'_{p,\Lambda},\mathbb{Q})})\nonumber\\
&=M_{\rho}(\id^{K}\otimes\Lambda\circ\Lambda_{\mathbb{P}})-M_{\rho}(\Lambda)+K_{A'}f(\sqrt{\PF(\Omega'_{p,\Lambda},\mathbb{Q})}),
}
we have
\eq{
m(M(\Lambda_{\mathbb{P}}(\rho)))-m^2K_Af(\sqrt{\PF(\Omega'_{p,\Lambda},\mathbb{Q})})\le M(\rho^{\mathrm{tot}}_{\mathrm{ini}})+m(M_{\rho}(\id^{K}\otimes\Lambda\circ\Lambda_{\mathbb{P}})-M_{\rho}(\Lambda))+m^2K_{A'}f(\sqrt{\PF(\Omega'_{p,\Lambda},\mathbb{Q})}).
}
Therefore, we obtain
\eq{
M(\eta)\ge mM_\rho(\Lambda:\Lambda_{\mathbb{P}})-m^2(K_A+K_{A'})f(\sqrt{\PF(\Omega'_{p,\Lambda},\mathbb{Q})}).
}
By applying the same argument as in the proof of Lemma \ref{Lemm:key_resource} to the case $a_S=b_S=1$ and $c=0$, we complete the proof of \eqref{pre_ene}.
\end{proofof}


\section{List of applicable resource theories}\label{app:application_list}
As discussed in the main text, our framework applies to any resource theory that admits even a single resource measure $M$ satisfying Conditions (i)--(iii); consequently, the theories listed below are by no means exhaustive. 
Note that Conditions (ii) additivity and (iii) H\"{o}lder continuity depend only on the intrinsic properties of the measure $M$, allowing flexibility in choosing free operations, as long as $M$ satisfies monotonicity, i.e., Condition~(i). 

\subsection{Resource theory of energy}

Consider a system composed of subsystems $\{S_i\}_{i=1}^N$ each associated with a local Hamiltonian $H_i$ which acts nontrivially only on the corresponding subsystem $A_i$ and is assumed to be positive semi-definite, i.e., $H_i \geq 0$. Without loss of generality, we set the ground-state energy of the local Hamiltonian to zero by subtracting a constant. The total Hamiltonian is given by $H_{\mathrm{tot}} \coloneqq \sum_{i=1}^N H_i$. In the resource theory of energy, the free states are defined as the ground states of the local Hamiltonians, while the free operations are defined as those that can be implemented by a combination of the following procedures:
\begin{enumerate}
    \item Appending an ancillary system in the ground state.
    \item Performing a global unitary operation $U$ that preserves the total energy, i.e., satisfying $[U, H_{\mathrm{tot}}]=0$.
    \item Tracing out an arbitrary subset of the subsystems.
\end{enumerate}

The expectation value of the Hamiltonian, defined by $E(\rho_S)\coloneqq \Tr(\rho_SH_S)$, satisfies Conditions~(i), (ii-strong), and (iii) in this resource theory, as detailed below:
\begin{description}
    \item[(i)] For a free unitary operator $U$, its adjoint $U^\dag$ is also free since $[U,H_{\mathrm{tot}}]=0$ implies $[U^\dag,H_{\mathrm{tot}} ]=0$. Because both $\mathbb{I}\otimes U$ and $\mathbb{I}\otimes U^\dag$ are energy-conserving, the quantity $E$ is invariant under maps $\id\otimes \calE$ and $\id\otimes \calE^\dag$, where $\calE(\cdot)\coloneqq U(\cdot)U^\dag$. When a subset of the subsystems is traced out, $E$ is non-increasing due to the positive semi-definiteness of the local Hamiltonian. Consequently, $E$ is non-increasing under $\id\otimes\calE$ when $\calE$ is a partial trace operation. 
    \item[(ii-strong)] The additivity of the Hamiltonian implies $E(\rho_{AB})=E(\Tr_B(\rho_{AB}))+E(\Tr_A(\rho_{AB}))$. Therefore, $E$ satisfies (ii-strong) for any set of systems, provided there is no interaction Hamiltonian among these systems.
    \item[(iii)] For any two states $\rho_S$ and $\rho'_S$, we have 
        \begin{align}
            |E(\rho_S)-E(\rho'_S)|&=|\mathrm{Tr}(H_S(\rho_S-\rho'_S))|=|\mathrm{Tr}((H_S-\|H_S\|_\infty/2)(\rho_S-\rho'_S))|\\
            &\leq \|H_S-\|H_S\|_\infty/2\|_\infty \|\rho_S-\rho'_S\|_1= \frac{\Delta_{H_S}}{2} \|\rho_S-\rho'_S\|_1,\label{eq:continuity_energy_exp}
        \end{align}
        where $\|\cdot\|_\infty$ denotes the operator norm. Therefore, for $m$ copies of $S$ system, denoted by $S_1\cdots S_m$, it follows that
        \begin{align}
            |E(\rho^{\otimes m})-E(\sigma_m)|&\leq \frac{\Delta_{H_{S_1}+\cdots+H_{S_m}}}{2} \|\rho^{\otimes m}-\sigma_m\|_1=mK_S\|\rho^{\otimes m}-\sigma_m\|_1,\label{eq:Hoelder_continuity_energy_exp}
        \end{align}
        where we have defined $K_S\coloneqq\frac{\Delta_{H_S}}{2} $. Therefore, Condition~(iii) is satisfied with $a_S=b_S=1$, $c=0$, and $K_S=\|H_S\|_\infty$.
\end{description}

\subsection{Resource theory of asymmetry} 

The resource theory of asymmetry offers an alternative framework for addressing conservation laws. In particular, under time-translation symmetry generated by the unitary $e^{-iHt}$, free states are defined as those invariant under time evolution, while free operations are those that commute with the time-translation operation. Within this framework, quantum fluctuations with respect to the Hamiltonian $H$ serve as a resource.

The quantum Fisher information (QFI) for $\rho_t\coloneqq e^{-iHt}\rho e^{iHt}$ is a commonly used resource monotone in this context. By using the eigenvalue decomposition $\rho=\sum_{i}\lambda_i\ket{i}\bra{i}$, this quantity can be expressed as $\mathcal{F}_H(\rho)=\sum_{i,j}\frac{(\lambda_i-\lambda_j)^2}{\lambda_i+\lambda_j}|\braket{i|H|j}|^2$, where the summation is taken over pairs $(i,j)$ such that $\lambda_i+\lambda_j>0$. 
The QFI satisfies Conditions~(i), (ii), and (iii), as shown below:
\begin{description}
    \item[(i)] In the resource theory of asymmetry, a unitary operation $U$ is free if and only if it commutes with the Hamiltonian. In this case, both $\mathbb{I}\otimes U$ and $\mathbb{I}\otimes U^\dag$  also commute with the total Hamiltonian, implying that $\id\otimes \calE$ and $\id\otimes\calE^\dag$ are free operations, where $\calE(\cdot)\coloneqq U(\cdot)U^\dag$. It is also straightforward to verify that $\id\otimes \calE$ is free when $\calE$ is a partial trace operation. Since the QFI is non-increasing under free operations in the resource theory of asymmetry~\cite{yadin_general_2016}, Condition~(i) is satisfied. 
    \item[(ii)] The additivity is proven in~\cite{hansen_metric_2008}.
    \item[(iii)] In~\cite{augusiak_asymptotic_2016}, the following inequality is proven:
\begin{align}
     |\mathcal{F}_H(\rho_S)-\mathcal{F}_H(\rho'_S)|\leq 32\|H_S\|_\infty^2\sqrt{\|\rho_S-\rho'_S\|_1}.
\end{align}
Therefore, for $m$ copies of $S$ system, denoted by $S_1\cdots S_m$, we find
\begin{align}
    |\mathcal{F}_{H_{\mathrm{tot}}}(\rho^{\otimes m})-\mathcal{F}_{H_{\mathrm{tot}}}(\sigma_m)|\leq 32\|H_{\mathrm{tot}}\|_\infty^2\sqrt{\|\rho^{\otimes m}-\sigma_m\|_1}=m^2K_S\sqrt{\|\rho^{\otimes m}-\sigma_m\|_1},
\end{align}
where $K_S\coloneqq 32\|H_{S}\|_\infty^2$, implying that Eq.~\eqref{regularity_S} is satisfied with $a_S=2$, $b_S=1/2$, and $c=0$. Note that a constant shift in the Hamiltonian does not affect the QFI, and therefore, $\|H_S\|_\infty^2$ in the definition of $K_S$ can be replaced with $(\Delta_{H_S}/2)^2$.
\end{description}

\subsection{Resource theory of athermality}

Yet another important framework concerning energy conservation is the resource theory of athermality. In this resource theory, the free state is the Gibbs state $\tau$ at a fixed temperature $T$, and the free operations are those that can be implemented through an energy-conserving unitary interaction with a thermal bath prepared in the free state, or by tracing out a subsystem.

The relative entropy of athermality, i.e., $A_R(\rho)\coloneqq \frac{1}{\beta}D(\rho\|\tau)=F(\rho)-F(\tau)$, quantifies athermality of a state $\rho$. Here, we reguralize the relative entropy by $\beta$, to simplify the relation to the non-equlilibrium free energy $F(\rho)=\ex{H}_\rho-\frac{1}{\beta} S(\rho)$. This quantity satisfies Conditions~(i), (ii), and (iii), as shown below:
\begin{description}
    \item[(i)] Since $\mathbb{I}\otimes U$ and $\mathbb{I}\otimes U$ are energy-conserving when $U$ is energy conserving, maps $\id\otimes \mathcal{E}$ and $\id\otimes \mathcal{E}$ are free, where $\mathcal{E}(\cdot)\coloneqq U(\cdot)U^\dag$. Also, $\id\otimes \calE$ is free when $\calE$ is a partial trace operation. Since the relative entropy of athermality $A_R$ is non-increasing under free operations in the resource theory of athermality, Conditions(i) is satisfied. 
    \item[(ii)] By using the non-equilibrium free energy $F(\rho)$, the relative entropy of athermality can also be written as $A_R(\rho)=F(\rho)-F(\tau)$. Consequently, the additivity of $A_R$ follows from the additivity of the free energy. 
    \item[(iii)] By using Fannes' inequality:
\begin{align}
    |S(\rho_S)-S(\rho_S')|&\leq 2\|\rho_S-\rho_S'\|_1 (\log d_S-1)+h\left(\frac{\|\rho_S-\rho_S'\|_1}{2}\right)
\end{align}
with the entropy function $h(x)\coloneqq -x \log x -(1-x)\log (1-x)$, we get
\begin{align}
    |A_R(\rho_S)-A_R(\rho_S')|\leq  \left(\Delta H_S+\frac{2}{\beta}\log d_S\right) \|\rho_S-\rho_S'\|_1+\frac{1}{\beta}h\left(\frac{\|\rho_S-\rho_S'\|_1}{2}\right),
\end{align}
where we have used Eq.~\eqref{eq:continuity_energy_exp}.
Therefore, for $m$ copies of $S$ system, denoted by $S_1\cdots S_m$, we find
\begin{align}
    |A_R(\rho^{\otimes m})-A_R(\sigma_m)|\leq m K_S \|\rho^{\otimes m}-\sigma_m\|_1+\frac{1}{\beta}h\left(\frac{\|\rho^{\otimes m}-\sigma_m\|_1}{2}\right),
\end{align}
where $K_S\coloneqq \Delta H_S+\frac{2}{\beta}\log d_S$, implying that Eq.~\eqref{regularity_S} is satisfied with $a_S=b_S=1$ and $c(x)=\frac{1}{\beta}h(x/2)$. 
\end{description}


\subsection{Resource theory of coherence}

The resource theory of coherence investigates superposition with respect to a preferred basis $\{\ket{i}\}$. Its free states are incoherent states, defined as states that are diagonal in the basis $\{\ket{i}\}$, while there are various options for free operations, such as maximally incoherent operations~\cite{aberg_quantifying_2006}, incoherent operations~\cite{baumgratz_quantifying_2014}, and strictly incoherent operations~\cite{winter_operational_2016}. 

A commonly used resource monotone in this theory is the relative entropy of coherence $C_R(\rho)\coloneqq \min_{\sigma: \text{incoherent states}}D(\rho\|\sigma)$. This quantity satisfies Conditions~(i), (ii), and (iii), as shown below:
\begin{description}
    \item[(i)] Regardless of the choice of the above options of free operations, a unitary operation $U$ is free if and only if it is diagonal in the basis $\{\ket{i}\}$. This implies that maps $\id\otimes \mathcal{E}$ and $\id\otimes \mathcal{E}^\dag$ are free if $\calE(\cdot)\coloneqq U(\cdot)U^\dag$ is a free unitary map. In addition, $\id\otimes \calE$ maps an incoherent state into an incoherent state if $\calE$ is a partial trace operation. Indeed, for any incoherent state $\rho=\sum_{i,j}p_{i,j,k}\ket{i_A\otimes j_B\otimes k_C}\bra{i_A\otimes j_B\otimes k_C}$, the state $(\id_A\otimes \mathrm{Tr}_C)(\rho)=\sum_{i,j,k}p_{i,j,k}\ket{i_A\otimes j_B}\bra{i_A\otimes j_B}$ is incoherent. Therefore, the relative entropy of coherence $C_R$ satisfies Condition~(i). 
    \item[(ii)] As shown in~\cite{aberg_quantifying_2006}, the relative entropy of coherence $C_R$ is given by $C_R(\rho)= S(\Delta(\rho))-S(\rho)$, where $\Delta(\rho)\coloneqq \sum_i\braket{i|\rho|i}\ket{i}\bra{i}$ denotes the dephased state. Additivity of $C_R$ follows directly from this expression. 
    \item[(iii)] As shown in~\cite{takagi_universal_2020} using a general asymptotic continuity property in~\cite{Winter2016tight}, it holds
    \begin{align}
        |C_R(\rho_S)-C_R(\rho_S')|\leq 2\log d_S \|\rho_S-\rho_S'\|_1+(1+\|\rho_S-\rho_S'\|_1/2)h\left(\frac{\|\rho_S-\rho_S'\|_1/2}{1+\|\rho_S-\rho_S'\|_1/2}\right).
    \end{align}
    Therefore, for $m$ copies of $S$ system, denoted by $S_1\cdots S_m$, we get
    \begin{align}
         |C_R(\rho^{\otimes m})-C_R(\sigma_m)|\leq m K_S \|\rho^{\otimes m}-\sigma_m\|_1+(1+\|\rho^{\otimes m}-\sigma_m\|_1/2)h\left(\frac{\|\rho^{\otimes m}-\sigma_m\|_1/2}{1+\|\rho^{\otimes m}-\sigma_m\|_1/2}\right),
    \end{align}
    where $K_S\coloneqq 2\log d_S$, implying that Eq.~\eqref{regularity_S} is satisfied with $a_S=b_S=1$ and $c(x)=(1+x/2)h(x/(2+x))$.
\end{description}


\subsection{Resource theory of magic}

In the resource theory of magic, the free states are the stabilizer states, defined as the convex combination of pure states produced from the computational basis state by Clifford gates, while the free operations are Clifford operations. 
Among the three conditions of resource monotones required for our results, the additivity property is arguably the most non-trivial one. 
In the case of odd dimensions, measures based on min- and max-relative entropies with an extended set of free states based on Wigner negativity were shown to be additive~\cite{wang_efficiently_2020}. 
For the case of single-qubit systems, the logarithm of stabilizer fidelity was shown to be additive~\cite{Bravyi2019simulationofquantum} for pure states and was later extended to general mixed states and monotones based on more general quantum R\'enyi divergences, which include the stabilizer fidelity as a special case~\cite{Rubboli2024mixedstate}. 
In addition, the logarithm of generalized robustness (equivalent to the max-relative entropy of magic) and the dyadic negativity were shown to be additive~\cite{Seddon2021quantifying}.

Here, let us focus on the max-relative entropy of magic as an example.
The max-relative entropy of magic defined by $ \mathcal{D}_{\max}(\rho)\coloneqq \min_{\sigma:\text{stabilizer states}}D_{\max}(\rho\|\sigma)$, where $D_{\max}(\rho\|\sigma)\coloneqq \inf \{r\mid \rho\leq 2^r\sigma\}$ denotes the max-relative entropy, satisfies Conditions~(i), (ii), and (iii), as shown below:
\begin{description}
    \item[(i)] For a Clifford unitary operation $U$, both $\mathbb{I}\otimes U$ and $\mathbb{I}\otimes U^\dag$ are also Clifford. Consequently, both $\id\otimes \calE$ and $\id\otimes \calE^\dag$ are free if $\calE(\cdot)\coloneqq U(\cdot)U^\dag$ is a free unitary map. Since $\id\otimes \calE$ is also Clifford when $\calE$ is a partial trace operation, Condition~(i) follows from the monotonicity of the max-thauma under Clifford operations. 
    \item[(ii)] The additivity is proven in~\cite{wang_efficiently_2020}. 
    \item[(iii)] As shown in~\cite{takagi_universal_2020}, it holds
\begin{align}
    |\mathcal{D}_{\max}(\rho)-\mathcal{D}_{\max}(\rho')|\leq 2d_S\|\rho-\rho'\|_1.
\end{align}
Therefore, for $m$ copies of $S$ system, denoted by $S_1\cdots S_m$, we get
\begin{align}
    |\mathcal{D}_{\max}(\rho^{\otimes m})-\mathcal{D}_{\max}(\sigma_m)|\leq K_Se^{ a_S'm }\|\rho^{\otimes m}-\sigma_m\|_1,
\end{align}
where $K_S\coloneqq 2$ and $a_S'\coloneqq \log d_S$, implying that Eq.~\eqref{regularity_S} is satisfied with $b_S=1$ and $c=0$.
\end{description}


\section{Application 1: Lower bounds for energy costs of arbitrary channels}
\subsection{General formula: tradeoff between energy cost and irreversibility}
In this subsection, we apply our results to the energy cost of channels and give a universal bound for the energy cost for an arbitrary quantum channel.
In the resource theory of energy, the expectation value of energy is a resource measure satisfying the conditions (i), (ii), and (iii) with $a_S=b_S=1$ and $c=0$.
Furthermore, it satisfies the condition (ii-strong) for any set of systems, provided there is no interaction Hamiltonian among these systems.
Therefore, we can apply Theorem \ref{Thm:RItradeoff_alter_SM} to the resource theory of energy.
Then, for the energy cost $E_c(\Lambda)$, we prove the following theorem:
\begin{theorem}\label{Thm:EItradeoff_SM}
Let $A$ and $A'$ be quantum systems with Hamiltonians $H$ and $H'$. For any quantum channel $\Lambda$ from $A$ to $A'$ and any ensemble $\Omega_{1/2}:=(\ket{\psi_1},\ket{\psi_2})_{1/2}$ whose $\ket{\psi_1}$ and $\ket{\psi_2}$ are mutually orthogonal, the following relation holds:
\eq{
E_{c}(\Lambda)\ge\frac{\calC(\Lambda,\Omega_{1/2})^2}{2(\Delta_{H}+\Delta_{H'})f(\delta(\Lambda,\Omega_{1/2}))}-\calC(\Lambda,\Omega_{1/2}).\label{EItradeoff_SM}
}
Here, $\Delta_X$ is the difference between the maximum and minimum eigenvalues of $X$, and $\calC(\Lambda,\Omega_{1/2}):=|\bra{\psi_2}H-\Lambda^\dagger(H')\ket{\psi_1}|$.
\end{theorem}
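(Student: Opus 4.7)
\begin{proofof}{Theorem \ref{Thm:EItradeoff_SM} (proposal)}
The plan is to specialize Theorem~\ref{Thm:RItradeoff_alter_SM} to the resource theory of energy, with energy expectation $E(\rho)=\Tr[\rho H]$ as the measure $M$. As shown in Appendix~\ref{app:application_list}, $E$ satisfies condition~(i), condition~(ii-strong) for $\{A,A',B,B',K\}$ (since no interaction Hamiltonians appear between these systems in our dilation), and condition~(iii) with $a_S=b_S=1$, $c=0$, and $K_S=\Delta_{H_S}/2$. Thus Theorem~\ref{Thm:RItradeoff_alter_SM} applies, with $K_A+K_{A'}=(\Delta_H+\Delta_{H'})/2$, hence $4(K_A+K_{A'})=2(\Delta_H+\Delta_{H'})$.

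The main task is to instantiate the ingredients $(\mathbb{P},\rho,\mathscr{K})$ so that the numerator $E_{\rho}(\Lambda:\Lambda_{\mathbb{P}})$ reduces to $\calC(\Lambda,\Omega_{1/2})$. I take $\mathbb{P}=\{\ket{\psi_1}\!\bra{\psi_1},\ket{\psi_2}\!\bra{\psi_2}\}$, which discriminates the test ensemble with probability one; the register $\mathscr{K}=(K,\{\ket{k}\})$ is assigned a trivial Hamiltonian $H_K=0$; and I take
\eq{
\rho=\ket{\phi}\!\bra{\phi},\qquad \ket{\phi}=\tfrac{1}{\sqrt{2}}\bigl(\ket{\psi_1}+e^{i\theta}\ket{\psi_2}\bigr),
}
with $\theta$ to be fixed later. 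Clearly $\Lambda_{\mathbb{P}}(\rho)=\tfrac{1}{2}\sum_k \ket{\psi_k}\!\bra{\psi_k}\otimes\ket{k}\!\bra{k}_K$, so the hypothesis on $\rho$ in Theorem~\ref{Thm:RItradeoff_alter_SM} is met.

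Next I compute each of the three energy changes appearing in $E_{\rho}(\Lambda:\Lambda_{\mathbb{P}})=|E_\rho(\Lambda)+E_\rho(\Lambda_{\mathbb{P}})-E_\rho(\id_K\otimes\Lambda\circ\Lambda_{\mathbb{P}})|_+$ using $E_\rho(\Gamma)=\Tr[\Gamma(\rho)\,H_{\rm out}]-\Tr[\rho H]$ together with $H_K=0$. A direct calculation shows that the diagonal contributions $\tfrac{1}{2}\bra{\psi_k}(\Lambda^\dagger(H')-H)\ket{\psi_k}$ cancel between $E_\rho(\Lambda)$ and $E_\rho(\id_K\otimes\Lambda\circ\Lambda_{\mathbb{P}})$, and the off-diagonal terms involving $\bra{\psi_1}H\ket{\psi_2}$ in $E_\rho(\Lambda_{\mathbb{P}})$ cancel against the corresponding term in $E_\rho(\id_K\otimes\Lambda\circ\Lambda_{\mathbb{P}})$; what survives is
\eq{
E_\rho(\Lambda)+E_\rho(\Lambda_{\mathbb{P}})-E_\rho(\id_K\otimes\Lambda\circ\Lambda_{\mathbb{P}})=\mathrm{Re}\!\left(e^{i\theta}\bra{\psi_1}\bigl(\Lambda^\dagger(H')-H\bigr)\ket{\psi_2}\right).
}
Choosing $\theta$ so that the right-hand side is nonnegative and equal to the modulus then yields $E_\rho(\Lambda:\Lambda_{\mathbb{P}})=|\bra{\psi_2}(H-\Lambda^\dagger(H'))\ket{\psi_1}|=\calC(\Lambda,\Omega_{1/2})$, using self-adjointness of $H$ and $\Lambda^\dagger(H')$.

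Inserting this identification into the bound of Theorem~\ref{Thm:RItradeoff_alter_SM} and minimizing over all free-unitary Stinespring dilations $(V,\eta^B)$ of $\Lambda$ gives $E_c(\Lambda)=\min_{(V,\eta)}E(\eta)\ge \calC(\Lambda,\Omega_{1/2})^2/[2(\Delta_H+\Delta_{H'})f(\delta(\Lambda,\Omega_{1/2}))]-\calC(\Lambda,\Omega_{1/2})$, which is exactly \eqref{EItradeoff_SM}. The main obstacle I anticipate is purely algebraic: keeping track of the diagonal and off-diagonal contributions in the three-term combination $E_\rho(\Lambda)+E_\rho(\Lambda_{\mathbb{P}})-E_\rho(\id_K\otimes\Lambda\circ\Lambda_{\mathbb{P}})$ carefully enough to confirm that only the cross term $\bra{\psi_1}(\Lambda^\dagger(H')-H)\ket{\psi_2}$ survives; everything else is a direct invocation of the already-proven Theorem~\ref{Thm:RItradeoff_alter_SM}.
\end{proofof}
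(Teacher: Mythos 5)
Your proposal is correct and follows essentially the same route as the paper's proof: both specialize Theorem~\ref{Thm:RItradeoff_alter_SM} to the energy measure with $K_A+K_{A'}=(\Delta_H+\Delta_{H'})/2$, take $\mathbb{P}=\{\dm{\psi_1},\dm{\psi_2}\}$ with a trivial-Hamiltonian register, and evaluate $E_\rho(\Lambda:\Lambda_{\mathbb{P}})$ on the superposition state $\tfrac{1}{\sqrt2}(\ket{\psi_1}+e^{i\theta}\ket{\psi_2})$ to find that only the cross term $\mathrm{Re}\bigl(e^{i\theta}\bra{\psi_1}(\Lambda^\dagger(H')-H)\ket{\psi_2}\bigr)$ survives, optimizing the phase to reach $\calC(\Lambda,\Omega_{1/2})$. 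The bookkeeping of which diagonal and off-diagonal pieces cancel is stated slightly loosely but the resulting identity matches the paper's.
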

This inequality conveys a crucial message: if a channel satisfies
$\mathcal{C}(\Omega_{1/2},\Lambda)>0 $ and $\delta(\Lambda,\Omega_{1/2})=0$ for at least one ensemble $\Omega_{1/2}$, its implementation requires infinite energy.

\textbf{Deriving \eqref{EItradeoff} in the main text:}
By substituting $f(x)=4x+x^2$ into \eqref{EItradeoff_SM} and using $\frac{1}{1+x}\ge1-x$, we get
\eq{
E_{c}(\Lambda)&\ge\frac{\calC(\Lambda,\Omega_{1/2})^2}{2(\Delta_{H}+\Delta_{H'})f(\delta(\Lambda,\Omega_{1/2}))}-\calC(\Lambda,\Omega_{1/2})\nonumber\\
&\ge\frac{\calC(\Lambda,\Omega_{1/2})^2}{8(\Delta_{H}+\Delta_{H'})\delta(\Lambda,\Omega_{1/2})}-\frac{\calC(\Lambda,\Omega_{1/2})^2}{32(\Delta_H+\Delta_{H'})}-\calC(\Lambda,\Omega_{1/2})\nonumber\\
&\ge\frac{\calC(\Lambda,\Omega_{1/2})^2}{8(\Delta_{H}+\Delta_{H'})\delta(\Lambda,\Omega_{1/2})}-\frac{65}{64}\calC(\Lambda,\Omega_{1/2})\label{middle}\\
&\ge\frac{\calC(\Lambda,\Omega_{1/2})^2}{8(\Delta_{H}+\Delta_{H'})\delta(\Lambda,\Omega_{1/2})}-2\calC(\Lambda,\Omega_{1/2})
}

Here, in the second-to-last line, we used $\calC(\Lambda,\Omega_{1/2})\le(\Delta_H+\Delta_{H'})/2$, which is derived as follows:
\eq{
\calC(\Lambda,\Omega_{1/2})&\le|\bra{\psi_1}H\ket{\psi_2}|+
|\bra{\psi_1}\Lambda^\dagger(H')\ket{\psi_2}|\nonumber\\
&\le\frac{\Delta_H}{2}+\frac{\Delta_{\Lambda^\dagger(H')}}{2}\nonumber\\
&\le\frac{\Delta_H}{2}+\frac{\Delta_{H'}}{2}.
}

\begin{proofof}{Theorem \ref{Thm:EItradeoff_SM}}
As shown in the section \ref{app:application_list}, in the resource theory of energy, $K_S=\Delta_{H_S}/2$ ($H_S$ is the Hamiltonian of $S$), $a_S=b_S=1$ and $c=0$ hold.
We also define the register $\mathscr{K}:=(K,\{\ket{k}\})$ whose Hamiltonian $H_K$ is trivial.
Since there are no interaction Hamiltonians between $A$,$A'$,$B$, $B'$ and $K$, the condition (ii-strong) holds for $\{A,A',B,B',K\}$.
Therefore, it suffices to prove that the following equality holds for a state $\rho$ satisfying $\Lambda_{\mathbb{P}}(\rho)=\sum_k\frac{1}{2}\ket{\psi_k}\bra{\psi_k}\otimes\ket{k}\bra{k}_K$, where $\mathbb{P}:=\{P_k\}_{k=1,2}$ is a PVM discriminating $\psi_1$ and $\psi_2$ with probability 1:
\eq{
M_\rho(\Lambda:\Lambda_{\mathbb{P}})=\calC(\Lambda,\Omega_{1/2})=|\bra{\psi_1}H-\Lambda^\dagger(H')\ket{\psi_2}|.\label{energy_key}
}

Consider a pure state $\rho=\ket{\rho}\bra{\rho}$, where $\ket{\rho}=\frac{\ket{\psi_1}+e^{-i\theta}\ket{\psi_2}}{\sqrt{2}}$. For any $\theta\in\mathbb{R}$, this state satisfies the condition  $\Lambda_{\mathbb{P}}(\rho)=\sum_k\frac{1}{2}\ket{\psi_k}\bra{\psi_k}\otimes\ket{k}\bra{k}_K$. From
\eq{
M_{\rho}(\Lambda_{\mathbb{P}})&=M(\Lambda_{\mathbb{P}}(\rho))-M(\rho)\nonumber\\
&=\frac{-e^{-i\theta}\bra{\psi_1}H\ket{\psi_2}-e^{i\theta}\bra{\psi_2}H\ket{\psi_1}}{2},
}
and
\eq{
M_{\rho}(\Lambda)-M_{\rho}(\id^{K}\otimes\Lambda\circ\Lambda_{\mathbb{P}})&=M(\Lambda(\rho))-M(\id^{K}\otimes\Lambda\circ\Lambda_{\mathbb{P}}(\rho))\nonumber\\
&=\frac{e^{-i\theta}\bra{\psi_1}\Lambda^\dagger(H')\ket{\psi_2}+e^{i\theta}\bra{\psi_2}\Lambda^\dagger(H')\ket{\psi_1}}{2},
}
we get
\eq{
M_{\rho}(\Lambda:\Lambda_{\mathbb{P}})&=M_{\rho}(\Lambda_{\mathbb{P}})+M_{\rho}(\Lambda)-M_{\rho}(\id^{K}\otimes\Lambda\circ\Lambda_{\mathbb{P}})\nonumber\\
&=Re(e^{i\theta}\bra{\psi_2}\Lambda^\dagger(H')-H\ket{\psi_1}).
}
Therefore, by choosing $\theta$ appropriately, we obtain
\eq{
M_{\rho}(\Lambda:\Lambda_{\mathbb{P}})=|\bra{\psi_2}H-\Lambda^\dagger(H')\ket{\psi_1}|.
}
\end{proofof}

\subsection{Energy-error bounds for  pairwise reversible channels}
The inequality \eqref{EItradeoff_SM} implies that any channel $\Lambda$ satisfying $\calC(\Lambda,\Omega_{1/2})>0$ and  $\delta(\Lambda,\Omega_{1/2})=0$ requires an infinite energy cost.
Furthermore, the following corollary of the inequality \eqref{EItradeoff_SM} restricts the energy cost of approximate implementation of $\Lambda$:
\begin{corollary}\label{Coro:EEtradeoff_SM}
Let $A$ and $A'$ be quantum systems with Hamitlonian $H$ and $H'$. 
Let $\Lambda$ be a quantum channel  from $A$ to $A'$ satisfying $\calC(\Lambda,\Omega_{1/2})>0$ and  $\delta(\Lambda,\Omega_{1/2})=0$  for an ensemble $\Omega_{1/2}:=(\ket{\psi_1},\ket{\psi_2})_{1/2}$ whose $\ket{\psi_1}$ and $\ket{\psi_2}$ are mutually orthogonal. Then, the energy cost $E^{\epsilon}_{c}(\Lambda)$ of approximate implementation of $\Lambda$ satisfies
\eq{
E^{\epsilon}_{c}(\Lambda)\ge\frac{\calC(\Lambda,\Omega_{1/2})^2}{8(\Delta_{H}+\Delta_{H'})\epsilon}-2\calC(\Lambda,\Omega_{1/2})-3\Delta_{H'}\epsilon\label{EEtradeoff_SM}.
}
\end{corollary}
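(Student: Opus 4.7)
The plan is to reduce the approximate-implementation case to the exact-implementation bound of Theorem \ref{Thm:EItradeoff_SM} applied to the actually realized channel. Let $(V,\eta)$ be an arbitrary $\epsilon$-implementation, meaning $V$ is a free unitary and the channel $\tilde\Lambda(\cdot):=\Tr_{B'}[V(\cdot\otimes\eta)V^\dagger]$ satisfies $D_F(\Lambda,\tilde\Lambda)\le\epsilon$. By construction $(V,\eta)$ is a zero-error implementation of $\tilde\Lambda$, so Theorem \ref{Thm:EItradeoff_SM} gives $E(\eta)\ge E_c(\tilde\Lambda)\ge\frac{\calC(\tilde\Lambda,\Omega_{1/2})^2}{2(\Delta_H+\Delta_{H'})f(\delta(\tilde\Lambda,\Omega_{1/2}))}-\calC(\tilde\Lambda,\Omega_{1/2})$. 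It then suffices to bound $\delta(\tilde\Lambda,\Omega_{1/2})$ from above and $\calC(\tilde\Lambda,\Omega_{1/2})$ from below uniformly in $(V,\eta)$, after which a minimization over valid implementations yields \eqref{EEtradeoff_SM}.

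For the irreversibility, the hypothesis $\delta(\Lambda,\Omega_{1/2})=0$ supplies a CPTP recovery $\calR_*$ with $\calR_*\circ\Lambda(\ket{\psi_k}\bra{\psi_k})=\ket{\psi_k}\bra{\psi_k}$; using $\calR_*$ as a (possibly suboptimal) recovery for $\tilde\Lambda$ and invoking contractivity of $D_F$ under CPTP maps gives $D_F(\ket{\psi_k}\bra{\psi_k},\calR_*\circ\tilde\Lambda(\ket{\psi_k}\bra{\psi_k}))\le D_F(\Lambda(\ket{\psi_k}\bra{\psi_k}),\tilde\Lambda(\ket{\psi_k}\bra{\psi_k}))\le\epsilon$, whence $\delta(\tilde\Lambda,\Omega_{1/2})\le\epsilon$, and by monotonicity of $f$ the denominator $f(\delta(\tilde\Lambda,\Omega_{1/2}))$ may be replaced by $f(\epsilon)$. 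The harder piece is the lower bound on $\calC(\tilde\Lambda,\Omega_{1/2})$, because $\bra{\psi_2}\tilde\Lambda^\dagger(H')\ket{\psi_1}$ depends on the off-diagonal operator $\ket{\psi_1}\bra{\psi_2}$, which is not a density operator, so channel closeness cannot be transported directly. I handle this via a phase-averaging identity: introducing $\ket{\psi_\theta}:=(\ket{\psi_1}+e^{i\theta}\ket{\psi_2})/\sqrt{2}$ and $\rho_\theta:=\ket{\psi_\theta}\bra{\psi_\theta}$, one has $\rho_\theta-\rho_{\theta+\pi}=e^{-i\theta}\ket{\psi_1}\bra{\psi_2}+e^{i\theta}\ket{\psi_2}\bra{\psi_1}$, which isolates the off-diagonal part. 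Tested against the traceless Hermitian operator $X:=H'-\tfrac{1}{2}(\lambda_{\max}(H')+\lambda_{\min}(H'))$, for which $\|X\|_\infty=\Delta_{H'}/2$, and with $\theta$ chosen so the resulting complex number is real and positive, H\"{o}lder's inequality together with $\|(\Lambda-\tilde\Lambda)(\rho_\theta)\|_1\le 2D_F(\Lambda(\rho_\theta),\tilde\Lambda(\rho_\theta))\le 2\epsilon$ (Fuchs--van de Graaf) yields $|\bra{\psi_2}(\Lambda^\dagger-\tilde\Lambda^\dagger)(H')\ket{\psi_1}|\le\Delta_{H'}\epsilon$; the constant shift in $X$ drops out because $\Lambda-\tilde\Lambda$ is trace-annihilating. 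A reverse triangle inequality then gives $\calC(\tilde\Lambda,\Omega_{1/2})\ge\calC(\Lambda,\Omega_{1/2})-\Delta_{H'}\epsilon$.

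Substituting both bounds into Theorem \ref{Thm:EItradeoff_SM}, writing $f(\epsilon)=4\epsilon(1+\epsilon/4)$ and using $1/(1+x)\ge 1-x$, the leading term $\calC(\Lambda,\Omega_{1/2})^2/[8(\Delta_H+\Delta_{H'})\epsilon]$ is isolated; the cross term from $(\calC(\Lambda,\Omega_{1/2})-\Delta_{H'}\epsilon)^2$ and the subleading $\calC(\Lambda,\Omega_{1/2})^2/[32(\Delta_H+\Delta_{H'})]$ correction are both $O(\calC(\Lambda,\Omega_{1/2}))$ after applying the universal bound $\calC(\Lambda,\Omega_{1/2})\le(\Delta_H+\Delta_{H'})/2$ already used in deriving \eqref{EItradeoff}, and combine with $-\calC(\tilde\Lambda,\Omega_{1/2})\ge -\calC(\Lambda,\Omega_{1/2})-\Delta_{H'}\epsilon$ to produce the offset $-2\calC(\Lambda,\Omega_{1/2})-3\Delta_{H'}\epsilon$; in the trivial regime $\calC(\Lambda,\Omega_{1/2})<\Delta_{H'}\epsilon$ the right-hand side of \eqref{EEtradeoff_SM} is already negative and the bound holds automatically from $E_c^\epsilon(\Lambda)\ge 0$. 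Taking the minimum over valid $(V,\eta)$ then produces \eqref{EEtradeoff_SM}. The main obstacle is the polarization step for $\calC$, since channel closeness $D_F(\Lambda,\tilde\Lambda)\le\epsilon$ must be transported through carefully chosen density matrices to control an off-diagonal matrix element of $\tilde\Lambda^\dagger(H')$; once that bound is in place, the remaining manipulations are straightforward algebraic bookkeeping.
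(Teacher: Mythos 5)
Your proposal is correct and follows essentially the same route as the paper: apply the exact bound of Theorem \ref{Thm:EItradeoff_SM} to the channel $\tilde\Lambda$ actually realized by the $\epsilon$-implementation, then control the perturbations $\delta(\tilde\Lambda,\Omega_{1/2})\le\epsilon$ and $|\calC(\Lambda,\Omega_{1/2})-\calC(\tilde\Lambda,\Omega_{1/2})|\lesssim\Delta_{H'}\epsilon$ before the final bookkeeping. The only difference is that the paper imports both stability estimates from Ref.~\cite{Tajima_Takagi2025}, whereas you prove them directly (the recovery-map/contractivity argument for $\delta$, and the phase-polarization plus H\"older argument for $\calC$, which in fact yields the slightly sharper constant $\Delta_{H'}\epsilon$ in place of the cited $2\Delta_{H'}\epsilon$); both versions lead to the stated inequality.
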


\begin{proofof}{Corollary \ref{Coro:EEtradeoff_SM}}
Let $\Lambda'$ be a CPTP map from $A$ to $A'$ satisfying $\Lambda'\sim_{\epsilon}\Lambda$.
As proven in \cite{Tajima_Takagi2025}, the following relation holds for such a map $\Lambda'$:
\eq{
\delta(\Omega_{1/2},\Lambda')\le\epsilon.
}
Due to $\calC(\Omega_{1/2},\Gamma)=\|\psi_1(H-\Gamma^\dagger(H'))\psi_2\|_2$ and the triangle inequality of 2-norm, we obtain
\eq{
|\calC(\Lambda,\Omega_{1/2})-\calC(\Lambda',\Omega_{1/2})|\le
\|\psi_1(\Lambda^\dagger(H')-\Lambda'^\dagger(H'))\psi_2\|_2.
}
Combining the above and the following inequality in Ref.~\cite{Tajima_Takagi2025},
\eq{
\|\psi_1(\Lambda^\dagger(H')-\Lambda'^\dagger(H'))\psi_2\|_2\le2\Delta_{H'}\epsilon,
}
we obtain
\eq{
|\calC(\Lambda,\Omega_{1/2})-\calC(\Lambda',\Omega_{1/2})|\le2\Delta_{H'}\epsilon.
}
Therefore, substituting this relation into \eqref{middle}, we get
\eq{
E_{c}(\Lambda')&\ge\frac{|\calC(\Lambda,\Omega_{1/2})-2\Delta_{H'}\epsilon|^2_+}{8(\Delta_{H}+\Delta_{H'})\epsilon}-\frac{65}{64}\calC(\Lambda,\Omega_{1/2})-3\Delta_{H'}\epsilon\nonumber\\
&\ge
\frac{\calC(\Lambda,\Omega_{1/2})^2-4\Delta_{H'}\epsilon\calC(\Lambda,\Omega_{1/2})}{8(\Delta_{H}+\Delta_{H'})\epsilon}-\frac{65}{64}\calC(\Lambda,\Omega_{1/2})-3\Delta_{H'}\epsilon\nonumber\\
&\ge\frac{\calC(\Lambda,\Omega_{1/2})^2}{8(\Delta_{H}+\Delta_{H'})\epsilon}-2\calC(\Lambda,\Omega_{1/2})-3\Delta_{H'}\epsilon.
}
\end{proofof}

\subsection{Energy-Error tradeoff for specific classes of channels}
In this subsection, we present several classes of quantum channels satisfying $\calC(\Lambda,\Omega_{1/2})>0$ and  $\delta(\Lambda,\Omega_{1/2})=0$ for an ensemble $\Omega_{1/2}$.
\subsubsection{Projective measurements} 
The first example is the projective measurement.
For any quantum system $A$ with Hamiltonian $H$ and any projective measurement channel from $A$ to $K$, defined by
\eq{
\Gamma_{\mathbb{P}}(\cdot):=\sum^{r}_{k=1}\Tr[P_k\cdot]\ket{k}\bra{k}_K
}
whose register $\mathscr{K}:=(K,\{\ket{k}\}^{r}_{k=1})$ has a trivial Hamiltonian $H_K=0$, it is known that there is a $\Omega_{1/2}$ satisfying the following relations \cite{TTK}:
\eq{
\calC(\Gamma_{\mathbb{P}},\Omega_{1/2})&=2\max_{k}\|[P_k,H]\|_{\infty}\label{cos_meas1},\\
\delta(\Gamma_{\mathbb{P}},\Omega_{1/2})&=0.\label{cos_meas2}
}
Therefore, Corollary \ref{Coro:EEtradeoff_SM} implies
\eq{
E^{\epsilon}_c(\Gamma_{\mathbb{P}})\ge\max_{k}\frac{\|[P_k,H]\|^2_{\infty}}{2\Delta_{H}\epsilon}-2\Delta_{H}.
}
Here, we used $\Delta_{H_K}=0$ and 
\eq{
\|[P_k,H]\|_{\infty}&=|\ex{[P_k,H]}_{\psi}|\nonumber\\
&\le2\sqrt{V_{\psi}(P_k)}\sqrt{V_{\psi}(H)}\nonumber\\
&\le\frac{1}{2}\Delta_{H},
}
where $V_\sigma(X):=\ex{X^2}_\sigma-\ex{X}^2_\sigma$ denotes the variance.
Therefore, any projective measurement with projective operators that does not commute with $H$ requires an infinite amount of energy.

We remark on two points.
First, the number of outcomes of the measurement may be any $r$, not necessarily two.
Second, the same bound is valid for the channel 
\eq{
\Lambda_{\mathbb{P}}(\cdot):=\sum^{r}_{k=1}P_k\cdot P_k\otimes\ket{k}\bra{k}_K,
}
i.e.,
\eq{
E^{\epsilon}_c(\Lambda_{\mathbb{P}})\ge\max_{k}\frac{\|[P_k,H]\|^2_{\infty}}{2\Delta_{H}\epsilon}-2\Delta_{H}.
}
This is because the channel $\Gamma_{\mathbb{P}}$ is constructed from $\Lambda_{\mathbb{P}}$ without energy cost, simply by taking a partial trace. 

\subsubsection{Unitary gates and channels realized with an energy-preserving channel and a unitary gate} 
The second example is the unitary gate.
For any quantum system $A$ with Hamiltonian $H$ and any unitary channel $\calU(\cdot):=U\cdot U^\dagger$ on $A$, there is a $\Omega_{1/2}$ such that \cite{TTK}
\eq{
\calC(\Omega_{1/2},\calU)&=\frac{1}{2}\Delta_{H-U^\dagger HU},\label{cos_uni1}\\
\delta(\Omega_{1/2},\calU)&=0.\label{cos_uni2'}
}
Therefore, Corollary \ref{Coro:EEtradeoff_SM} implies
\eq{
E^{\epsilon}_c(\calU)\ge\frac{\Delta^2_{H-U^\dagger HU}}{64\Delta_{H}\epsilon}-2\Delta_{H}-3\Delta_{H}\epsilon,
}
where we used $\Delta_{H-U^\dagger HU}\le2\Delta_{H}$.

Furthermore, our results also restrict channels written in the following form:
\eq{
\Lambda=\calN\circ\calU,
}
where $U$ is a unitary such that $\braket{e_1|U^\dagger H U|e_2}\neq 0$ for eigenstates $\ket{e_1}$ and $\ket{e_2}$ of $H$, and $\calN$ is a channel from $A$ to $A$ satisfying $\calN(\dm{e_{1,2}})=\dm{e_{1,2}}$.
Indeed, it is shown \cite{TTK} that there is a $\Omega_{1/2}$ such that
\eq{
\calC(\Omega_{1/2},\calU)&=|\braket{e_1|U^\dagger H U|e_2}|,\label{cos_uni1'}\\
\delta(\Omega_{1/2},\calU)&=0.\label{cos_uni2}
}
Therefore, Corollary \ref{Coro:EEtradeoff_SM} implies
\eq{
E^{\epsilon}_c(\calN\circ\calU)\ge\frac{|\braket{e_1|U^\dagger H U|e_2}|^2}{16\Delta_{H}\epsilon}-\Delta_{H}-3\Delta_{H}\epsilon,
} 
where we used $|\braket{e_1|U^\dagger H U|e_2}|\le\frac{1}{2}\Delta_H$.

\subsubsection{Gibbs-preserving operations}
A Gibbs-preserving operation is a quantum operation that leaves a given Gibbs state invariant. Specifically, consider a quantum system described by a Hamiltonian $H$, and its corresponding Gibbs state $\gamma = e^{-\beta H}/\mathrm{Tr}(e^{-\beta H})$. A Gibbs-preserving operation $\mathcal{E}$ satisfies $\mathcal{E}(\gamma) = \gamma$, reflecting the physical constraint that such an operation cannot create or destroy thermal equilibrium at inverse temperature $\beta$. These operations play a central role in quantum thermodynamics and resource theories, serving as fundamental constraints for thermodynamic processes at the quantum level.

In Ref.~\cite{Tajima_Takagi2025}, it is shown that there are infinitely many Gibbs preserving operations satisfying $\calC(\Lambda,\Omega_{1/2})>0$ and  $\delta(\Lambda,\Omega_{1/2})=0$ for a proper pure orthogonal ensemble $\Omega_{1/2}$.
Therefore, such Gibbs-preserving operations require infinite energy cost, and the energy costs of their approximate implementations are inversely proportional to the implementation error. 

\section{Application 2: Lower bounds for athermality (and work) costs of arbitrary channels}
In this section, we apply our results to the resource theory of athermality.
In this application, we treat the relative entropy of athermality as the resource measure $M$:
\eq{
A_R(\rho):=\frac{1}{\beta}D(\rho\|\tau)=F(\rho)-F(\tau).
}
This quantity represents the difference between the non-equilibrium free energy and the equilibrium energy, and is sometimes called the ``free energy difference.''
This quantity also represents the maximum amount of energy extractable from the system by an isothermal process in which the Hamiltonian remains unchanged before and after the operation. In this sense, it quantifies the extrable energy from the system via an isothermal process.
We use the symbols $A_c$, $\Aem$ and $\Acos$ for $M_c$, $\Mem$ and $\Mcos$ whose $M$ are substituted by $A_R$.

\subsection{Athermality-irreversibitlity tradeoff and Athermality-error tradeoff}
In this case, our main theorem \ref{Thm:RItradeoff_SM} reduces to the following theorem:
\begin{theorem}\label{Thm:ATItradeoff}
Let $A$ and $A'$ be quantum systems with Hamiltonians $H$ and $H'$.
For any channel $\Lambda$ from $A$ to $A'$, any test ensemble $\Omega_p$, and any two-level register $\mathscr{K}:=(K,\{\ket{k}_K\}_{k=1,2})$ with Hamiltonian $H_K$ the following inequality holds:
\eq{
A_c(\Lambda)\ge\frac{|\Aem(\Omega_p|\mathscr{K})-\Acos(\Omega'_{p,\Lambda}|\mathscr{K})|^2_+}{16K_A\delta(\Lambda,\Omega_p)}-h',\label{eq:ATItradeoff}
}
where $K_A=\Delta_H+\frac{2}{\beta}\log d_A$, and the function $|x|_+$ returns $x$ when $x$ is positive, and returns $0$ otherwise. And 
\eq{
h':=\frac{1}{\beta}\left(\log2+|\Aem(\Omega_p|\mathscr{K})-\Acos(\Omega'_{p,\Lambda}|\mathscr{K})|_++\frac{|\Aem(\Omega_p|\mathscr{K})-\Acos(\Omega'_{p,\Lambda}|\mathscr{K})|^2_+}{64K_A}.\right)
}

Furthermore, when there is a test ensemble $\Omega_p$ and a register $\mathscr{K}$ satisfying
\eq{
|\Aem(\Omega_p|\mathscr{K})-\Acos(\Omega'_{p,\Lambda}|\mathscr{K})|_+&>0\label{AD-cond1},
\\
\delta(\Lambda,\Omega_p)&=0,\label{AD-cond2}
}
namely, when the cost of $\Lambda$ diverges, the following relation holds:
\eq{
A^{\epsilon}_c(\Lambda)\ge\frac{|\Aem(\Omega_p|\mathscr{K})-\Acos(\Omega'_{p,\Lambda}|\mathscr{K})|^2_+}{16K_A\epsilon}-h',\label{eq:ATEtradeoff}
}
\end{theorem}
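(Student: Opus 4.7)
The plan is to obtain Theorem~\ref{Thm:ATItradeoff} as a direct specialization of the general resource-irreversibility tradeoff (Theorem~\ref{Thm:RItradeoff_SM}) to the resource theory of athermality with $M = A_R$. First, I would invoke the analysis of Section~\ref{app:application_list}, which verifies that $A_R(\rho) = D(\rho\|\tau)/\beta = F(\rho) - F(\tau)$ satisfies conditions~(i), (ii), and (iii) with the explicit parameters $a_S = b_S = 1$, $K_S = \Delta_{H_S} + \tfrac{2}{\beta}\log d_S$, and continuity function $c(x) = h(x/2)/\beta$, where $h$ is the binary entropy. Since $h(x/2) \le \log 2$ on $x \in [0,2]$, the constant $c_{\max}$ appearing in Theorem~\ref{Thm:RItradeoff_SM} evaluates to $c_{\max} = (\log 2)/\beta$.

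Next, I would substitute these parameters into inequality~\eqref{eq:RItradeoff_SM} of Theorem~\ref{Thm:RItradeoff_SM} and apply the same algebraic simplification used in the main text to pass from \eqref{eq:RItradeoff_SM} to \eqref{eq:RItradeoff}. Concretely, expanding $f(\delta) = 4\delta + \delta^{2}$ in the denominator and exploiting $1/(1+\delta/4) \ge 1 - \delta/4$ for $\delta \ge 0$ yields
\begin{equation}
A_c(\Lambda) \ge \frac{|\Aem(\Omega_p|\mathscr{K}) - \Acos(\Omega'_{p,\Lambda}|\mathscr{K})|_+^2}{16 K_A\, \delta(\Lambda,\Omega_p)} - c_{\max} - |\Aem - \Acos|_+ - \frac{|\Aem - \Acos|_+^2}{64 K_A},
\end{equation}
which is exactly~\eqref{eq:ATItradeoff} after collecting the last three constants into the definition of $h'$, with $c_{\max} = (\log 2)/\beta$ inserted.

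For the approximate-implementation bound~\eqref{eq:ATEtradeoff}, I would then invoke \eqref{eq:REtradeoff_SM} of Theorem~\ref{Thm:RItradeoff_SM}, which, under the hypotheses $|\Mem - \Mcos|_+ > 0$ and $\delta(\Lambda,\Omega_p) = 0$, permits replacing $\delta(\Lambda,\Omega_p)$ by $\epsilon$ and $M_c$ by $M_c^{\epsilon}$ in the general trade-off. Specializing $M$ to $A_R$ and carrying out the identical simplification yields~\eqref{eq:ATEtradeoff} with the same constant $h'$.

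The main obstacle is purely bookkeeping rather than mathematical depth: I must check that the Fannes-type continuity bound for $A_R$ composes correctly across $m$ non-interacting copies so that one truly obtains $a_S = b_S = 1$ (with the additive factor $mK_S$) rather than any weaker exponent, and that the continuity function $c$ is uniformly bounded by $c_{\max} = (\log 2)/\beta$. These follow from the additivity of the Hamiltonian on $S_1 \cdots S_m$ together with the dimension identity $\log d_{S_1 \cdots S_m} = m \log d_S$; once verified, Theorem~\ref{Thm:ATItradeoff} is obtained by direct substitution into Theorem~\ref{Thm:RItradeoff_SM}.
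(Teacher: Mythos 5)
Your proposal is correct and follows essentially the same route as the paper: Theorem~\ref{Thm:ATItradeoff} is obtained by specializing Theorem~\ref{Thm:RItradeoff_SM} to $M=A_R$ using the parameters $a_S=b_S=1$, $K_S=\Delta_{H_S}+\tfrac{2}{\beta}\log d_S$, $c(x)=h(x/2)/\beta$ (hence $c_{\max}=(\log 2)/\beta$) verified in the applicable-resource-theories section, followed by the same $f(\delta)=4\delta+\delta^2$ simplification used in the main text. Your reading of $h'$ (with only the $\log 2$ term carrying the $1/\beta$ factor) matches the main-text version and the general derivation, so the bookkeeping checks out.
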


\subsection{Work-irreversibility tradeoff and Work-error tradeoff}
Therefore, in the standard thermal‑operation with a ``work‑bit'' setup, where the free‑energy source is only the work bit, the inequalities \eqref{eq:ATItradeoff} and \eqref{eq:ATEtradeoff} give  analogous lower bounds for the required work cost of channels.

Let us consider the work bit $B_W$ described as a qubit system with Hamiltonian
\eq{
H_{B_W}:=W\ket{1}\bra{1}_{B_W}.
}
Then, the following relation holds:
\eq{
A_R(\ket{1}\bra{1}_{B_W})&=\frac{1}{\beta}(-S(\dm{1})-\Tr[\dm{1}\log \frac{e^{-\beta H_{B_W}}}{Z(\beta,H_{B_W})}])\nonumber\\
&=\frac{-1}{\beta}\log \frac{e^{-\beta W}}{1+e^{-\beta W}}\nonumber\\
&\le W+\frac{1}{\beta}\log2.
}
Therefore, when we implement a quantum channel with $\ket{1}_{B_W}$ on the work bit and the Gibbs state $\tau_{B_H}$ on an external heat bath $B_H$, we get
\eq{
W_c(\Lambda)&\ge A_c(\Lambda)-\frac{1}{\beta}\log2,\\
W^\epsilon_c(\Lambda)&\ge A^\epsilon_c(\Lambda)-\frac{1}{\beta}\log2.
}
These relations convert \eqref{eq:ATEtradeoff} and \eqref{eq:ATItradeoff} to the lower bound for the work cost.
For example, when there are $\Omega_p$ and $\mathscr{K}$ satisfying \eqref{AD-cond1} and \eqref{AD-cond2}, we have
\eq{
W^{\epsilon}_c(\Lambda)\ge\frac{|\Aem(\Omega_p|\mathscr{K})-\Acos(\Omega'_{p,\Lambda}|\mathscr{K})|^2_+}{16K_A\epsilon}-h'-\frac{1}{\beta}\log2.
}

\subsection{Athermality-Error tradeoff and Work-error tradeoff for specific classes of channels}
\subsubsection{General formula for coherence-erasure channels}
Just as in the case of constraints on energy cost, a broad class of channels exists for the athermality cost as well, for which there exist $\Omega_p$ and $\mathscr{K}$ satisfying \eqref{AD-cond1} and \eqref{AD-cond2} (thus causing the cost to diverge).
When do the conditions $\Aem(\Omega_p|\mathscr{K})-\Acos(\Omega'_{p,\Lambda}|\mathscr{K})>0$ and $\delta(\Lambda,\Omega_{p})=0$ hold?
Many channels that erase superpositions between energy eigenstates having different energies satisfy the conditions. In fact, the following theorem holds:
\begin{theorem}[Restatement of Theorem \ref{Thm:c_erase}]\label{Thm:c_erase_SM}
Let $A$ be a $d$-level quantum system, and $\Lambda$ be a CPTP map from $A$ to $A$. Let $H$ be the Hamiltonian of $A$, and $\{\ket{j}\}$ be eigenbasis of $H$.
We choose $\Omega_{1/2}:=(\ket{\psi_{j,j'}},\ket{\psi^{\perp}_{j,j'}})_{1/2}$ as
\eq{
\ket{\psi_{j,j'}}:=\frac{\ket{j}+\ket{j'}}{\sqrt{2}},\enskip
\ket{\psi^{\perp}_{j,j'}}:=\frac{\ket{j}-\ket{j'}}{\sqrt{2}}.
}
When $\Lambda$ satisfies
\eq{
\Lambda(\psi_{j,j'})=\ket{j_0}\bra{j_0},\enskip\Lambda(\psi^{\perp}_{j,j'})=\ket{j_1}\bra{j_1},\enskip j_0\ne j_1,
}
the following hold:
\eq{
\delta(\Lambda,\Omega_{1/2})&=0,\label{delta_0_thmS6}\\
\Aem(\Omega_{1/2}|\mathscr{K})-\Acos(\Omega'_{1/2,\Lambda}|\mathscr{K})&>\frac{1}{2}|E_j-E_{j'}|-\frac{2}{\beta}\log2,\label{Abound}
}
where $\mathscr{K}:=(K,\{\ket{k}\}_{k=1,2})$ whose Hamiltonian $H_K$ is the trivial Hamiltonian.
\end{theorem}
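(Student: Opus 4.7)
The two assertions are treated separately. The first, $\delta(\Lambda,\Omega_{1/2})=0$, is immediate from the equivalence $\delta(\Lambda,\Omega_p)^2=\min_{\mathbb{Q}}\PF(\Omega'_{p,\Lambda},\mathbb{Q})$ established earlier in this supplement: since the outputs $\Lambda(\psi_{j,j'})=\dm{j_0}$ and $\Lambda(\psi^{\perp}_{j,j'})=\dm{j_1}$ are orthogonal pure states (using $j_0\neq j_1$), the PVM $\mathbb{Q}:=\{\dm{j_0},\,I-\dm{j_0}\}$ realises $\PF=0$. For the second inequality I will lower-bound $\Aem$ and upper-bound $\Acos$ by inserting a single concrete candidate into each optimisation; the subtraction will yield a bound of the form $\tfrac{1}{2}|E_j-E_{j'}|-\tfrac{1}{\beta}\log 2$, which is strictly stronger than the claimed $\tfrac{1}{2}|E_j-E_{j'}|-\tfrac{2}{\beta}\log 2$ and therefore proves the strict inequality.

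\textbf{Lower bound on $\Aem$.} The plan is to feed $\rho=\dm{j}$ together with the two-outcome PVM $\mathbb{P}:=\{\psi_{j,j'},\,I-\psi_{j,j'}\}$ into the definition of $\Aem$. The identities $P_{1}\ket{j}=\tfrac{1}{\sqrt{2}}\ket{\psi_{j,j'}}$ and $P_{2}\ket{j}=\tfrac{1}{\sqrt{2}}\ket{\psi^{\perp}_{j,j'}}$ certify the feasibility constraint $\Lambda_{\mathbb{P}}(\dm{j})=\tfrac{1}{2}\psi_{j,j'}\otimes\dm{1}_K+\tfrac{1}{2}\psi^{\perp}_{j,j'}\otimes\dm{2}_K$. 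Writing $A_R(\rho)=F(\rho)-F(\tau)$ for the non-equilibrium free energy $F$, and using additivity $F(\tau_{AK})=F(\tau_A)+F(\tau_K)$ together with $F(\tau_K)=-\tfrac{1}{\beta}\log 2$ (since $H_K=0$), I track the $A$-energy ($E_j\to (E_j+E_{j'})/2$) and the von Neumann entropy ($0\to\log 2$) separately. The register-baseline shift $F(\tau_K)$ exactly cancels the entropy contribution of the register and yields $A_R(\Lambda_{\mathbb{P}}(\dm{j}))-A_R(\dm{j})=(E_{j'}-E_j)/2$. Running the analogous computation with $\rho=\dm{j'}$ produces the opposite sign, so $\Aem(\Omega_{1/2}|\mathscr{K})\ge \tfrac{1}{2}|E_j-E_{j'}|$.

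\textbf{Upper bound on $\Acos$ and main obstacle.} I reuse the PVM $\mathbb{Q}=\{\dm{j_0},I-\dm{j_0}\}$ from step one, which already attains $\PF=0$. The key step is to bound $A_R(\Lambda_{\mathbb{Q}})$ uniformly over all reference systems. I factor $\Lambda_{\mathbb{Q}}$ as a dephasing composed with an isometry: write $\Lambda_{\mathbb{Q}}(\cdot)=\Delta_K(V\cdot V^\dagger)$ where $V=\sum_k Q_k\otimes\ket{k}_K$ is isometric and $\Delta_K$ dephases the qubit register in the basis $\{\ket{k}_K\}$. Because $[Q_k,H_A]=0$ and $H_K=0$, the total energy of $\id_R\otimes\Lambda_{\mathbb{Q}}(\sigma_{RA})$ coincides with that of $\sigma_{RA}$; because $\id_R\otimes V$ preserves von Neumann entropy and $\id_{RA}\otimes\Delta_K$ is unital, $S(\id_R\otimes\Lambda_{\mathbb{Q}}(\sigma_{RA}))\ge S(\sigma_{RA})$ for every $R$ and $\sigma_{RA}$. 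Combining the preserved energy with the $F(\tau_K)=-\tfrac{1}{\beta}\log 2$ shift in the Gibbs baseline yields $A_R(\id_R\otimes\Lambda_{\mathbb{Q}}(\sigma_{RA}))-A_R(\sigma_{RA})\le \tfrac{1}{\beta}\log 2$ uniformly, hence $\Acos(\Omega'_{1/2,\Lambda}|\mathscr{K})\le \tfrac{1}{\beta}\log 2$, and the main claim follows. I expect the reference-independent upper bound on $A_R(\Lambda_{\mathbb{Q}})$ to be the main technical point: the plain single-system equality $A_R(\Lambda_{\mathbb{Q}}(\tau_A))-A_R(\tau_A)=\tfrac{1}{\beta}\log 2$ is not by itself enough to control the supremum in the channel measure, and the argument crucially exploits that the only system newly introduced by $\Lambda_{\mathbb{Q}}$ is a qubit register whose maximal entropy gain of $\log 2$ is exactly matched by the free-energy offset $-F(\tau_K)$.
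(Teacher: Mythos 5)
Your proof is correct, and it reaches the stated bound by a genuinely different route in both halves of inequality \eqref{Abound}. For $\delta(\Lambda,\Omega_{1/2})=0$ you invoke the identity $\delta^2=\min_{\mathbb{Q}}\PF$ and exhibit a perfectly discriminating PVM, whereas the paper writes down an explicit recovery map; both are immediate. For the lower bound on $\Aem$ the paper feeds the Gibbs-weighted mixture $\tilde\tau=\frac{r_j}{r_j+r_{j'}}\dm{j}+\frac{r_{j'}}{r_j+r_{j'}}\dm{j'}$ into Lemma \ref{Lemm:gain_of_meas} and obtains the strictly larger value $\frac{1}{\beta}\log\frac{r_j+r_{j'}}{\sqrt{r_jr_{j'}}}>\frac{1}{2}|E_j-E_{j'}|$, which is where the paper's strictness comes from; you instead use the pure eigenstate input $\dm{j}$ (or $\dm{j'}$, whichever gives the favorable sign), which is computationally lighter but only yields the non-strict $\Aem\ge\frac{1}{2}|E_j-E_{j'}|$. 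You recover strictness on the other side: your bound $\Acos\le\frac{1}{\beta}\log 2$, obtained by factoring $\Lambda_{\mathbb{Q}}$ as an isometry followed by dephasing and tracking energy conservation, entropy non-decrease, and the $-\frac{1}{\beta}\log 2$ shift of the Gibbs baseline, is tighter than the paper's Lemma \ref{Lemm:cost_of_meas}, which gets $\frac{2}{\beta}\log 2$ from a Stinespring dilation with two ancilla qubits in pure states and monotonicity under the free unitary. (Your bound is in fact attained, e.g.\ on the input $\dm{j_0}$, where the register entropy gain is zero.) The net result $\Aem-\Acos\ge\frac{1}{2}|E_j-E_{j'}|-\frac{1}{\beta}\log 2$ is strictly stronger than the theorem's right-hand side, so the claimed strict inequality follows. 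The only point worth making explicit in a write-up is the feasibility check that $\mathbb{P}=\{\dm{\psi_{j,j'}},\,I-\dm{\psi_{j,j'}}\}$ together with $\rho=\dm{j}$ satisfies the constraint $\Lambda_{\mathbb{P}}(\rho)=\frac{1}{2}\psi_{j,j'}\otimes\dm{1}_K+\frac{1}{2}\psi^{\perp}_{j,j'}\otimes\dm{2}_K$ required in the definition of $\Aem$, which you do verify via $P_k\ket{j}=\frac{1}{\sqrt 2}\ket{\psi^{(\perp)}_{j,j'}}$.
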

Namely, when $|E_j-E_{j'}|\ge\frac{4}{\beta}\log2$, the athermality cost (and work cost) of approximate implementation of $\Lambda$ is inversely proportional to the error.
This observation implies that erasing coherence between different energy levels reversibly costs an infinite amount of free‑energy. 

To derive Theorem \ref{Thm:c_erase_SM}, we firstly show the following lemma which gives an upper bound of the resource-generating power of projective measurements whose PVMs commute with the Hamiltonian:
\begin{lemma}\label{Lemm:cost_of_meas}
Let $A$ be a quantum system with Hamiltonian $H$.
Let $\Lambda_{\mathbb{P}}$ be a two-valued projective measurement channel $\Lambda_{\mathbb{P}}(\cdot):=\sum_{k=1,2}P_k\cdot P_k\otimes \ket{k}\bra{k}_K$, where $\mathbb{P}:=\{P_k\}_{k=1,2}$ is a two-valued PVM on $A$ satisfying $[P_k,H]=0$ and the two-level register $\mathscr{K}:=(K,\{\ket{k}_{K}\})$ has the trivial Hamiltonian $H_K=0$.
Then, 
\eq{
A_R(\Lambda_{\mathbb{P}})\le\frac{1}{\beta}\log4.
}
\end{lemma}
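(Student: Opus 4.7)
The plan is to compute $A_R(\Lambda_{\mathbb{P}})$ directly from its definition as a supremum of resource gain. For any state $\rho_{RA}$ on a reference system $R$ and $A$, let $\sigma_{RAK}:=\id_R\otimes\Lambda_{\mathbb{P}}(\rho_{RA})=\sum_k(I_R\otimes P_k)\rho_{RA}(I_R\otimes P_k)\otimes|k\rangle\langle k|_K$. Writing $A_R(\rho)=\beta^{-1}[-S(\rho)-\mathrm{Tr}(\rho\log\tau)]$ and using $\tau_{RAK}=\tau_R\otimes\tau_A\otimes(I_K/2)$ (since $H_K=0$), the task reduces to showing
\begin{equation}
D(\sigma_{RAK}\|\tau_{RAK})-D(\rho_{RA}\|\tau_{RA})\le\log 4. \nonumber
\end{equation}

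The first step is to exploit the commutation $[P_k,H]=0$, which implies $[P_k,\tau_A]=0$ and hence $[P_k,\log\tau_A]=0$. A direct computation of $\mathrm{Tr}[\sigma_{RAK}\log(\tau_R\otimes\tau_A\otimes I/2)]$ splits into three marginal contributions: the marginal on $R$ is unchanged ($\sigma_R=\rho_R$), the marginal on $A$ satisfies $\mathrm{Tr}[\sigma_A\log\tau_A]=\mathrm{Tr}[\rho_A\log\tau_A]$ by the commutation (the $P_k$'s can be pulled through $\log\tau_A$ and reassembled via $\sum_kP_k=I$), and the marginal on $K$ contributes $\mathrm{Tr}[\sigma_K\log(I/2)]=-\log 2$. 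Substituting these back yields the clean identity
\begin{equation}
D(\sigma_{RAK}\|\tau_{RAK})-D(\rho_{RA}\|\tau_{RA})=S(\rho_{RA})-S(\sigma_{RAK})+\log 2. \nonumber
\end{equation}
It thus remains to bound the entropy difference by $\log 2$.

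For that, I would introduce a Stinespring dilation of $\Lambda_{\mathbb{P}}$ with a single-qubit environment $E$: the isometry $V|\psi\rangle=\sum_k P_k|\psi\rangle\otimes|k\rangle_K\otimes|k\rangle_E$ satisfies $V^\dagger V=I$ and $\mathrm{Tr}_E[V\cdot V^\dagger]=\Lambda_{\mathbb{P}}$. Since $(I_R\otimes V)\rho_{RA}(I_R\otimes V)^\dagger$ is related to $\rho_{RA}$ by an isometry, its entropy equals $S(\rho_{RA})$, and $\sigma_{RAK}$ is its partial trace over the qubit $E$. Araki-Lieb subadditivity then gives $S(\sigma_{RAK})\ge S(\rho_{RA})-S(\sigma_E)\ge S(\rho_{RA})-\log 2$, so $S(\rho_{RA})-S(\sigma_{RAK})\le\log 2$. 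Combining with the identity from the previous paragraph yields $A_R(\id_R\otimes\Lambda_{\mathbb{P}}(\rho_{RA}))-A_R(\rho_{RA})\le\beta^{-1}\log 4$, and taking the supremum over $\rho_{RA}$ proves the claim. The most delicate step is the marginal-term computation, where one must be careful that the $\rho_R$- and $\rho_A$-dependent contributions cancel exactly against the corresponding parts of $D(\rho_{RA}\|\tau_{RA})$; the rest is routine given Araki-Lieb.
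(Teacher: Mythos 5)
Your proof is correct, but it takes a genuinely different route from the paper. The paper constructs an explicit dilation of $\Lambda_{\mathbb{P}}$ by a free (energy-preserving) unitary $U$ on $AKK'$ acting on the ancilla $\ket{0}_K\otimes\ket{0}_{K'}$ (two trivial-Hamiltonian qubits, each contributing athermality $\tfrac{1}{\beta}\log 2$), and then invokes the monotonicity axiom (i) together with additivity to conclude $A_R(\Lambda_{\mathbb{P}})\le\tfrac{2}{\beta}\log 2$; the commutation $[P_k,H]=0$ enters only to certify that $U$ conserves energy. You instead compute the relative-entropy difference directly: the identity $D(\sigma_{RAK}\|\tau_{RAK})-D(\rho_{RA}\|\tau_{RA})=S(\rho_{RA})-S(\sigma_{RAK})+\log 2$ is correct (the $\Tr[\cdot\log\tau_A]$ terms cancel precisely because $[P_k,\log\tau_A]=0$, and the $K$-marginal contributes $\log 2$), and your Araki--Lieb bound via the single-qubit Stinespring environment correctly gives $S(\rho_{RA})-S(\sigma_{RAK})\le\log 2$. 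Your approach is self-contained and does not rely on the resource-theoretic axioms at all; the paper's approach has the virtue of exhibiting the bound as a genuine implementation cost within its general framework and generalizes to other measures satisfying (i)--(ii). One remark: since a projective measurement with a classical flag never decreases entropy, your entropy-difference term is in fact nonpositive, so your method actually yields the sharper bound $A_R(\Lambda_{\mathbb{P}})\le\tfrac{1}{\beta}\log 2$; the stated $\tfrac{1}{\beta}\log 4$ follows a fortiori.
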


\begin{proofof}{Lemma \ref{Lemm:cost_of_meas}}
Note that the channel $\Lambda_{\mathbb{P}}$ can be realized by the implementation $(U,\ket{0}_K\otimes\ket{0}_{K'})$.
Here $K'$ is a copy of $K$ whose Hamiltonian is also the trivial Hamiltonian $H_{K'}=0$, and $U$ is a unitary on $AKK'$ defined as
\eq{
U:=P_0\otimes 1_K\otimes1_{K'}+P_1\otimes(\ket{1}\bra{0}+\ket{0}\bra{1})_{K}\otimes(\ket{1}\bra{0}+\ket{0}\bra{1})_{K'}.
}
The unitary $U$ satisfies
\eq{
U^\dagger U&=P_0\otimes1_K\otimes1_{K'}+P_1\otimes1_K\otimes1_{K'}=1_{AKK'}\\
UU^\dagger&=
P_0\otimes1_K\otimes1_{K'}+P_1\otimes1_K\otimes1_{K'}=1_{AKK'}\\
[U,H+H_K+H_{K'}]&=[U,H\otimes1_{KK'}]=0,
}
and $U$ and $\ket{0}_K\otimes\ket{0}_{K'}$ satisfy
\eq{
U\ket{0}_K\otimes\ket{0}_{K'}
=
P_0\otimes\ket{0}_K\otimes\ket{0}_{K'}
+
P_1\otimes\ket{1}_K\otimes\ket{1}_{K'},
}
and thus
\eq{
\Lambda_{\mathbb{P}}(\cdot)=\Tr_{K'}[U(\cdot\otimes\ket{0}\bra{0}_K\otimes\ket{0}\bra{0}_{K'})U^\dagger].
}
Therefore, $(U,\ket{0}_K\otimes\ket{0}_{K'})$ implements $\Lambda_{\mathbb{P}}$ and $U$ is an energy-preserving unitary.

By using
\eq{
A_R(\ket{0}\bra{0}_{K}\otimes \ket{0}\bra{0}_{K'})&=2A_R(\ket{0}\bra{0}_{K})\nonumber\\
&=\frac{2}{\beta}D(\ket{0}\bra{0}_{K}\|\tau_{K})\nonumber\\
&=\frac{2}{\beta}(S(\ket{0}\bra{0})-\Tr[\ket{0}\bra{0}\log\frac{I_K}{2}])\nonumber\\
&=-\frac{2}{\beta}\bra{0}\log\frac{1_K}{2}\ket{0}\nonumber\\
&=\frac{2}{\beta}\log2,
}
we prove $A_R(\Lambda_{\mathbb{P}})\le\frac{2}{\beta}\log2$. 
Let us take an arbitrary external reference system $R$ and its Hamiltonian $H_R$. For an energy-preserving (and thus free) unitary operator $U'=U\otimes1_R$, we have
\eq{
\id_R\otimes\Lambda_{\mathbb{P}}(\cdot)=\Tr_{K'}[U'(\cdot\otimes\ket{0}\bra{0}_K\otimes\ket{0}\bra{0}_{K'}])U'^\dagger]\label{contra1}.
}
Therefore, for any state $\sigma_{AR}$ on $AR$, we get
\eq{
A_R(\id_R\otimes\Lambda_{\mathbb{P}}(\sigma_{AR}))&\le A_R(\sigma_{AR}\otimes\ket{0}\bra{0}_K\otimes\ket{0}\bra{0}_{K'})\nonumber\\
&=A_R(\sigma_{AR})+A_R(\ket{0}\bra{0}_K\otimes\ket{0}\bra{0}_{K'})\nonumber\\
&\le A_R(\sigma_{AR})+\frac{2}{\beta}\log2,\label{contra2}
}
which implies
\eq{
A_R(\Lambda_{\mathbb{P}})\le\frac{2}{\beta}\log2.
}
\end{proofof}

Next, we give a lower bound of the resource gain via discrimination of states in $\Omega_p$:
\begin{lemma}\label{Lemm:gain_of_meas}
Let $A$ be a quantum system with Hamiltonian $H$ whose eigenbasis is $\{\ket{i}\}$.
Let $\Omega_p:=(\ket{\psi},\ket{\psi^{\perp}})_p$ is a test ensemble whose $\ket{\psi}$ and $\ket{\psi^{\perp}}$ are mutually orthogonal and defined as
\eq{
\ket{\psi}&:=a_j\ket{j}+a_{j'}\ket{j'},\\
\ket{\psi^{\perp}}&:=a^*_{j'}\ket{j}-a^*_{j}\ket{j'},
}
and $p$ is defined as
\eq{
p:=\frac{r_j}{r_j+r_{j'}}|a_j|^2+
\frac{r_{j'}}{r_j+r_{j'}}|a_{j'}|^2,\label{eq:p_definition}
}
where $r_i$ is the eigenvalue of the $i$-th eigenvector of the Gibbs state $\tau$ of $H$:
\eq{
r_i:=\bra{i}\tau\ket{i}.
}
Then, for a two-level register $\mathscr{K}:=(K,\{\ket{k}\}_{k=1,2})$ whose Hamiltonian is trivial $H_K=0$, the following inequality holds:
\eq{
\Aem(\Omega_p|\mathscr{K})\ge\frac{1}{\beta}(\log2
-h(p)
-(p|a_j|^2+(1-p)|a_{j'}|^2)\log r_j
-(p|a_{j'}|^2+(1-p)|a_j|^2)\log r_{j'}
+\log(r_j+r_{j'})
)\label{comb_res}.
}

\end{lemma}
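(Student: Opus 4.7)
The plan is to lower-bound $\Aem(\Omega_p|\mathscr{K})$ by exhibiting a single feasible pair $(\sigma,\mathbb{P})$ in the supremum defining it and then computing $A_R(\Lambda_{\mathbb{P}}(\sigma)) - A_R(\sigma)$ exactly. Since $\Aem$ is a supremum over feasible pairs, any such value is automatically a lower bound, so an exact evaluation for a well-chosen ansatz immediately yields the claimed inequality.

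For the PVM I take $\mathbb{P} := \{P_1, P_2\}$ with $P_1 := \ket{\psi}\bra{\psi}$ and $P_2 := I_A - \ket{\psi}\bra{\psi}$. This is a valid two-outcome PVM and, since $\langle\psi|\psi^\perp\rangle = 0$, it discriminates the test pair with probability one. For the input state I take the normalized restriction of the Gibbs state $\tau$ to the two-dimensional subspace $\mathrm{span}\{\ket{j},\ket{j'}\}$, namely $\sigma := (r_j\ket{j}\bra{j} + r_{j'}\ket{j'}\bra{j'})/(r_j+r_{j'})$. Because $\sigma$ has no support outside this subspace, $P_2$ acts on $\sigma$ like $\ket{\psi^\perp}\bra{\psi^\perp}$, and a short calculation produces $\Lambda_{\mathbb{P}}(\sigma) = p\,\ket{\psi}\bra{\psi}\otimes\ket{1}\bra{1}_K + (1-p)\ket{\psi^\perp}\bra{\psi^\perp}\otimes\ket{2}\bra{2}_K$ with $p = \langle\psi|\sigma|\psi\rangle$ reproducing \eqref{eq:p_definition}. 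This confirms feasibility in the definition of $\Aem$.

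The remaining work is direct bookkeeping with the identity $A_R(\rho) = \frac{1}{\beta}\bigl(-S(\rho) - \Tr[\rho\log\tau_{\mathrm{tot}}]\bigr)$, together with $\log\tau = -\beta H - \log Z\cdot I_A$, the relation $E_i = -(\log r_i + \log Z)/\beta$, and $\tau_K = I_K/2$ (since $H_K=0$). Evaluating on $\sigma$ yields the standard Gibbs-restriction identity $A_R(\sigma) = -\frac{1}{\beta}\log(r_j+r_{j'})$. Evaluating on $\Lambda_{\mathbb{P}}(\sigma)$, whose eigenvalues are $p$ and $1-p$, gives the entropy $h(p)$; the energy expectation expands as $(p|a_j|^2+(1-p)|a_{j'}|^2)E_j + (p|a_{j'}|^2+(1-p)|a_j|^2)E_{j'}$ using the diagonal of $H$ on $\ket{j},\ket{j'}$; and the maximally mixed reference on $K$ contributes $\log 2$. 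Substituting $\beta E_i = -\log r_i - \log Z$ and noting that the prefactors sum to one makes all $\log Z$ terms cancel, leaving $\beta A_R(\Lambda_{\mathbb{P}}(\sigma)) = -h(p) - (p|a_j|^2+(1-p)|a_{j'}|^2)\log r_j - (p|a_{j'}|^2+(1-p)|a_j|^2)\log r_{j'} + \log 2$. Subtracting reproduces \eqref{comb_res} as an equality for the chosen pair, and hence as an inequality for the supremum $\Aem$.

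Nothing here is conceptually difficult; the only subtle points, and the likely place to slip, are (a) tracking the cancellation of the $\log Z$ terms, and (b) noticing that the identity $P_2\sigma P_2 = (1-p)\ket{\psi^\perp}\bra{\psi^\perp}$ relies on $\sigma$ being supported in $\mathrm{span}\{\ket{j},\ket{j'}\}$ — which is exactly why the Gibbs restriction to that subspace is the correct ansatz, both making the constraint hold and producing the clean closed-form above.
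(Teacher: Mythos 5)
Your proposal is correct and follows essentially the same route as the paper: the paper also takes the input state to be the normalized restriction of the Gibbs state to $\mathrm{span}\{\ket{j},\ket{j'}\}$ (called $\tilde\tau$ there), pairs it with a PVM discriminating $\ket{\psi}$ and $\ket{\psi^\perp}$, and computes $A_R(\Lambda_{\mathbb{P}}(\tilde\tau))-A_R(\tilde\tau)$ exactly, obtaining the same two closed-form expressions you report. The only cosmetic difference is that you fix $P_1=\ket{\psi}\bra{\psi}$, $P_2=I_A-\ket{\psi}\bra{\psi}$ explicitly, whereas the paper leaves the discriminating PVM generic; this changes nothing in the computation.
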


\begin{proofof}{Lemma \ref{Lemm:gain_of_meas}}
Let $\mathbb{P}:=\{P_k\}_{k=1,2}$ be a two-valued PVM discriminating $\psi$ and $\psi^{\perp}$ with probability 1.
We define a projective measurement channel $\Lambda_{\mathbb{P}}$ as 
\eq{
\Lambda_{\mathbb{P}}(\cdot):=\sum_{k}P_k\cdot P_k\otimes\ket{k}\bra{k}_K.
}
Introducing a state
\eq{
\tilde{\tau}:=\frac{r_{j}}{r_j+r_{j'}}\dm{j}
+
\frac{r_{j'}}{r_j+r_{j'}}\dm{j'},
}
we obtain
\eq{
\Lambda_{\mathbb{P}}(\tilde{\tau})
&=
\left(\frac{r_j}{r_j+r_{j'}}|a_j|^2+
\frac{r_{j'}}{r_j+r_{j'}}|a_{j'}|^2\right)\dm{\psi}_A\otimes\dm{0}_K
+
\left(\frac{r_j}{r_j+r_{j'}}|a_{j'}|^2+
\frac{r_{j'}}{r_j+r_{j'}}|a_{j}|^2\right)\dm{\psi^{\perp}}_A\otimes\dm{1}_K\nonumber\\
&=
p\dm{\psi}_A\otimes\dm{0}_K
+
(1-p)\dm{\psi^{\perp}}_A\otimes\dm{1}_K.
}
Therefore, we find
\eq{
\Aem(\Omega_p|\mathscr{K})\ge A_R(\Lambda_{\mathbb{P}}(\tilde{\tau}))
-
A_R(\tilde{\tau}).
}

From the definition of $A_R$, $A_R(\Lambda_{\mathbb{P}}(\tilde{\tau}))$ is evaluated as
\eq{
A_R(\Lambda_{\mathbb{P}}(\tilde{\tau}))&=
\frac{1}{\beta}D\left(\Lambda_{\mathbb{P}}(\tilde{\tau})\middle\|\tau\otimes\frac{1_K}{2}\right)\nonumber\\
&=\frac{1}{\beta}\left(-S(\Lambda_{\mathbb{P}}(\tilde{\tau}))
-\Tr\left[\Lambda_{\mathbb{P}}(\tilde{\tau})\log\left(\tau\otimes\frac{1_K}{2}\right)\right]\right)\nonumber\\
&=\frac{1}{\beta}\left(-h(p)
-p\Braket{\psi\otimes 0|\log\left(\tau\otimes\frac{1_K}{2}\right)|\psi\otimes0}
-(1-p)\Braket{\psi^{\perp}\otimes 1|\log\left(\tau\otimes\frac{1_K}{2}\right)|\psi^{\perp}\otimes 1}
\right).
}
Each term is evaluated as follows:
\eq{
\Braket{\psi\otimes 0|\log\left(\tau\otimes\frac{1_K}{2}\right)|\psi\otimes0}
&=
\Braket{\psi\otimes0 |\left(\log\frac{r_j}{2}\vert j\rangle\langle j\vert_A\otimes\vert 0\rangle\langle 0\vert_K
+
\log\frac{r_{j'}}{2}\vert j'\rangle\langle j'\vert_A\otimes\vert0\rangle\langle 0\vert_K
\right)|\psi\otimes 0}\nonumber\\
&=
\log\frac{r_j}{2}|\braket{j|\psi}|^2
+
\log\frac{r_{j'}}{2}|\braket{j'|\psi}|^2\nonumber\\
&=|a_j|^2\log r_j+|a_{j'}|^2\log r_{j'}-\log2,\\
\Braket{\psi^{\perp}\otimes 1|\log\left(\tau\otimes\frac{1_K}{2}\right)|\psi^{\perp}\otimes 1}
&=
\Braket{\psi^\perp \otimes 1|\left(\log\frac{r_j}{2}\vert j\rangle\langle j\vert_A\otimes\vert 1\rangle\langle 1\vert_K
+
\log\frac{r_{j'}}{2}\vert j'\rangle\langle j'\vert_A\otimes\vert1\rangle\langle 1\vert_K
\right)|\psi^\perp\otimes 1}\nonumber\\
&=
\log\frac{r_j}{2}|\braket{j|\psi^{\perp}}|^2
+
\log\frac{r_{j'}}{2}|\braket{j'|\psi^{\perp}}|^2\nonumber\\
&=|a_{j'}|^2\log r_j+|a_{j}|^2\log r_{j'}-\log2.
}
Therefore, we obtain
\eq{
A_R(\Lambda_{\mathbb{P}}(\tilde{\tau}))&=
\frac{1}{\beta}
(\log2-h(p)-p(|a_j|^2\log r_j+|a_{j'}|^2\log r_{j'})
-
(1-p)(|a_{j'}|^2\log r_j+|a_{j}|^2\log r_{j'}))\nonumber\\
&=\frac{1}{\beta}
(\log2-h(p)-(p|a_j|^2+(1-p)|a_{j'}|^2)\log r_j
-
(p|a_{j'}|^2+(1-p)|a_{j}|^2)\log r_{j'})\label{comb1}.
}

Also, $A_R(\tilde{\tau})$ is evaluated as
\eq{
A_R(\tilde{\tau})&=\frac{1}{\beta}D(\tilde{\tau}\|\tau)\nonumber\\
&=\frac{1}{\beta}(-S(\tilde{\tau})
-\Tr[\tilde{\tau}\log\tau])\nonumber\\
&=\frac{1}{\beta}\left(-h\left(\frac{r_j}{r_j+r_{j'}}\right)
-\frac{r_j}{r_j+r_{j'}}\log r_j
-\frac{r_{j'}}{r_j+r_{j'}}\log r_{j'}
\right)\nonumber\\
&=
\frac{1}{\beta}\left(\frac{r_j}{r_j+r_{j'}}\log\frac{r_j}{r_j+r_{j'}}
+
\frac{r_{j'}}{r_j+r_{j'}}\log\frac{r_{j'}}{r_j+r_{j'}}
-\frac{r_j}{r_j+r_{j'}}\log r_j
-\frac{r_{j'}}{r_j+r_{j'}}\log r_{j'}
\right)\nonumber\\
&=-\frac{1}{\beta}\log(r_j+r_{j'}).
\label{comb2}
}
Combining \eqref{comb1} and \eqref{comb2}, we obtain \eqref{comb_res}.
\end{proofof}

Using these two lemmas, we prove Theorem \ref{Thm:c_erase_SM}.
\begin{proofof}{Theorem \ref{Thm:c_erase_SM}}
We define a CPTP recovery map $R_*$ from $A$ to $A$ as
\eq{
\calR_*(\cdot)=\bra{j_0}\cdot\ket{j_0}\dm{\psi_{j,j'}}+\bra{j_1}\cdot\ket{j_1}\dm{\psi^{\perp}_{j,j'}}+\Tr[(1-\dm{j_0}-\dm{j_1})\cdot]\dm{j_0}.
}
Then, 
\eq{
\delta(\Lambda,\Omega_{1/2})^2&\le\frac{D_F(\dm{\psi_{j,j'}},R_*\circ\Lambda(\dm{\psi_{j,j'}})^2+
D_F(\dm{\psi^\perp_{j,j'}},R_*\circ\Lambda(\dm{\psi_{j,j'}})^2
}{2}\nonumber\\
&=0.
}
Thererefore, \eqref{delta_0_thmS6} is valid.

Let us show \eqref{Abound}.
Substituting $a_j=a_{j'}=\frac{1}{\sqrt{2}}$ into the definition of $p$ in \eqref{eq:p_definition}, we obtain 
\eq{
p=\frac{1}{2}\frac{r_j}{r_{j}+r_{j'}}+\frac{1}{2}\frac{r_{j'}}{r_{j}+r_{j'}}=\frac{1}{2}.
}
Therefore, Lemma \ref{Lemm:gain_of_meas} implies
\eq{
\Aem(\Omega_{1/2}|\mathscr{K})&\ge\frac{1}{\beta}\left(\log2-\log2-\frac{1}{2}\log r_j-\frac{1}{2}\log r_{j'}-\log\frac{1}{r_j+r_{j'}}\right)\nonumber\\
&=\frac{1}{\beta}\log\frac{r_j+r_{j'}}{\sqrt{r_jr_{j'}}}.
}

Since $r_{j'}=e^{-\beta(E_{j'}-E_{j})}r_j$ follows from the definition of $r_j$ and $r_{j'}$, we get
\eq{
\frac{r_j+r_{j'}}{\sqrt{r_jr_{j'}}}&=\frac{1+e^{-\beta(E_{j'}-E_{j})}}{\sqrt{e^{-\beta(E_{j'}-E_{j})}}}\nonumber\\
&=\sqrt{e^{\beta(E_{j'}-E_{j})}}
+
\sqrt{e^{\beta(E_{j}-E_{j'})}}\nonumber\\
&>e^{\frac{\beta}{2}|E_{j'}-E_{j}|}
}
Therefore, we obtain
\eq{
\Aem(\Omega_{1/2}|\mathscr{K})>\frac{1}{2}|E_{j'}-E_{j}|.
}

Note that since $\Lambda(\psi_{j,j'})$ and $\Lambda(\psi_{j,j'})$ are energy eigenstates, they can be discriminated with probability 1 by a measurement channel $\Lambda_{\mathbb{P}}$ from $A$ to $AK$ whose two-level register $\mathscr{K}=(K,\{\ket{k}\}_{k=1,2})$ has trivial Hamiltonian $0_K$ and whose PVM $\mathbb{P}:=\{P_k\}_{k=1,2}$ satisies $[P_k,H]=0$.  
Therefore, from Lemma \ref{Lemm:cost_of_meas}, we get
\eq{
\Acos(\Omega'_{1/2,\Lambda}|\mathscr{K})\le\frac{1}{\beta}\log4,
}
which completes the proof of \eqref{Abound}.
\end{proofof}

From Theorem \ref{Thm:c_erase_SM}, we can show that, for various channels, the athermality cost scales inversely with the allowed implementation error. In the subsequent subsections, we present several examples.

\subsubsection{Projective measurements}
The first example is projective measurements.
Using Theorem \ref{Thm:c_erase_SM}, we show that projective measurement channel $\Gamma_{\mathbb{P}}(\cdot):=\sum_{k=1,2}\Tr[P_k\cdot ]\ket{k}\bra{k}_K$ whose $\{P_k\}_{k=1,2}$ can distinguish $\ket{\psi_{j,j'}}$ and $\ket{\psi^{\perp}_{j,j'}}$ with probability 1 requires an infinite amount of athermality.

\begin{proof}
Let $A_2$ be a copy of $A$ with the same Hamiltonian $H$.
Let $U$ be a unitary on $AA_2K$ defined as
\eq{
U=\dm{0}_K\otimes1_{AA_2}+\dm{1}_K\otimes (\mbox{SWAP})_{AA_2},
}
where $(\mbox{SWAP})_{AA_2}$ is the swap unitary between $A$ and $A_2$.
Since $A_2$ is a copy of $A$, the operator $U$ is an energy-preserving unitary.

We define $\Lambda'$ as a channel from $K$ to $A$ by
\eq{
\Lambda'(\cdot):=\Tr_{A_2K}[U\cdot\otimes\dm{j_0}_A\otimes\dm{j_1}_{A_2}U^\dagger],
}
where $\ket{j_0}$ and $\ket{j_1}$ are energy eigenstates of $H$.
Since this channel satisfies
\eq{
\Lambda'\circ\Gamma_{\mathbb{P}}(\dm{\psi_{j,j'}})=\dm{j_0},\enskip
\Lambda'\circ\Gamma_{\mathbb{P}}(\dm{\psi^{\perp}_{j,j'}})=\dm{j_1},
}
Theorem \ref{Thm:c_erase_SM} implies that implementing the channel $\Lambda'\circ\Gamma_{\mathbb{P}}$ requires an infinite amount of athermality.
Since the definition of $\Lambda'$ shows its cost is finite (lower than or equal to $A_R(\dm{j_0})+A_R(\dm{j_1})$), we conclude that the cost of $\Gamma_{\mathbb{P}}$ diverges.
In the same manner, we also find
\eq{
A^{\epsilon}_c(\Gamma_{\mathbb{P}})\propto\frac{1}{\epsilon}.
}
Here, we used the symbol $\propto$ to state $A^{\epsilon}_c(\Gamma_{\mathbb{P}})$ has a lower bound inversely proportional to $\epsilon$. 
\end{proof}

\subsubsection{Unitary gates}
Theorem \ref{Thm:c_erase_SM} also shows that any unitary channel $\calU$ which converts $\ket{\psi_{j,j'}}$ and $\ket{\psi^{\perp}_{j,j'}}$ to energy eigenvectors requires an infinite amount of athermality. 
Indeed, when a unitary operator $U$ satisfies $\bra{j_0}U\ket{\psi_{j,j'}}=\bra{j_1}U\ket{\psi^{\perp}_{j,j'}}=1$, the following holds:
\eq{
\calU(\dm{\psi_{j,j'}})=\dm{j_0},\enskip\calU(\dm{\psi^{\perp}_{j,j'}})=\dm{j_1},
}
where $\calU(\cdot):=U\cdot U^\dagger$. Therefore, $A^{\epsilon}_c(\calU)$ is inversely proportional to $\epsilon$.

\subsubsection{Gibbs-preserving operations}
Theorem \ref{Thm:c_erase_SM} also shows that some Gibbs-preserving operations require an infinite amount of athermality.
Interestingly, some Gibbs-preserving operations require infinite amounts of athermality, asymmetry, and energy at the same time.
Below, we provide an example of such Gibbs-preserving operations:
\begin{theorem}\label{Thm:infinite_cost_GPO}
Let $A$ be a four-level system.
We take an orthonormal basis of $A$, which is $\{\ket{0_a},\ket{0_b},\ket{1},\ket{2}\}$.
We also introduce the Hamiltonian of $A$ as
\eq{
H&=E_1\dm{1}+E_2\dm{2},\\
0=E_{0}&<E_1<E_2.
}
Namely, $\ket{0_a}$ and $\ket{0_b}$ are zero-energy degenerate states. 
Next, we fix an arbitrary inverse temperature $\beta$, and define the Gibbs state for the inverse temperature $\beta$ and the Hamiltonian $H$ as
\eq{
\tau:=\frac{e^{-\beta H}}{Z(\beta,H)},\quad Z(\beta,H)\coloneqq \Tr(e^{-\beta H}).
}
In addition, we introduce two states $\ket{+_{12}}$ and $\ket{-_{12}}$ as 
\eq{
\ket{\pm_{12}}:=\frac{\ket{1}\pm\ket{2}}{\sqrt{2}},
}
and two real positive numbers $\eps_{+}$ and $\eps_-$ as
\eq{
\eps_+:=\Tr[P_+\tau],\enskip\eps_-:=\Tr[P_-\tau]
}
where $P_{\pm}:=\dm{\pm_{12}}$.
Then, the following channel $\Lambda$ is Gibbs-preserving and $E^\epsilon_c(\Lambda)$ and $\sqrt{\calF^\epsilon_c(\Lambda)}$ obey the lower bounds that are inversely proportional to $\epsilon$:
\eq{
\Lambda(...)&:=\Tr[P_+ ...]\dm{0_a}+\Tr[P_- ...]\dm{0_b}+\Tr[(1-(P_++P_-))...]\tau',\\
\tau'&:=\frac{\tau-(\eps_+\dm{0_a}+\eps_-\dm{0_b})}{1-(\eps_++\eps_-)}.\label{def_cost_div}
}
Furthermore, when $|E_1-E_2|\ge\frac{4}{\beta}\log2$,  $A^\epsilon_c(\Lambda)$ also obeys a lower bound that is  inversely proportional to $\epsilon$.
\end{theorem}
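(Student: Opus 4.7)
The plan is to verify Gibbs-preservation by direct substitution, to recognise $\Lambda$ as a coherence-erasing channel of the form covered by Theorem~\ref{Thm:c_erase_SM} for the ensemble $\Omega_{1/2}=(\ket{+_{12}},\ket{-_{12}})_{1/2}$, and then to obtain each of the three cost lower bounds by feeding this ensemble into the corresponding resource-irreversibility trade-off.

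First, Gibbs-preservation follows by substituting $\Tr[P_\pm\tau]=\eps_\pm$ and the rearrangement $(1-\eps_+-\eps_-)\tau'=\tau-\eps_+\dm{0_a}-\eps_-\dm{0_b}$ into the definition of $\Lambda$; the zero-energy terms cancel and $\Lambda(\tau)$ telescopes to $\tau$. Next, from $P_\pm\ket{\pm_{12}}=\ket{\pm_{12}}$ and $(1-P_+-P_-)\ket{\pm_{12}}=0$ one reads off
\eq{
\Lambda(\dm{+_{12}})=\dm{0_a},\qquad \Lambda(\dm{-_{12}})=\dm{0_b}.
}
Since $\ket{0_a}$ and $\ket{0_b}$ are two distinct (though degenerate) eigenvectors of $H$, the hypothesis $j_0\neq j_1$ of Theorem~\ref{Thm:c_erase_SM} is met with $(j,j',j_0,j_1)=(1,2,0_a,0_b)$. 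This supplies at once $\delta(\Lambda,\Omega_{1/2})=0$ and, under the assumption $|E_1-E_2|>\tfrac{4}{\beta}\log 2$, the positivity $\Aem(\Omega_{1/2}|\mathscr{K})-\Acos(\Omega'_{1/2,\Lambda}|\mathscr{K})>0$; substitution into \eqref{eq:ATEtradeoff} then yields the inverse-$\eps$ lower bound on $A_c^{\eps}(\Lambda)$.

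For the energy cost, $\braket{0_a|H|0_a}=\braket{0_b|H|0_b}=0$ together with $(\dm{0_a}+\dm{0_b})\ket{\pm_{12}}=0$ yields $\Lambda^{\dagger}(H)=\Tr[\tau' H]\,(\dm{0_a}+\dm{0_b})$ and hence $\calC(\Lambda,\Omega_{1/2})=|\bra{-_{12}}H\ket{+_{12}}|=|E_1-E_2|/2>0$; Corollary~\ref{Coro:EEtradeoff_SM} then delivers the $1/\eps$ bound on $E_c^{\eps}(\Lambda)$. The asymmetry (QFI) bound follows by the same route with $M=\calF_H$ in Theorem~\ref{Thm:RItradeoff_SM}: $\delta(\Lambda,\Omega_{1/2})=0$ is already established, and the resource gap is strictly positive because the outputs $\dm{0_{a,b}}$ are energy eigenstates admitting zero-QFI-cost discrimination by an energy-commuting PVM, whereas any PVM certifying $\ket{\pm_{12}}$ rotates the energy eigenbasis and therefore carries strictly positive QFI gain.

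The point I expect to require the most care is sharpening the QFI divergence into the precise scaling $\sqrt{\calF_c^{\eps}(\Lambda)}\propto 1/\eps$. The H\"older exponents $a_S=2,\,b_S=1/2$ for $\calF_H$ listed in Section~\ref{app:application_list} do not give this rate through Theorem~\ref{Thm:RItradeoff_SM} directly. I would therefore either combine the general divergence conclusion with the sharp channel-level asymmetry bound of Ref.~\cite{Tajima_Takagi2025}---which already applies to coherence-erasing Gibbs-preserving channels of the form of $\Lambda$---or re-run the proof scheme of Lemma~\ref{Lemm:key_resource} with $\sqrt{\calF_H}$ in place of $\calF_H$, whose $m$-copy continuity gives effective exponents $(a_S,b_S)=(1,1/4)$ and produces a denominator linear in $\eps$ (at the cost of replacing additivity by subadditivity, which has to be inserted by hand into the iteration step).
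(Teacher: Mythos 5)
Your proposal follows the paper's own route almost step for step: Gibbs-preservation by direct substitution, the identification $\Lambda(\dm{+_{12}})=\dm{0_a}$, $\Lambda(\dm{-_{12}})=\dm{0_b}$ feeding into Theorem~\ref{Thm:c_erase_SM} for the athermality bound, the computation $\calC(\Lambda,\Omega_{1/2})=|E_1-E_2|/2$ via $\bra{+_{12}}\Lambda^\dagger(H)\ket{-_{12}}=0$ for the energy bound, and $\delta(\Lambda,\Omega_{1/2})=0$ throughout. Your diagnosis of the QFI issue is also correct and your resolution (a) is exactly what the paper does: since the H\"older exponents $a_S=2$, $b_S=1/2$ for $\calF_H$ only give a $1/\epsilon^{1/3}$ rate through Theorem~\ref{Thm:RItradeoff_SM}, the paper reduces the asymmetry claim to $\calC(\Lambda,\Omega_{1/2})>0$ and $\delta(\Lambda,\Omega_{1/2})=0$ and leans on the sharp channel-level asymmetry bound of the prior work rather than on the general theorem of this paper.

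The one concrete step you skip is establishing that $\Lambda$ is a valid channel at all, i.e.\ that $\tau'$ is a bona fide density operator. Normalization is immediate, but positivity is not automatic: one must check that the coefficient of $\dm{0_a}$ and $\dm{0_b}$ in $\tau-(\eps_+\dm{0_a}+\eps_-\dm{0_b})$ is nonnegative, which the paper does by computing $\eps_\pm=\tfrac{e^{-\beta E_1}+e^{-\beta E_2}}{2Z(\beta,H)}$ and invoking $e^{-\beta E_0}>\tfrac{e^{-\beta E_1}+e^{-\beta E_2}}{2}$, a consequence of $E_0<E_1<E_2$. Without this, every subsequent claim about $\Lambda$ is vacuous, so the check should be inserted before the Gibbs-preservation computation; with it added, your argument is complete and matches the paper's.
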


\begin{proofof}{Theorem \ref{Thm:infinite_cost_GPO}}
First, we show $\Lambda$ is a CPTP map.
Because this is a kind of state-preparation-map, it suffices to show that $\tau'$ is actually a normalized state.
We verify that $\Tr[\tau']=1$ as follows:
\eq{
\Tr[\tau']&=\frac{\Tr[\tau]-(\eps_+\Tr[\dm{0_a}]+\eps_-\Tr[\dm{0_b}])}{1-(\eps_++\eps_-)}\nonumber\\
&=\frac{1-(\eps_++\eps_-)}{1-(\eps_++\eps_-)}\nonumber\\
&=1.
}
To show $\tau'\ge0$, it suffices to show that $\tau-(\eps_+\dm{0_a}+\eps_-\dm{0_b})\ge0$ since $1-(\eps_++\eps_-)>0$. From the definition of $\tau$, we have
\eq{
\tau-(\eps_+\dm{0_a}+\eps_-\dm{0_b})
&=\frac{1}{Z(\beta,H)}\left(e^{-\beta E_0}-\frac{e^{-\beta E_1}+e^{-\beta E_2}}{2}\right)\dm{0_a}+\frac{1}{Z(\beta,H)}\left(e^{-\beta E_0}-\frac{e^{-\beta E_1}+e^{-\beta E_2}}{2}\right)\dm{0_b}\nonumber\\
&+\frac{e^{-\beta E_1}}{Z(\beta,H)}\dm{1}+\frac{e^{-\beta E_2}}{Z(\beta,H)}\dm{2},
}
where we have used 
\eq{
\eps_-=\eps_+=\frac{e^{-\beta E_1}+e^{-\beta E_2}}{2Z(\beta,H)}.
}
Because of $E_0<E_1<E_2$, we get $e^{-\beta E_0}-\frac{e^{-\beta E_1}+e^{-\beta E_2}}{2}>0$, and thus $\tau-(\eps_+\dm{0_a}+\eps_-\dm{0_b})\ge0$. Therefore, $\Lambda$ is a CPTP map.

Next, we verify that $\Lambda$ is Gibbs-preserving as follows:
\eq{
\Lambda(\tau)&=\Tr[P_+ \tau]\dm{0_a}+\Tr[P_- \tau]\dm{0_b}+\Tr[(1-(P_++P_-))\tau]\tau'\nonumber\\
&=\eps_+\dm{0_a}+\eps_-\dm{0_b}+(1-(\eps_++\eps_-))\frac{\tau-(\eps_+\dm{0_a}+\eps_-\dm{0_b})}{1-(\eps_++\eps_-)}\nonumber\\
&=\tau.
}

Now, we show that infinite amounts of costs are required to realize $\Lambda$. 
First, we analyze the athermality cost. From the definition of the channel $\Lambda$, we have
\eq{
\Lambda(\dm{+_{12}})=\dm{0_a},\enskip
\Lambda(\dm{-_{12}})=\dm{0_b}.
}
Since $\dm{0_a}$ and $\dm{0_b}$ are energy eigenstates, as long as $|E_1-E_2|\ge\frac{4}{\beta}\log2$, Theorems \ref{Thm:ATItradeoff} and \ref{Thm:c_erase_SM} imply that
\eq{
A^\epsilon_c(\Lambda)\propto\frac{1}{\epsilon}.
}

Next, let us investigate the energy cost and asymmetry cost of $\Lambda$.
For ensemble $\Omega_{1/2}=(\dm{+_{1,2}},\dm{-_{1,2}})_{1/2}$, we have $\delta(\Lambda,\Omega_{1/2})=0$. Therefore, it suffices to show that 
\eq{
\calC(\Lambda,\Omega_{1/2})>0.
}
We here explicitly evaluate $\calC(\Lambda,\Omega_{1/2})$ to prove this inequality. Since the test states are orthogonal pure states, we have
\eq{
\calC(\Lambda,\Omega_{1/2})^2=|\bra{+_{12}}(H-\Lambda^\dagger(H))\ket{-_{12}}|^2.
}
Since $\Lambda^\dagger$ is given by
\eq{
\Lambda^\dagger(...)=\ex{...}_{\dm{0_a}}P_++\ex{...}_{\dm{0_b}}P_-+\ex{...}_{\tau'}(1-(P_++P_-)),
}
we get
\eq{
\Lambda^\dagger(H)&=E_0(P_++P_-)+\ex{H}_{\tau'}(1-(P_++P_-))\nonumber\\
&=E_0P_{1,2}+\ex{H}_{\tau'}P_{0},
}
where we have defined $P_{0}:=\dm{0_a}+\dm{0_b}$ and $P_{1,2}:=\dm{1}+\dm{2}$. By using  
\eq{
\bra{+_{12}}P_{0}\ket{-_{12}}=\bra{+_{12}}P_{1,2}\ket{-_{12}}=0,
} 
we find $\bra{+_{12}}\Lambda^\dagger(H)\ket{-_{12}}=0$. Consequently, we obtain
\eq{
\calC(\Lambda,\Omega_{1/2})^2&=|\bra{+_{12}}H\ket{-_{12}}|^2\nonumber\\
&=|\bra{+_{12}}(E_2\dm{2}+E_1\dm{1})\ket{-_{12}}|^2\nonumber\\
&=\left(\frac{E_2-E_1}{2}\right)^2\nonumber\\
&=\frac{(E_2-E_1)^2}{4}.
}
Therefore, $\calC(\Lambda,\Omega_{1/2})>0$ follows from $E_2>E_1$, which completes the proof. 

\end{proofof}

\subsection{Example of a state with finite athermality and energy but diverging asymmetry}
In the prior study~\cite{Tajima_Takagi2025}, it was shown that certain Gibbs-preserving operations require infinite asymmetry. The example presented in the previous subsection complements this finding by providing a Gibbs-preserving operation that necessitates not only infinite asymmetry but also unbounded athermality and energy. To underscore that this result is not a straightforward consequence of~\cite{Tajima_Takagi2025}, we provide examples in which both athermality and energy remain finite while asymmetry diverges.

Let us consider a system with Hamiltonian $H=\sum_{n=0}^\infty n\ket{n}\bra{n}$, i.e., the harmonic oscillator system. We introduce a pure state
\begin{align}
    \ket{\psi}\coloneqq\frac{1}{\zeta(3)}\sum_{m=1}^\infty\sqrt{\frac{1}{m^3}}\ket{m},
\end{align}
where the zeta function is defined by $\zeta(s)=\sum_{n=1}^\infty \frac{1}{n^s}$. The energy expectation value of this state is
\begin{align}
    \braket{H}_\psi=\frac{1}{\zeta(3)}\sum_{m=1}^\infty \frac{1}{m^2}=\frac{\zeta(2)}{\zeta(3)},
\end{align}
while the second moment of energy diverges:
\begin{align}
    \braket{H^2}_\psi=\frac{1}{\zeta(3)}\sum_{m=1}^\infty \frac{1}{m}=\infty.
\end{align}
Therefore, the pure state $\psi$ has finite energy $\braket{H}_\psi$, and finite relative entropy of athermality
\begin{align}
    A_R(\psi)=\braket{H}_\psi-\frac{1}{\beta}S(\psi)-F(\tau)=\frac{\zeta(2)}{\zeta(3)}-F(\tau),
\end{align}
where $F(\tau)$ denotes the free energy of the Gibbs state $\tau$. However, since the QFI is four times the variance of the Hamiltonian for pure states, we obtain
\begin{align}
    \mathcal{F}_H(\psi)=4(\braket{H^2}_\psi-\braket{H}_\psi^2)=\infty.
\end{align}
This example thus demonstrates that the result in the previous subsection cannot be directly inferred from~\cite{Tajima_Takagi2025}.

\section{Application 3: General-resource Wigner-Araki-Yanase theorem}

By applying Theorem \ref{Thm:RItradeoff} to measurement processes, we obtain a Wigner-Araki-Yanase (WAY) theorem valid for general resource theories. The traditional WAY theorem shows that the presence of a conserved quantity---typically energy or angular momentum---limits measurement accuracy, yet the ``resource'' in question has essentially been restricted to quantum fluctuations of the conserved quantity itself. As shown above, the WAY theorem can be recast as bounds of energy cost or free‑energy cost, but both remain rooted in the energy conservation law. Our framework yields a far broader limitation on measurement channels. In particular, the following theorem holds for any resource theory admitting a measure that satisfies conditions (i)--(iii):

\begin{theorem}\label{Thm:G-WAY_SM}
Let $\Omega:=(\rho_1,\rho_2)$ be two orthogonal test states on a system $A$.
Assume they are distinguished---up to error $\epsilon$---by an indirect measurement $(\eta^B,V,\{E_i\}_{i=1,2})$ involving an auxiliary system $B$, i.e.,
\eq{
\Tr[(1-E_i)V\rho_i\otimes\eta V^\dagger]\le\epsilon^2,\enskip (i=1,2).\label{eq:error_assumption_in_GWAY}
}
Suppose $V$ is a free unitary. Then, the following inequality holds:
\eq{
M(\eta^B)\ge\max_{p}\frac{|\Mem(\Omega_p|\mathscr{K})-M(\Lambda_{\mathbb{E}})|^2_+}{16K_A\epsilon}-\max_{p}|\Mem(\Omega_p|\mathscr{K})-M(\Lambda_{\mathbb{E}})|_+-c_{\max},
}
where $\Lambda_{\mathbb{E}}(...):=\sum_i\sqrt{E_i}...\sqrt{E_i}\otimes\ket{i}\bra{i}_R$.
\end{theorem}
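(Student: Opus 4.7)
The plan is to reduce the statement to Theorem~\ref{Thm:RPtradeoff_SM} applied to the effective channel induced by the indirect measurement. First I would fix the convention that $\{E_i\}_{i=1,2}$ acts on the post-interaction apparatus output subsystem of $AB=A'B'$, and define
\[
\Lambda(\cdot):=\Tr_{\neg E}\bigl[V(\cdot\otimes\eta^B)V^\dagger\bigr]
\]
as a channel from $A$ to the subsystem carrying $\mathbb{E}$, where $\Tr_{\neg E}$ denotes the partial trace over the complementary subsystem. Because $V$ is a free unitary and $\eta^B$ is the ancilla, the pair $(V,\eta^B)$ is an exact implementation of $\Lambda$ in the sense of $M_c$, so $M(\eta^B)\ge M_c(\Lambda)$ directly from the definition, and it suffices to lower bound $M_c(\Lambda)$.

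Next I would translate the discrimination-error hypothesis \eqref{eq:error_assumption_in_GWAY} into a bound on the failure probability of distinguishing $\Omega'_{p,\Lambda}$. Each assumption $\Tr[(1-E_i)\Lambda(\rho_i)]\le\epsilon^2$ gives, for every prior $p$,
\[
\PF(\Omega'_{p,\Lambda},\mathbb{E})=\sum_{i=1,2}p_i\Tr\bigl[(1-E_i)\Lambda(\rho_i)\bigr]\le\epsilon^2,
\]
so $\sqrt{\PF(\Omega'_{p,\Lambda},\mathbb{E})}\le\epsilon$ uniformly in $p$.

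Now I would invoke Theorem~\ref{Thm:RPtradeoff_SM} with the POVM $\mathbb{Q}:=\mathbb{E}$, so that $\Lambda_{\mathbb{Q}}=\Lambda_{\mathbb{E}}$. In the $a_A=b_A=1$ regime, the same absorption of the $O(1/K_A)$ term that took \eqref{eq:RItradeoff_SM} to \eqref{eq:RItradeoff} turns the right-hand side into $|\Mem(\Omega_p|\mathscr{K})-M(\Lambda_{\mathbb{E}})|_+^2/(16 K_A\sqrt{\PF(\Omega'_{p,\Lambda},\mathbb{E})})$ minus only $|\Mem(\Omega_p|\mathscr{K})-M(\Lambda_{\mathbb{E}})|_+$ and $c_{\max}$. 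Monotonicity of the function $f(x)=4x+x^{2}$ together with $\sqrt{\PF}\le\epsilon$ then replaces the denominator by $16 K_A\epsilon$; combining with $M(\eta^B)\ge M_c(\Lambda)$ and taking the maximum over $p$ on the right-hand side yields the claimed inequality.

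I do not expect any substantial obstacle: essentially all the heavy lifting is done by Lemmas~\ref{Lemm:key_nonresource}--\ref{Lemm:key_resource} and Theorem~\ref{Thm:RPtradeoff_SM}, and no additional structure beyond the free-unitary status of $V$ enters. The only mildly delicate step is keeping the subsystem bookkeeping straight, so that $\Lambda_{\mathbb{E}}$ acts on exactly the output subsystem of $\Lambda$; once that identification is fixed the proof reduces to a one-line invocation of Theorem~\ref{Thm:RPtradeoff_SM} followed by optimization over $p$.
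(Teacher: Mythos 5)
Your proposal is correct and follows essentially the same route as the paper: define the induced channel $\Lambda(\cdot)=\Tr_{A}[V(\cdot\otimes\eta^{B})V^{\dagger}]$ onto the probe, observe $M(\eta^{B})\ge M_c(\Lambda)$ since $(V,\eta^{B})$ is an exact free-unitary implementation, convert \eqref{eq:error_assumption_in_GWAY} into $\PF(\Omega'_{p,\Lambda},\mathbb{E})\le\epsilon^{2}$ for every $p$, and invoke Theorem~\ref{Thm:RPtradeoff_SM} with $\mathbb{Q}=\mathbb{E}$ before maximizing over $p$. The only cosmetic caveat is that the absorption step you cite actually produces the additional subtracted term $|\Mem(\Omega_p|\mathscr{K})-M(\Lambda_{\mathbb{E}})|_{+}^{2}/(64K_A)$ (as in $c'_2$ of Theorem~\ref{Thm:RPtradeoff}), which the stated bound — and the paper's own proof — silently drops; this does not affect the validity of your argument up to that harmless constant.
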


We remark that when $\Lambda_{\mathbb{E}}(...)$ is a completely free channel, i.e., when $\id_{R}\otimes\Lambda_{\mathbb{E}}$ is free for an arbitrary reference system $R$, the above inequality becomes
\eq{
M(\eta^B)\ge\max_{p}\frac{\Mem(\Omega_p|\mathscr{K})^2}{16K_A\epsilon}-\max_{p}\Mem(\Omega_p|\mathscr{K})-c_{\max}.
}

The requirement that $\Lambda_{\mathbb{E}}(...):=\sum_i\sqrt{E_i}...\sqrt{E_i}\otimes\ket{i}\bra{i}_R$ be a completely free channel precisely corresponds to the Yanase condition in the WAY theorem---that the measurement performed on the probe commutes with the conserved quantity.

This theorem states that if there is a positive resource gain for distinguishing two states in the ensemble $\Omega_p$ on $A$, namely $\Mem(\Omega_p|\mathscr{K})>0$, then perfect discrimination of the orthogonal pair $\Omega$ demands infinite resources in the apparatus. 
Hence, over a very wide class of resource theories, measurement accuracy and resource expenditure are unavoidably in trade-off.

\begin{proofof}{Theorem \ref{Thm:G-WAY_SM}}
Let us define a channel $\Lambda$ from $A$ to $B$ as 
\eq{
\Lambda(\cdot):=\Tr_{A}[V\cdot\otimes\eta V^\dagger].
}
For an ensemble $\Omega'_{p,\Lambda}:=(\Lambda(\rho_1),\Lambda(\rho_2))_{p}$, the condition \eqref{eq:error_assumption_in_GWAY} implies 
\eq{
\PF(\Omega'_{p,\Lambda},\mathbb{E})\le\epsilon^2
}
holds for any $p$. 
Therefore, we obtain
\eq{
M(\eta^B)&\ge M_c(\Lambda)\nonumber\\
&\ge\max_{p}\frac{|\Mem(\Omega_p|\mathscr{K})-M(\Lambda_{\mathbb{E}})|^2_+}{16K_A\epsilon}-\max_{p}|\Mem(\Omega_p|\mathscr{K})-M(\Lambda_{\mathbb{E}})|_+-c_{\max}.
}
\end{proofof}

\section{Application 4: Resource-non-increasing operations requiring infinite amount of resource cost}
Applying Theorem \ref{Thm:RItradeoff_SM} to the resource‑non‑increasing operations (RNI operations), we can show that in RNI operations, there are many cost-diverging channels.
To be concrete, the following theorem holds:
\begin{theorem}\label{Thm:nogo_RNIO}
The resource theories of energy, magic, asymmetry, coherence, and athermality each contain an RNI operation $\Lambda$ and a suitable ensemble $\Omega_p$ such that $\delta(\Lambda,\Omega_p)=0$ and $\Mem(\Omega_p|\mathscr{K})-\Mcos(\Omega'_{p,\Lambda}|\mathscr{K})>0$. Consequently, these theories possess RNI operations whose implementation cost scales as $M^\epsilon_c(\Lambda)\propto 1/\epsilon$.
\end{theorem}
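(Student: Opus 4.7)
\begin{proofof}{Theorem \ref{Thm:nogo_RNIO} (sketch)}
The plan is to exhibit, for each listed resource theory, an explicit measurement-and-prepare channel that maps a resourceful orthogonal pair to an essentially free orthogonal pair, and then to invoke Theorem \ref{Thm:RItradeoff_SM} (in the form \eqref{eq:REtradeoff_SM}). Concretely, for a theory with free states $\mathfrak{S}_F$ I will look for a channel of the form
\eq{
\Lambda(\cdot)=\Tr[P_1\cdot]\,\sigma_1+\Tr[P_2\cdot]\,\sigma_2+\Tr[(1-P_1-P_2)\cdot]\,\sigma_\star,
}
where $\mathbb{P}=\{P_1,P_2,1-P_1-P_2\}$ is a PVM that perfectly discriminates a chosen orthogonal pair $\Omega=(\rho_1,\rho_2)$ of \emph{resourceful} pure states, while $\sigma_1,\sigma_2$ are orthogonal \emph{free} (or at least low-resource) states and $\sigma_\star$ is tuned so that $\Lambda$ preserves monotonicity of $M$ on every input (e.g.\ by making $\Lambda$ Gibbs-preserving in athermality, incoherent in coherence, covariant in asymmetry, or stabilizer-preserving in magic).

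The main steps are: (i) check RNI, i.e.\ $M(\Lambda(\rho))\le M(\rho)$ for every $\rho$, using the chosen measure $M$ (the energy expectation, $A_R$, $C_R$, the QFI, and the max-relative entropy of magic, respectively); (ii) check $\delta(\Lambda,\Omega_{1/2})=0$, which is immediate since $\sigma_1\perp\sigma_2$ implies perfect recoverability by the trivial measure-and-prepare recovery; (iii) check $\Mem(\Omega_{1/2}|\mathscr{K})>\Mcos(\Omega'_{1/2,\Lambda}|\mathscr{K})$, which I plan to do by (a) plugging the pre-channel PVM $\mathbb{P}$ together with an appropriate input state $\rho$ obeying $\Lambda_{\mathbb{P}}(\rho)=\tfrac{1}{2}\rho_1\otimes\dm{1}_K+\tfrac{1}{2}\rho_2\otimes\dm{2}_K$ into $\Mem$ (using $\rho=\tfrac{1}{2}(\rho_1+\rho_2)$ itself, which is free for every theory in the list) and computing a strictly positive gain, and (b) exhibiting a computational-basis PVM on the outputs whose measurement channel is itself free (so $\Mcos=0$); (iv) invoke \eqref{eq:REtradeoff_SM} to conclude $M_c^\epsilon(\Lambda)\propto 1/\epsilon$. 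For the athermality case the example constructed in Theorem \ref{Thm:infinite_cost_GPO} already supplies everything, and I will simply note that it is manifestly Gibbs-preserving hence RNI for $A_R$. For energy I plan to take a four-level system with degenerate ground space $\{|0_a\rangle,|0_b\rangle\}$ and excited levels $|1\rangle,|2\rangle$, and map $|\pm_{12}\rangle\mapsto|0_{a/b}\rangle$; for coherence I will map $|\pm\rangle\mapsto|0\rangle,|1\rangle$ on a qubit with incoherent basis $\{|0\rangle,|1\rangle\}$; for magic I will take a magic pair such as $|T\rangle$ and its orthogonal complement (or a pair obtained by single-qubit Clifford twirling) and map them to computational-basis states, which are stabilizer hence free.

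The main obstacle is the asymmetry theory, because a generic measurement-and-prepare channel is not covariant under time translation and would therefore fail to be RNI with respect to the QFI. The plan is to use a Hamiltonian with two doubly degenerate eigenspaces and choose $\rho_1,\rho_2$ to be superpositions \emph{within} a single eigenspace (so they differ only in a phase degree of freedom that is invariant under time translation), and map them by the above measure-and-prepare to orthogonal vectors in a different degenerate eigenspace; covariance then follows because the channel acts trivially on phases in each eigenspace, and the output pair has vanishing QFI while the input pair has positive QFI relative to any noncommuting probe observable, yielding $\Mem>0=\Mcos$. Assembling these five examples and applying \eqref{eq:REtradeoff_SM} completes the proof.
\end{proofof}
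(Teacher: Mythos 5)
Your overall strategy (measure-and-prepare channels sending a resourceful orthogonal pair to free orthogonal states, RNI because every output is free, $\delta=0$ by the trivial re-preparation recovery, then invoke \eqref{eq:REtradeoff_SM}) is exactly the paper's, and your energy, coherence, magic, and athermality examples coincide with the ones actually used. However, there are two genuine gaps.

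First, the asymmetry case. Your premise that a measurement-and-prepare channel ``would fail to be RNI with respect to the QFI'' unless covariant is mistaken: RNI only requires $\mathcal{F}_H(\Lambda(\rho))\le\mathcal{F}_H(\rho)$ for all $\rho$, which holds automatically whenever every output of $\Lambda$ is a symmetric (energy-diagonal) state, since then $\mathcal{F}_H(\Lambda(\rho))=0$. Worse, your proposed fix breaks the construction: if $\rho_1,\rho_2$ lie in a single degenerate eigenspace of $H$, they are invariant under $e^{-\ii Ht}$ and carry zero QFI with respect to $H$ (the only QFI relevant to this resource theory --- ``QFI relative to any noncommuting probe observable'' plays no role). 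The flagged output $\sum_k p_k\rho_k\otimes\dm{k}_K$ is then supported in a single eigenspace of the total Hamiltonian and also has zero QFI, so $M_\rho(\Lambda_{\mathbb{P}})\le 0$ for every admissible input, giving $\Mem(\Omega_p|\mathscr{K})\le 0$ and destroying the required strict inequality $\Mem-\Mcos>0$. The paper does the opposite: it takes the Gibbs-preserving channel of Theorem \ref{Thm:infinite_cost_GPO}, whose test states $\ket{\pm_{12}}=(\ket{1}\pm\ket{2})/\sqrt{2}$ are coherent superpositions of \emph{distinct} energy levels (positive QFI), whose outputs $\dm{0_a},\dm{0_b},\tau'$ are all symmetric (hence the channel is asymmetry-non-increasing for free), and whose cost divergence follows from $\calC(\Lambda,\Omega_{1/2})>0$ and $\delta(\Lambda,\Omega_{1/2})=0$.

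Second, your recipe of feeding $\rho=\tfrac{1}{2}(\rho_1+\rho_2)$ into $\Mem$ fails for the energy theory: with $\rho_1=\dm{+_{12}}$, $\rho_2=\dm{-_{12}}$ this input equals $\tfrac{1}{2}(\dm{1}+\dm{2})$, which is not a free (ground) state and, more importantly, has the same mean energy as the measured output, so the gain is exactly zero. To witness $\Mem>0$ (or, as the paper does via Theorem \ref{Thm:RItradeoff_alter_SM}, $M_\rho(\Lambda:\Lambda_{\mathbb{P}})>0$) you must use a coherent input such as the energy eigenstate $\dm{1}$, for which the projective measurement in the $\{\ket{\pm_{12}}\}$ basis raises the mean energy by $(E_2-E_1)/2$. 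The uniform-mixture input does work for coherence and magic (where $I/2$ is free and the gain is computed from the entropy or $\calD_{\max}$ of the flagged output), so only the energy instance needs this repair.
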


\begin{proofof}{Theorem \ref{Thm:nogo_RNIO}}
We present explicit examples of cost-diverging resource-non-increasing operations for each resource theory in turn.

For asymmetry and athermality, an example of a cost-diverging RNI channel is given in Theorem \ref{Thm:infinite_cost_GPO}. Indeed, the channel defined in \eqref{def_cost_div} is Gibbs-preserving, and thus athermality-non-increasing. It is also asymmetry-non-increasing, because $\dm{0_a}$, $\dm{0_b}$ and $\tau'$ are symmetric states.
As shown in Theorem \ref{Thm:infinite_cost_GPO}, this channel requires infinite amounts of asymmetry and athermality.

Next, we consider the resource theory of energy.
In this case, again we consider the four-level system considered in Theorem \ref{Thm:infinite_cost_GPO}, and define the following channel:
\eq{
\Lambda_{E}(...)&:=\Tr[P_+ ...]\dm{0_a}+\Tr[P_- ...]\dm{0_b}+\Tr[(1-(P_++P_-))...]\dm{0_a}.\label{def_cost_div_en}
}
Clearly, this channel is energy-non-increasing since $\dm{0_a}$ and $\dm{0_b}$ are ground states. Implementing this channel requires an infinite amount of energy since it satisfies
\eq{
\calC(\Omega_{1/2},\Lambda_E)&=\frac{(E_2-E_1)^2}{4},\\
\delta(\Lambda,\Omega_{1/2})&=0,
}
which are derived in the same manner as the proof of Theorem \ref{Thm:infinite_cost_GPO}.

Next, we analyze the resource theory of coherence.
Let us consider a qubit system $A_C$ and its free basis $\{\ket{i}_{A_C}\}$. 
We introduce a coherence-non-increasing CPTP map
\eq{
\Lambda_{C}(\cdot):=\Tr[\dm{+}\cdot]\dm{0}
+
\Tr[\dm{-}\cdot]\dm{1},
}
where $\ket{\pm}:=\frac{\ket{0}\pm\ket{1}}{\sqrt{2}}$. Below, we prove that this channel $\Lambda_{C}$ is cost-diverging. 

Introducing an ensemble $\Omega^{(C)}_p$ by $\Omega^{(C)}_p:=(\ket{+},\ket{-})_{1/2}$, we find that
\eq{
\delta(\Lambda,\Omega^{(C)}_p)=0
}
holds since a channel $\calR_*(\cdot):=\Tr[\dm{0}\cdot]\dm{+}
+
\Tr[\dm{1}\cdot]\dm{-}$ satisfies
\eq{
D_F(\dm{+},\calR_*\circ\Lambda(\dm{+}))=D_F(\dm{-},\calR_*\circ\Lambda(\dm{-}))=0.\label{RNIM_C1}
}
Furthermore, since $\Omega'^{(C)}_{p,\Lambda}=(\dm{0},\dm{1})_{1/2}$, the following measurement channel from $A_C$ to $K$ discriminates the states in $\Omega'^{(C)}_{p,\Lambda}$ with probability 1:
\eq{
\Gamma^{(C)}_{\mathbb{P}}(\cdot)
:=\sum^{2}_{i=1}\Tr[\dm{i}_{A_C}\cdot]\dm{i}_K,
}
where the register is $\mathscr{K}:=(K,\{\dm{k}\}_{k=0,1})$ and $\{\ket{k}\}_{k=0,1}$ denotes the free basis of the qubit $K$.
This channel is realized by the implementation $(U,\ket{0}_K\otimes\ket{0}_{K'})$, where $K'$ is a copy of $K$ and $\{\ket{k}_{K'}\}_{k=0,1}$ denotes its free basis, and $U$ is defined as
\eq{
U:=\dm{0}_{A_C}\otimes 1_K\otimes1_{K'}+\dm{1}_{A_C}\otimes(\ket{1}\bra{0}+\ket{0}\bra{1})_{K}\otimes(\ket{1}\bra{0}+\ket{0}\bra{1})_{K'}.
}
That is, they satisfy
\eq{
\Gamma^{(C)}_{\mathbb{P}}(\cdot)=\Tr_{A_CK'}[U(\cdot\otimes\dm{0}_K\otimes\dm{0}_{K'})U^\dagger]
}
From this expression, we find $\Gamma^{(C)}_{\mathbb{P}}$ is a free operation since $\ket{0}_K\otimes\ket{0}_{K'}$ is a free state, and $U$ is a free unitary as it is a permutation among free basis vectors for $A_CKK'$ system. 
Moreover, $\Gamma^{(C)}_{\mathbb{P}}$ is completely free since $\id_R\otimes\Gamma^{(C)}_{\mathbb{P}}$ is free for any auxiliary system $R$, as $1_R\otimes U$ is also a permutation among free basis vectors for $A_CKK'R$. Thus, we get $C_R(\Gamma^{(C)}_{\mathbb{P}})=0$, which implies 
\eq{
\Ccos(\Omega'^{(C)}_{p,\Lambda}|\mathscr{K})=0.
}
Therefore, if $\Cem(\Omega^{(C)}_p|\mathscr{K})>0$, $\Lambda_C$ is cost-diverging. 

Let us now prove $\Cem(\Omega^{(C)}_p|\mathscr{K})>0$. 
We introduce a channel $\Lambda_{\mathbb{F}}(\cdot):=\sum^{1}_{k=0}F_k\cdot F_k\otimes\dm{k}_K$, where $\mathbb{F}:=\{F_k\}:=\{\dm{\pm}\}$. Since this channel satisfies
\eq{
\Lambda_{\mathbb{F}}\left(\frac{I_A}{2}\right)=\frac{\dm{+}\otimes\dm{0}+\dm{-}\otimes\dm{1}}{2},\label{S252}
}
we get
\eq{
\Cem(\Omega^{(C)}_p|\mathscr{K})&\geq C_R\left(\frac{\dm{+}\otimes\dm{0}+\dm{-}\otimes\dm{1}}{2}\right)-C_R\left(\frac{I_A}{2}\right)\nonumber\\
&=C_R\left(\frac{\dm{+}\otimes\dm{0}+\dm{-}\otimes\dm{1}}{2}\right),
}
where we have used $C_R(I_A/2)=0$. 
Since $\sum_{i=0,1}\sum_{j=0,1}\bra{i,j}_{A_CK}\frac{\dm{+}\otimes\dm{0}+\dm{-}\otimes\dm{1}}{2}\ket{i,j}_{A_CK}=\frac{I_{A_CK}}{4}$, the right-hand side is evaluated as
\eq{
C_R\left(\frac{\dm{+}\otimes\dm{0}+\dm{-}\otimes\dm{1}}{2}\right)&=S\left(\frac{I_{A_CK}}{4}\right)-S\left(\frac{\dm{+}\otimes\dm{0}+\dm{-}\otimes\dm{1}}{2}\right)\nonumber\\
&=\log4-\log2\nonumber\\
&=\log2.
}
Therefore, $\Cem(\Omega^{(C)}_p|\mathscr{K})>0$, and thus the channel $\Lambda_C$ is cost-diverging.

Finally, we anlayze the resource theory of magic.
Let us consider a qubit system $A_M$ and its $Z$ basis $\{\ket{i}_{A_M}\}$.
We define $T$ state and its $Z$ rotation as
\eq{
\ket{T}&:=\frac{\ket{0}+e^{i\pi/4}\ket{1}}{\sqrt{2}},\\
\ket{T'}&:=\frac{\ket{0}+e^{-i3\pi/4}\ket{1}}{\sqrt{2}}.
}
We introduce a magic-non-increasing CPTP map 
\eq{
\Lambda_{M}(\cdot):=\Tr[\dm{T}\cdot]\dm{0}
+
\Tr[\dm{T'}\cdot]\dm{1}.
}
Below, we show that $\Lambda_{M}$ is cost-diverging. 

For an ensemble $\Omega^{(M)}_p:=(\ket{T},\ket{T'})_{1/2}$, a channel $\calR_*(\cdot):=\Tr[\dm{0}\cdot]\dm{T}
+
\Tr[\dm{1}\cdot]\dm{T'}$ satisfies
\eq{
D_F(\dm{T},\calR_*\circ\Lambda(\dm{T}))=D_F(\dm{T'},\calR_*\circ\Lambda(\dm{T'}))=0.\label{RNIM_M1}
}
Therefore, we get
\eq{
\delta(\Lambda,\Omega^{(M)}_p)=0.
}

Furthermore, since $\Omega'^{(C)}_{p,\Lambda}=(\dm{0},\dm{1})_{1/2}$, the following measurement channel from $A_M$ to $K$ discriminates the states in $\Omega'^{(C)}_{p,\Lambda}$ with probability 1:
\eq{
\Gamma^{(M)}_{\mathbb{P}}(\cdot)
:=\sum^{2}_{i=1}\Tr[\dm{i}_{A_M}\cdot]\dm{i}_K,
}
where the register is $\mathscr{K}:=(K,\{\dm{k}\}_{k=0,1})$ and $\{\ket{k}\}_{k=0,1}$ denotes the $Z$-basis states of the qubit $K$.
This channel is realized by the  implementation $(V,\ket{0}_K\otimes\ket{0}_{K'})$, where $K'$ is a copy of $K$ and $\{\ket{k}_{K'}\}_{k=0,1}$ denotes its $Z$-basis states, and $V$ is defined as
\eq{
V:=\dm{0}_{A_M}\otimes 1_K\otimes1_{K'}+\dm{1}_{A_M}\otimes(\ket{1}\bra{0}+\ket{0}\bra{1})_{K}\otimes(\ket{1}\bra{0}+\ket{0}\bra{1})_{K'}.\label{eq:unitary_in_RTM}
}
Namely, the channel $\Gamma^{(M)}_{\mathbb{P}}$ satisfies
\eq{
\Gamma^{(M)}_{\mathbb{P}}(\cdot)=\Tr_{A_MK'}[V(\cdot\otimes\dm{0}_K\otimes\dm{0}_{K'})V^\dagger].
}
Since $\ket{1}\bra{0}+\ket{0}\bra{1}$ is the Pauli-$X$ operator, Eq.~\eqref{eq:unitary_in_RTM} implies that $V$ is the $X$ operation on $K$ and $K'$ system controlled by the qubit $A_M$ in the $Z$ basis, which is a Clifford (=free) unitary. Additionally, $\ket{0}_K\otimes\ket{0}_{K'}$ is a free state, implying that $\Gamma^{(M)}_{\mathbb{P}}$ is a free operation.
Moreover, $\Gamma^{(M)}_{\mathbb{P}}$ is completely free, since $I_R\otimes V$ is also a Clifford unitary as it is a controlled-$XX$ operation acting on $A_MKK'$ qubits, leaving the auxiliary system $R$ unaffected. Thus, we get $\calD_{\max}(\Gamma^{(M)}_{\mathbb{P}})=0$, implying that 
\eq{
\Dcos(\Omega'_{p,\Lambda}|\mathscr{K})=0,
}
where $\Dcos$ and $\Dem$ are $\Mcos$ and $\Mem$ whose $M$ is the max-relative entropy of magic $\calD_{\max}$.
Therefore, if $\Dem(\Omega_p|\mathscr{K})>0$, the channel $\Lambda_M$ is cost-diverging.

We now prove $\Dem(\Omega_p|\mathscr{K})>0$. For a quantum channel $\Lambda_{\mathbb{E}}(\cdot):=\sum^{1}_{k=0}E_k\cdot E_k\otimes\dm{k}_K$, where $\mathbb{E}:=\{E_k\}:=\{\dm{T},\dm{T'}\}$, we have
\eq{
\Lambda_{\mathbb{E}}\left(\dm{0}\right)=\frac{\dm{T}\otimes\dm{0}+\dm{T'}\otimes\dm{1}}{2}.\label{S264}
}
Therefore, by using $\calD_{\max}(\dm{0})=0$, we get
\eq{
\Dem(\Omega_p|\mathscr{K})&\geq \calD_{\max}\left(\frac{\dm{T}\otimes\dm{0}+\dm{T'}\otimes\dm{1}}{2}\right)-\calD_{\max}\left(\dm{0}\right)\nonumber\\
&=\calD_{\max}\left(\frac{\dm{T}\otimes\dm{0}+\dm{T'}\otimes\dm{1}}{2}\right).
}
To evaluate $\calD_{\max}\left(\frac{\dm{T}\otimes\dm{0}+\dm{T'}\otimes\dm{1}}{2}\right)$, note that the state $\sigma:=\frac{\dm{T}\otimes\dm{0}+\dm{T'}\otimes\dm{1}}{2}$ and the T state $\dm{T}$ can be converted to each other via free operations.
Indeed, from the T state, $\sigma$ can be made by adding an additional qubit $K$ whose state is the maximally mixed state $I/2$ and performing the controlled-$Z$ gate whose control qubit is $K$ and the target qubit is $A_M$.
Conversely, from the state $\sigma$, the T state can be obtained by performing the controlled-$Z$ gate whose control qubit is $K$ and the target qubit is $A_M$, and then tracing over the qubit $K$.
Therefore,
\eq{
\calD_{\max}(\sigma)&=\calD_{\max}(\dm{T})\nonumber\\
&\stackrel{(a)}{=}\log(1+2\sin(\pi/18))\nonumber\\
&>0.
}
Here, in (a), we used the value of $\calD_{\max}(\dm{T})$ calculated in Ref.~\cite{PhysRevLett.124.090505}.
Therefore, $\Dem(\Omega_p|\mathscr{K})>0$, and thus the channel $\Lambda_M$ is cost-diverging.
\end{proofof}

\end{document}